\newtheorem{proposition}{Proposition}[section]
\newtheorem{lemma}{Lemma}[section]
\newtheorem{theorem}{Theorem}[section]
\newtheorem{definition}{Definition}[section]
\newcommand{\pw}{\mathrm{pw}}
\newcommand{\X}{\mathbf{X}}
\newcommand{\U}{\mathbf{U}}
\renewcommand{\d}{\mathrm{d}}
\newcommand{\sZ}{\mathcal{Z}}
\newcommand{\TTO}{\mathrm{O}}
\newcommand{\gW}{\mathfrak{W}}
\newcommand{\gB}{\mathfrak{B}}
\newcommand{\gV}{\mathfrak{V}}
\newcommand{\FHLB}{\mathrm{FHLB}}
\newcommand{\FHUB}{\mathrm{FHUB}}
\newcommand{\PROD}{\mathrm{PROD}}
\begin{document}

\def\spacingset#1{\renewcommand{\baselinestretch}
{#1}\small\normalsize} \spacingset{1}

  \title{\bf Power-divergence copulas: A new class of Archimedean copulas, with an insurance application}
  \author{Alan R. Pearse\thanks{
    Corresponding author: alan.pearse@unimelb.edu.au} ~and Howard Bondell\hspace{.2cm}\\
    School of Mathematics and Statistics, University of Melbourne,\\
    Parkville, Victoria, Australia 3010}

    \date{7 October, 2025}
  \maketitle

\begin{abstract}
This paper demonstrates that, under a particular convention, the convex functions that characterise the phi divergences also generate Archimedean copulas in at least two dimensions. As a special case, we develop the family of Archimedean copulas associated with the important family of power divergences, which we call the \textit{power-divergence copulas}. The properties of the family are extensively studied, including the subfamilies that are absolutely continuous or have a singular component, the ordering of the family, limiting cases (i.e., the Fr\'echet-Hoeffding lower bound and Fr\'echet-Hoeffding upper bound), the Kendall's tau and tail-dependence coefficients, and cases that extend to three or more dimensions. In an illustrative application, the power-divergence copulas are used to model a Danish fire insurance dataset. It is shown that the power-divergence copulas provide an adequate fit to the bivariate distribution of two kinds of fire-related losses claimed by businesses, while several benchmarks (a suite of well known Archimedean, extreme-value, and elliptical copulas) do not. 
\end{abstract}

\noindent%
{\it Keywords:}  alpha divergence; bivariate; f divergence; multivariate; phi divergence

\sloppy

\spacingset{1}
\section{Introduction}
\label{sec:intro}

Copulas are flexible mathematical tools for modelling the dependence between two or more random variables. Sklar's theorem \citep{Sklar1959} states that, for any absolutely continuous random vector, the joint distribution of its elements can be uniquely decomposed into a set of marginal cumulative distribution functions (CDFs) that characterise the marginal behaviour, and a copula function that characterises the dependence structure of the random vector. A wide range of copulas exist to capture many different kinds of dependence structures in data. See \citet{Grosser2022} for a recent overview.

Archimedean copulas \citep[e.g.,][Ch. 4]{Nelsen2006} are a broad subclass of copulas characterised by a convex and strictly decreasing function called an Archimedean-copula generator. The applications of Archimedean copulas include actuarial science and insurance \citep{Kularatne2021}, engineering \citep{Orcel2021}, finance \citep{Fenech2015}, and hydrology \citep{Siamaki2024}. Archimedean copulas can also be used in time-series forecasting \citep{Patton2012} and spatial interpolation \citep{Sohrabian2021}. 

Archimedean copulas have found varied and successful applications because they are easy to construct; their properties are well understood; and they can model asymmetric dependence and dependence between extreme events (i.e., tail dependence) that are frequently observed in real-world data. This is in contrast to assuming, for example, that two or more variables in a modelling problem are jointly Gaussian or, similarly, follow a Gaussian copula model. This offers no possibility of capturing the aforementioned dependence structures, potentially leading to negative outcomes  \citep[e.g., systematic underestimation of loan-default risks in the financial sector;][]{Fenech2015}. 

Generators of Archimedean copulas may be derived from other mathematical objects. For example, \citet{Spreeuw2014} showed that Archimedean copulas can be derived from econometric utility functions. When new Archimedean-copula generators are recognised, the resulting copulas should be investigated for properties, such as lower and/or upper tail dependence, that may prove useful for modelling the diverse bivariate and multivariate dependence structures seen in real-world datasets. 

The phi ($\phi$) divergences \citep{Csiszar1963, Morimoto1963, AliSilvey1966} are statistical divergences between non-negative functions (e.g., probability densities) with myriad uses in information theory \citep{Amari2016} and statistics \citep{Pardo2006}. Recent work has established a relationship between copulas and the $\phi$ divergences \citep{Geenens2022}. While this paper also concerns (Archimedean) copulas and $\phi$ divergences, we consider a different link between them. Specifically, we demonstrate that, under a certain convention, the same convex functions that generate $\phi$ divergences are also valid Archimedean-copula generators. This observation creates an opportunity to develop new bivariate and multivariate distributions with practical utility. 

An important and well known family of $\phi$ divergences is the power divergences \citep{Cressie1984, Read1988}. They are also called the alpha ($\alpha$) divergences in information theory \citep{Amari2016}. This family, indexed by a single real-valued parameter, $\lambda \in (-\infty,\infty)$, smoothly connects several well known divergences, including the Kullback-Leibler divergence, the Pearson $\chi^2$ divergence, and the Hellinger distance. 

The power divergences represent a natural starting point for an exploration of the class of $\phi$-divergence copulas. Using the convex generator of the power divergences \citep{Cressie2002} as an Archimedean-copula generator results in the family of \textit{power-divergence} (PD) copulas. In this paper, we characterise the properties of the PD copulas. We discover that, unusually for an Archimedean copula, it is not always possible to write down the PD copulas in closed form. Nevertheless, this is not a barrier to an extensive characterisation of the properties of the PD copulas or for applied modelling. We show that the PD copulas have subfamilies that are absolutely continuous or have a singular part; that they are negatively ordered with respect to their parameter; that they are capable of modelling both negative and positive dependence; that they have the Fr\'echet-Hoeffding lower bound and upper bound as limiting cases; that they exhibit a constant moderate upper tail-dependence coefficient, while the lower tail-dependence coefficient varies with $\lambda$; and that certain values of the parameter $\lambda$ also allow the PD copulas to be valid Archimedean copulas in more than two dimensions. We also present algorithms to calculate the PD copulas and generate random variates from them. In an illustrative application, we show that PD copulas achieve an adequate fit to a dataset of Danish fire insurance claims, when several widely used Archimedean, extreme-value, and elliptical copulas fail to do so. This makes the PD copulas a useful addition to a modeller's toolbox.

The rest of the paper is structured as follows. Section \ref{sec:phi_copulas} establishes the background for our theoretical developments and introduces the phi-divergence copulas. Section \ref{sec:pd_copulas} develops the power-divergence (PD) copulas as a special case. The properties are examined in detail. Section \ref{sec:computing} considers computational aspects, such as bivariate simulation. Section \ref{sec:applied} uses the PD copulas to analyse an insurance dataset. Section \ref{sec:discussion} concludes. 

\section{Background}\label{sec:phi_copulas}

\subsection{Copulas}\label{sec:copulas}

A copula in $d$ dimensions is a $d$-variate joint cumulative distribution function (CDF) for the random vector $\U \equiv (U_1, ..., U_d)^\top$, where $U_j$ is uniformly distributed on $[0, 1]$ for $j = 1, ..., d$. We write $C(u_1, ..., u_d) \equiv \Pr(U_1 \leq u_1, ..., U_d \leq u_d), ~u_1,...,u_d \in [0, 1]$. If the copula is absolutely continuous, the \textit{copula density} is defined as $c(u_1, ..., u_d) \equiv \partial^d C(u_1, ..., u_d)/\partial u_1 \cdots \partial u_d$. If the copula is a member of a parametric family indexed by parameter $\theta$, we may write $C_\theta(u_1, ..., u_d)$ (copula) and $c_\theta(u_1, ..., u_d)$ (copula density) to emphasise this dependence on $\theta$. In this paper, our attention is primarily focused on cases where $d = 2$, but we will discuss general $d$-variate copulas for $d \geq 3$ where appropriate.

Sklar's theorem \citep{Sklar1959} makes copulas useful for modelling the joint distributions of general random vectors $\X \equiv (X_1, ..., X_d)^\top$. The joint CDF of $\X$ is $F(x_1, ..., x_d) \equiv \Pr(X_1 \leq x_1, ..., X_d \leq x_d)$, where $x_1, ..., x_d$ take values in the support of $X_1, ..., X_d$. Now, let $F_j(x_j) \equiv \Pr(X_j \leq x_j)$ and $f_j(x_j)$ denote the CDF and PDF of $X_j$, respectively, for $j = 1, ..., d$. We set $U_j \equiv F_j(X_j)$, since $F_j(X_j)$ follows a uniform distribution on $[0, 1]$ by standard properties of the CDF. Sklar's theorem states that we can obtain the joint CDF of $\X$ as,
\begin{equation}
    F(x_1, ..., x_d) = C(F_1(x_1), ..., F_d(x_d)).\label{eqn:Sklars}
\end{equation}
If $\X$ is an absolutely continuous random vector, this copula-based representation of the joint CDF is unique; otherwise, \eqref{eqn:Sklars} is non-unique. 

At this juncture, it is convenient to define three special bivariate copulas. The first is the \textit{product copula}, $C_{\PROD}(u_1, u_2) \equiv u_1 \times u_2$; this copula represents independence of the marginal variables. The other two are the Fr\'echet-Hoeffding lower bound (FHLB) and upper bound (FHUB), which, respectively, represent perfect negative and perfect positive dependence between the margins. For $u_1, u_2 \in [0,1]$, the FHLB is $C_{\FHLB}(u_1, u_2) \equiv \max\{u_1 + u_2 - 1, 0\}$, and the FHUB is $C_{\FHUB}(u_1, u_2) \equiv \min\{u_1, u_2\},~u_1,u_2\in[0,1]$. All bivariate copulas, $C(u_1, u_2)$, satisfy $C_{\FHLB}(u_1, u_2) \leq C(u_1, u_2) \leq C_{\FHUB}(u_1, u_2)$ for all $u_1, u_2 \in [0, 1]$. It is necessary to specify `bivariate copula' in the foregoing statement because the FHLB cannot be extended to $d \geq 3$ dimensions and remain a valid copula \citep[e.g., see Example 2.1 in ][]{McNeil2009}. 

\subsection{Bivariate Archimedean copulas}\label{sec:archimedean}

Archimedean copulas \citep[e.g., see][Ch. 4]{Nelsen2006} are a class of copulas characterised by a construction that proceeds from the selection of an \textit{Archimedean-copula generator} $\psi$.

\begin{definition}\label{def:generator}
Let $\psi: [0, 1] \to [0, \infty)$ be a convex and strictly decreasing function with $\psi(1) = 0$. Let $\psi^{[-1]}: [0, \infty) \to [0, 1]$ be the pseudoinverse, defined as, 
    $$\psi^{[-1]}(t) \equiv \begin{cases}
        \psi^{-1}(t)& 0 \leq t < \psi(0),\\
        0 & \psi(0) \leq t < \infty,
    \end{cases}
    $$
where $\psi^{-1}(t)$ is the inverse of $\psi(t)$ for $t\in[0, \psi(0))$, and $\psi^{-1}(0) = 1$. If $\psi(0) = \infty$, the generator $\psi$ has a \textit{strict inverse}, and $\psi^{[-1]}(t) = \psi^{-1}(t)$ for all $t \in [0, \infty)$.
\end{definition}

Constructing an Archimedean copula from a function that satisfies Definition \ref{def:generator} is straightforward. 

\begin{definition}\label{def:bivariate_archimedean}
    Assume the function $\psi$ satisfies Definition \ref{def:generator} and has pseudoinverse $\psi^{[-1]}$. Then the two-dimensional Archimedean copula \textit{generated by} $\psi$ is,
    $$
    C(u_1, u_2; \psi) \equiv \psi^{[-1]}(\psi(u_1) + \psi(u_2)).
    $$
    If $\psi$ has a strict inverse, then the associated Archimedean copula, $C(u_1, u_2; \psi) \equiv \psi^{-1}(\psi(u_1) + \psi(u_2))$, is called a \textit{strict Archimedean copula}.
\end{definition}

\noindent If the Archimedean-copula generator is indexed by a parameter, say $\theta$, then this generates a parametric family of Archimedean copulas.

\begin{definition}\label{def:bivariate_archimedean_parametric_family}
    Let $\theta \in \Theta$ be a parameter, where $\Theta$ is a parameter space. Then, if $\psi_\theta$ satisfies the conditions of an Archimedean-copula generator in Definition \ref{def:bivariate_archimedean} for all $\theta \in \Theta$, then $\{C(u_1, u_2; \psi_\theta): \theta \in \Theta\}$ is a parametric family of Archimedean copulas.  
\end{definition}

\subsection{Multivariate Archimedean copulas}\label{sec:multivariate_Archimedean}

If a function $\psi$ satisfies the requirements of Definition \ref{def:generator}, we can always construct the bivariate Archimedean copula, $C(u_1, u_2; \psi) \equiv \psi^{[-1]}(\psi(u_1) + \psi(u_2))$. The natural extension to $d \geq 3$ dimensions is to write $C(u_1, ..., u_d; \psi) \equiv \psi^{[-1]}(\psi(u_1) + \cdots + \psi(u_d))$. However, even if $C(u_1, u_2;\psi)$ is a valid copula in two dimensions, $\psi$ might not produce a valid copula for $d \geq 3$. A generator $\psi$ produces a valid $d$-dimensional Archimedean copula for all $d \geq 3$ if and only if the pseudoinverse $\psi^{[-1]}$ is \textit{completely monotone} on $t\in[0,\infty)$ \citep{Kimberling1974}; see below. 

Let $I$ be an interval of $\mathbb{R}$, and let $I^\TTO$ denote the interior of $I$. A function $h(x), ~x \in I$, is called \textit{completely monotone} on $I$ if, for all $x \in I^\TTO$, derivatives of all orders exist, and the $k$-th derivative with respect to $x$, written as $h^{(k)}(x)$, satisfies $(-1)^k \times h^{(k)}(x) \geq 0$ for $k = 0, 1, 2, ...$. The function $h(x)$ is called \textit{absolutely monotone} on $x\in I$ if $h^{(k)}(x) \geq 0$ for all $k = 0, 1, 2, ...$. The function $h(x)$ is completely monotone on $x \in I$ if and only if $h(-x)$ is absolutely monotone on $x \in I$ \citep[Def 2c]{Widder1946}. If the function $h_{a}: I_a \to (-\infty, \infty)$ is absolutely monotone on interval $I_a$ and the function $h_{b}: I_b \to I_a$ is completely monotone on the interval $I_b$, then the composite function $h_a(h_b(x))$ is completely monotone on $x \in I_b$ \citep[Thm 2b]{Widder1946}. 

Importantly for what follows, the requirement that $\psi^{[-1]}$ be completely monotone also implies that $\psi$ must have a strict inverse, and $\psi^{[-1]}(t) = \psi^{-1}(t)$ for all $t \in [0, \infty)$ \citep[pp. 151-152]{Nelsen2006}.

As a final note, complete monotonicity of $\psi^{-1}(t)$ over $t\in(0,\infty)$ is sufficient and necessary to define an Archimedean copula in all dimensions $d\geq 3$. However,  \citet{McNeil2009} show that, for a given $d \geq 3$, complete monotonicity is sufficient but not necessary. For a given $d\geq 3$, the sufficient and necessary condition to define an Archimedean copula with generator $\psi$ in $d$ dimensions is that $\psi^{[-1]}$ (the pseudoinverse, which need not be strict) is \textit{d-monotone} on $t \in [0, \infty)$. This means that $\psi^{[-1]}(t)$ has $(d-2)$ derivatives over $t\in (0, \infty)$ that satisfy $(-1)^{k}(\psi^{[-1]})^{(k)}(t) \geq 0$ for $k = 0, 1, ..., d-2$, and $(-1)^{d-2}(\psi^{[-1]})^{(d-2)}(t)$ is convex and non-increasing \citep{McNeil2009}. 

\subsection{The family of $\phi$ divergences}\label{sec:phi_divergences}

A $\phi$ divergence \citep{Csiszar1963, Morimoto1963, AliSilvey1966} is a directed, statistical divergence of one non-negative function (e.g., a potentially unnormalised probability density) from another. In the sequel, we present the construction of $\phi$ divergences needed to establish a connection with Archimedean-copula generators. 

Let $f_1$ and $f_2$ be non-negative functions (e.g., unnormalised probability densities). Below, we define the $\phi$ divergence \textit{of} $f_1$ \textit{from} $f_2$, which is written as $D_\phi(f_1 \lVert f_2)$. (In general, $D_\phi(f_1 \lVert f_2) \neq D_\phi(f_2 \lVert f_1)$, so the order of the arguments is important.) 

\begin{definition}[e.g., \citealt{Cressie2002}]\label{def:phi_divergence}
    Let $\phi: [0, \infty) \to [0, \infty)$ be a convex function that satisfies 
    \begin{enumerate}[label=(\alph*)]
        \item $\phi(1) = 0$, 
        \item $\phi'(1) = 0$,
        \item $\phi''(1) > 0$.
    \end{enumerate}  
    Then, the $\phi$ divergence \textit{of} $f_1$ \textit{from} $f_2$ is defined as,
    $$
    D_\phi(f_1 \lVert f_2) = \int f_2(s)\times \phi\left(\frac{f_1(s)}{f_2(s)}\right)~\d s,
    $$
    where, by convention, $0\times\phi(0/0) = 0$, and $0\times\phi(v/0) = v \times \{\lim_{x \to \infty} \phi(x)/x\}$.
\end{definition}

At this juncture, we note that there are other formulations of $\phi$ divergences. An alternative definition and its relationship to Definition \ref{def:phi_divergence}, are explained in Supplement S1, where we also show that it does not allow a linkage with Archimedean copulas in general. 

\subsection{Archimedean copulas via $\phi$ divergence generators}\label{sec:Archimedean_phi_copulas}

For our first result, we show that any function $\phi$ satisfying the conditions in Definition \ref{def:phi_divergence} also satisfies Definition \ref{def:generator} and hence defines a valid Archimedean-copula generator. 

\begin{proposition}\label{prop:phi_as_copula_generator}
All functions $\phi$ that satisfy (a)-(c) in Definition \ref{def:phi_divergence} are also generators of bivariate Archimedean copulas. That is, they are convex functions that satisfy $\phi(1) = 0$, and they are strictly decreasing over the interval $[0, 1]$.     
\end{proposition}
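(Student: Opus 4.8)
The plan is to verify, one condition at a time, that the restriction $\psi \equiv \phi|_{[0,1]}$ meets every requirement that Definition~\ref{def:generator} imposes on an Archimedean-copula generator. Three of these are immediate from the hypotheses of Definition~\ref{def:phi_divergence}: since $\phi$ maps $[0,\infty)$ into $[0,\infty)$, the restriction maps $[0,1]$ into $[0,\infty)$; the restriction of a convex function to a subinterval is convex; and condition~(a) is precisely $\phi(1)=0$. Hence the whole content of the proposition is the claim that $\phi$ is \emph{strictly} decreasing on $[0,1]$, which I would establish in two stages.

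\emph{Stage 1: weak monotonicity.} A convex function has monotone secant slopes, so for $0 \le x < y \le 1$ the slope $\bigl(\phi(y)-\phi(x)\bigr)/(y-x)$ is bounded above by the left derivative $\phi'_-(1)$, which by condition~(b) equals $\phi'(1)=0$. Therefore $\phi(y)\le\phi(x)$, i.e.\ $\phi$ is non-increasing on $[0,1]$.

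\emph{Stage 2: strictness.} Suppose, for contradiction, that $\phi$ is not strictly decreasing on $[0,1]$; then $\phi(a)=\phi(b)$ for some $a<b$ in $[0,1]$, and since $\phi$ is convex and non-increasing it is constant on $[a,b]$. Constancy on $[a,b]$ makes the right derivative of $\phi$ at $b$ non-negative, so monotonicity of the right derivative of a convex function gives that $\phi$ is non-decreasing on $[b,\infty)$; combined with Stage~1, this forces $\phi$ to be constant on $[b,1]$, hence constant on all of $[a,1]$, with $a<1$. But then $\phi'$ vanishes identically on $(a,1)$, so every left-hand difference quotient of $\phi'$ at $1$ is zero, yielding $\phi''(1)=0$ and contradicting condition~(c). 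Thus $\phi$ is strictly decreasing on $[0,1]$, and $\psi=\phi|_{[0,1]}$ satisfies Definition~\ref{def:generator}.

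\emph{Main obstacle.} The delicate step is Stage~2: convexity together with $\phi'(1)=0$ by themselves deliver only weak monotonicity, and one genuinely needs the second-order information in condition~(c) — strict convexity of $\phi$ at the single point~$1$ — to exclude a flat stretch of $\phi$ immediately to the left of~$1$. Some care with differentiability is also warranted, since a generic convex $\phi$ need not be differentiable everywhere; but condition~(c) presupposes that $\phi''(1)$ exists (so $\phi$ is $C^1$ near~$1$), and away from~$1$ it suffices to argue with one-sided derivatives, which exist throughout the interior of $[0,\infty)$ and are non-decreasing.
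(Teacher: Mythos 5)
Your proof is correct and follows essentially the same route as the paper's: convexity together with $\phi'(1)=0$ gives that $\phi$ is non-increasing on $[0,1]$, and $\phi''(1)>0$ rules out a flat stretch, forcing strict decrease. The paper states this in one line, whereas you supply the secant-slope and contradiction details explicitly; the substance is the same.
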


\noindent The proof and all other proofs are deferred to Supplement S7. Proposition \ref{prop:phi_as_copula_generator} then leads to the construction of bivariate $\phi$-divergence copulas.

\begin{definition}\label{def:phi_divergence_copula}
    Let $\phi$ be a function that satisfies (a)-(c) in Definition \ref{def:phi_divergence}. Let $\phi^{[-1]}$ be its pseudoinverse. Then, the associated $\phi$-divergence copula is defined as,
    $$
    C(u_1, u_2; \phi) = \phi^{[-1]}(\phi(u_1) + \phi(u_2)).
    $$
\end{definition}

The restriction to bivariate Archimedean copulas in Proposition \ref{prop:phi_as_copula_generator} is necessary because, as discussed in Section \ref{sec:multivariate_Archimedean}, stronger conditions are needed to generate Archimedean copulas in $d \geq 3$ dimensions. We can find examples of $\phi$ functions that generate Archimedean copulas in $d=2$ dimensions but not in $d \geq 3$ dimensions. A necessary (but not sufficient) condition for both complete monotonicity and $d$-monotonicity is that the pseudoinverse $\psi_\theta^{[-1]}$ be continuously differentiable over the interval $(0,\infty)$ \citep{Nelsen2006}. For a bivariate Archimedean copula, $C(u_1, u_2; \psi)$, the generator $\psi$ need not be differentiable everywhere on $(0, 1)$; it only needs to be convex on $[0, 1]$ \citep{Schweizer1983}. However, when $\psi$ is not continuously differentiable on $(0, 1)$, the pseudoinverse will also fail to be continuously differentiable over $(0,\infty)$. This implies such Archimedean-copula generators will not produce valid copulas in any given dimension $d\geq 3$ since the (sufficient and necessary) condition of $d$-monotonicity \citep{McNeil2009} requires the existence of $(d-2)$ derivatives of $\psi^{[-1]}(t)$ over all $t \in (0,\infty)$. 

Careful examination of the $\phi$ function in Definition \ref{def:phi_divergence} reveals that $\phi$ need only be twice differentiable at $x = 1$ but, according to our definition \citep[e.g.,][]{Cressie2002}, differentiability is not required for all $x \in [0, 1]$, although differentiability is also assumed in some conventions \citep[e.g., the `standard $f$ divergences' of ][pp. 54-56]{Amari2016}. One can easily construct a function that satisfies the conditions in Definition \ref{def:phi_divergence} (i.e., convex, twice differentiable at $x = 1$) but lacks a derivative at some point in $(0, 1)$. Supplement S2 gives an example. 

As a separate possibility and issue, any functions $\phi$ where $\phi(0)$ is finite, will not have completely monotone inverses, and the associated $\phi$-divergence copulas will not exist for all $d\geq3$ in general. However, this does not mean that such a $\phi$ function cannot generate a valid Archimedean copula in $d$ dimensions for a given $d \geq 3$.

\section{Power-divergence copulas}\label{sec:pd_copulas}

\subsection{Power divergences}\label{sec:CR_pd}

Power divergences \citep{Cressie1984, Read1988}, or $\alpha$ divergences \citep[e.g.,][]{Amari2016}, are an important member of the class of $\phi$ divergences. Let $\lambda \in (-\infty, \infty)$ be a power parameter. Define the $\phi$ function, 
\begin{equation}
    \phi_\lambda(x) \equiv \begin{cases}
        \frac{1}{\lambda(\lambda + 1)}\left(x^{\lambda + 1} - x + \lambda(1-x)\right)& \lambda \neq -1, 0\\
        1 - x + x\log(x) & \lambda = 0\\
        x - 1 - \log(x) & \lambda = -1,
    \end{cases}\label{eqn:CR_phi_function}
\end{equation}
which satisfies (a)-(c) in Definition \ref{def:phi_divergence}. Then, for non-negative functions $f_1$ and $f_2$, the power divergence of $f_1$ from $f_2$ is given by $D_{\phi_\lambda}(f_1 || f_2) \equiv \int f_2(s)\times\phi_\lambda(f_1(s)/f_2(s))~\d s$.

Some elementary properties of \eqref{eqn:CR_phi_function} are needed for later. First, \eqref{eqn:CR_phi_function} is strictly convex and continuously differentiable over $(0, \infty)$. Second, the first derivative with respect to $x$ is, 
\begin{equation}
    \phi_\lambda'(x) = \begin{cases}
        \lambda^{-1}(x^\lambda - 1) & \lambda \neq -1, 0,\\
    \log(x) & \lambda = 0,\\
    1- x^{-1} & \lambda = -1,
    \end{cases}\label{eqn:CR_derivative}
\end{equation}
for all $x \in (0, \infty)$. Finally, at $x = 0$, we have,
\begin{equation}
    \phi_\lambda(0) = \begin{cases}
        1/(\lambda + 1) & \lambda > -1\\
        \infty & \lambda \leq -1.
    \end{cases}\label{eqn:phi_lambda_zero}
\end{equation}

\subsection{Power-divergence (PD) copulas}\label{sec:defn_pd_copulas}

The family of power-divergence (PD) copulas can be defined by using \eqref{eqn:CR_phi_function} with Theorem \ref{prop:phi_as_copula_generator} and Definition \ref{def:bivariate_archimedean}; that is, $C_\lambda(u_1, u_2) \equiv C(u_1, u_2; \phi_\lambda),~u_1, u_2 \in [0, 1]$ for all $\lambda \in (-\infty, \infty)$. The set $\{C_\lambda(u_1, u_2): \lambda \in (-\infty, \infty)\}$ forms the family of PD copulas. The bivariate PD copulas for $\lambda \in (-\infty, \infty)$ are written as,
\begin{align}
    C_\lambda(u_1, u_2) &\equiv \phi_\lambda^{[-1]}(\phi_\lambda(u_1) + \phi_\lambda(u_2))\nonumber\\
    &= \begin{cases}
        \phi_\lambda^{-1}(\phi_\lambda(u_1) + \phi_\lambda(u_2)) & 0 \leq \phi_\lambda(u_1) + \phi_\lambda(u_2) < \phi_\lambda(0),\\
        0 & \phi_\lambda(0) \leq \phi_\lambda(u_1) + \phi_\lambda(u_2) < \infty.
    \end{cases}\label{eqn:PD_copula_defn}
\end{align}    
However, some technical challenges must be overcome to make \eqref{eqn:PD_copula_defn} practically useful. The main problem is that $\phi_\lambda^{[-1]}$ cannot be written in closed form for general $\lambda \in (-\infty, \infty)$. Letting $t \in [0, \phi_\lambda(0))$ for $\lambda \in (-\infty,\infty)$ and writing $t = \phi_\lambda(x)$, the inverse, $x = \phi_\lambda^{-1}(t)$, is found by the following analysis, which has three cases (i.e., $\lambda \neq -1, 0$, $\lambda = 0$, and $\lambda = -1$).

We begin with the two special cases, where $\lambda = -1$ and $\lambda = 0$. These have closed-form solutions, though they involve the Lambert W function \citep[e.g., see][]{Corless1996}. When $\lambda = -1$, the pseudoinverse $\phi_{-1}^{-1}(t), ~t\in[0, \infty),$ is defined by the $x \in [0, 1]$ that solves $t = x-1-\log(x)$. The (strict) inverse of $t = \phi_{-1}(x)$ is given by,
\begin{equation}
    \phi_{-1}^{-1}(t) = -\gW_0(-\exp\{-(t+1)\}),~t\in[0,\infty),\label{eqn:pseudoinverse_neg1}
\end{equation}
where $\gW_0$ denotes the principal branch of the Lambert W function. Then, the PD copula for $\lambda = -1$ can be written as,
\begin{equation}
    C_{-1}(u_1, u_2) = -\gW_0(-u_1u_2\exp\{1-(u_1 + u_2)\}),~u_1,u_2\in [0, 1].\label{eqn:copula_neg1}
\end{equation}
When $\lambda = 0$, $\phi_0(x)$ does not have a strict inverse since $\phi_0(0)=1$ is finite. The pseudoinverse $\phi_0^{[-1]}(t),~t\in [0, \infty)$, is equal to zero for $t \geq 1$, but it is defined by the value of $x \in [0, 1]$ that solves $t=1-x+x\log(x)$ for $t \in [0, 1)$. Hence, the pseudoinverse is,
\begin{equation}
\phi_{0}^{[-1]}(t) = \begin{cases}
        \exp\!\left\{\gW_{-1}((t-1)/\exp\{1\})+1\right\} & 0 \leq t < 1, \\
        0 & 1 \leq t < \infty.
    \end{cases}\label{eqn:pseudoinverse_0}
\end{equation}
where $\gW_{-1}$ is the lower branch of the Lambert W function. The PD copula for $\lambda = 0$ is,
\begin{equation}
C_0(u_1, u_2) =\begin{cases}
        \exp\!\left\{\gW_{-1}((T(u_1,u_2)-1)/\exp\{1\})+1\right\} & 0 \leq T(u_1,u_2) < 1, \\
        0 & 1 \leq T(u_1,u_2) < \infty,
    \end{cases}\label{eqn:copula_zero}
\end{equation}
where $T(u_1,u_2) = 2-u_1(1-\log(u_1)) - u_2(1-\log(u_2))$ for $u_1, u_2 \in [0,1]$.

For general $\lambda \neq -1, 0$, the inverse of $\phi_\lambda$ is strict when $\lambda < -1$; otherwise, it is not strict. For each $t \in [0, \phi_\lambda(0))$, we find $\phi^{-1}_\lambda(t)$ as the value of $x \in [0, 1]$ that solves,
\begin{equation}
    0 = x^{\lambda + 1} - (\lambda + 1)x + \lambda - \lambda(\lambda + 1)t.\label{eqn:zeros_of_pseudopolynomial}
\end{equation}
We can guarantee that \eqref{eqn:zeros_of_pseudopolynomial} always has a unique solution in the interval $[0, 1]$ (see the result below), and that $\phi_\lambda^{[-1]}(t)$ is continuous everywhere, in particular, $\lim_{t \to \phi_\lambda(0)} \phi^{[-1]}_\lambda(t) = 0$.

\begin{proposition}\label{prop:unique_solution}
    There exists a unique solution of \eqref{eqn:zeros_of_pseudopolynomial} in the interval $x \in [0, 1]$ for $\lambda \neq -1, 0$ and $t \in [0, \phi_\lambda(0))$. 
\end{proposition}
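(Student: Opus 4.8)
The plan is to observe that, since $\lambda \neq -1,0$ the constant $\lambda(\lambda+1)$ is nonzero, so dividing \eqref{eqn:zeros_of_pseudopolynomial} by $\lambda(\lambda+1)$ shows it is equivalent to $\phi_\lambda(x) = t$ on $x \in (0,1]$; the claim then reduces to showing that $\phi_\lambda$ restricted to $(0,1]$ is a strictly monotone bijection onto $[0,\phi_\lambda(0))$. First I would dispose of the endpoint $x=0$. Substituting $x=0$ into the left-hand side of \eqref{eqn:zeros_of_pseudopolynomial} gives $\lambda - \lambda(\lambda+1)t = \lambda(\lambda+1)(\phi_\lambda(0)-t)$ when $\lambda > -1$, using $\phi_\lambda(0) = 1/(\lambda+1)$ from \eqref{eqn:phi_lambda_zero}; this is nonzero because $t < \phi_\lambda(0)$ and $\lambda,\lambda+1 \neq 0$. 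When $\lambda < -1$ the term $x^{\lambda+1}$ diverges as $x \to 0^+$ (the exponent is negative), so $x=0$ is again not a solution. Hence every solution in $[0,1]$ lies in $(0,1]$, on which \eqref{eqn:zeros_of_pseudopolynomial} and $\phi_\lambda(x)=t$ coincide.

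Next I would establish strict monotonicity and existence. From \eqref{eqn:CR_derivative} one gets $\phi_\lambda''(x) = x^{\lambda-1} > 0$ for $x>0$, so $\phi_\lambda$ is strictly convex and continuously differentiable on $(0,\infty)$ and $\phi_\lambda'$ is strictly increasing; together with $\phi_\lambda'(1)=0$ this forces $\phi_\lambda' < 0$ on $(0,1)$, so $\phi_\lambda$ is strictly decreasing on $(0,1]$, which gives uniqueness. For existence I would apply the intermediate value theorem on $(0,1]$: $\phi_\lambda(1) = 0$ and $\lim_{x\to 0^+}\phi_\lambda(x) = \phi_\lambda(0) \in (0,\infty]$ by \eqref{eqn:phi_lambda_zero}, so by continuity $\phi_\lambda$ takes every value of $[0,\phi_\lambda(0))$ exactly once on $(0,1]$; in particular it takes the prescribed value $t$. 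This produces the unique $x \in (0,1]\subset[0,1]$ solving \eqref{eqn:zeros_of_pseudopolynomial}.

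An equivalent route, if one prefers to avoid invoking the properties of $\phi_\lambda$ itself, is to work directly with $g_\lambda(x) \equiv x^{\lambda+1} - (\lambda+1)x + \lambda - \lambda(\lambda+1)t$: its derivative $g_\lambda'(x) = (\lambda+1)(x^\lambda - 1)$ has constant sign on $(0,1)$ in each of the regimes $\lambda > 0$, $-1 < \lambda < 0$, and $\lambda < -1$, so $g_\lambda$ is strictly monotone there; one then checks that $g_\lambda(0^+)$ and $g_\lambda(1) = -\lambda(\lambda+1)t$ have opposite signs when $0 < t < \phi_\lambda(0)$, and that $g_\lambda(1)=0$ with $g_\lambda$ nonvanishing on $(0,1)$ when $t=0$, to get existence and uniqueness by monotonicity and the intermediate value theorem. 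I do not expect a genuine obstacle here; the only point needing care is the case split on the sign of $\lambda+1$ — i.e. whether $\phi_\lambda(0)$ is finite (so $t$ lies in a bounded interval and $x=0$ must be excluded separately) or infinite (so $x^{\lambda+1}$ blows up at the origin) — and tracking the direction of monotonicity and the signs of the endpoint values consistently through that split.
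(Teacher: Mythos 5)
Your proposal is correct. Your second, ``equivalent route'' is essentially the paper's own proof: the paper defines $h_\lambda(x) = x^{\lambda+1} - (\lambda+1)x + \lambda - \lambda(\lambda+1)t$, notes $h_\lambda'(x) = (\lambda+1)(x^\lambda-1)$ has constant sign on $(0,1)$ in each of the three regimes $\lambda<-1$, $-1<\lambda<0$, $\lambda>0$, checks the endpoint signs, and applies the intermediate value theorem, treating $t=0$ separately first. Your primary route is a mild but genuine streamlining: by dividing through by $\lambda(\lambda+1)\neq 0$ you recognise \eqref{eqn:zeros_of_pseudopolynomial} as $\phi_\lambda(x)=t$ on $(0,1]$, and then a single uniform argument (strict convexity via $\phi_\lambda''(x)=x^{\lambda-1}>0$ plus $\phi_\lambda'(1)=0$ forces $\phi_\lambda$ to be a continuous strictly decreasing bijection from $(0,1]$ onto $[0,\phi_\lambda(0))$) replaces the three-way case split on the direction of monotonicity of $h_\lambda$; in fact this monotonicity is already the content of Proposition \ref{prop:phi_as_copula_generator}, so your version makes the logical dependence more transparent. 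The only delicate point in your route --- excluding $x=0$ when $\lambda>-1$, where $\phi_\lambda(0)$ is finite, versus the blow-up of $x^{\lambda+1}$ at the origin when $\lambda<-1$ --- is handled correctly. (One cosmetic slip: the polynomial expression is the right-hand side of \eqref{eqn:zeros_of_pseudopolynomial} as written, not the left-hand side.)
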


In some cases, \eqref{eqn:zeros_of_pseudopolynomial} can be solved using methods devised for root-finding in polynomials. Consider the following examples. When $\lambda$ is a positive integer, the right-hand side (RHS) of \eqref{eqn:zeros_of_pseudopolynomial} becomes a $(\lambda + 1)$-th degree polynomial in $x$, in which case the inverse is one of the roots of this polynomial. For $z \in \{2, 3, ...\}$, choosing $\lambda = (1-z)/z$ allows \eqref{eqn:zeros_of_pseudopolynomial} to be rewritten as $0 = x^{1/z} - x/z + (1-z)(1-t/z)/z$ with $t \in [0, \phi_{(1-z)/z}(0))$, where the RHS is a $z$-th degree polynomial in the variable $w \equiv x^{1/z}$. Then the solutions of \eqref{eqn:zeros_of_pseudopolynomial}, and hence also $\phi^{-1}_{(1-z)/z}$, are related to the roots of the polynomial in $w$. Similarly, with $z \in \{1, 2, 3, ...\}$, choosing $\lambda = -(z+1)$ allows \eqref{eqn:zeros_of_pseudopolynomial} to be rewritten as the $(z+1)$-th degree polynomial, $0 = zx^{z+1} - (z+1)[zt + 1]x^z+1$, where $t \in [0, \infty)$ since, in this case, $\lambda = -(z+1) < -1$. It follows that $\phi^{-1}_{-(z+1)}$ can be derived as one of the roots of this polynomial.

In a small number of cases where the RHS of \eqref{eqn:zeros_of_pseudopolynomial} is a polynomial, there is a closed-form expression for $\phi^{-1}_\lambda$ (see  Section \ref{sec:exact} and Supplement S3). However, even for the polynomial cases, closed-form solutions remain elusive in general due to the Abel-Ruffini Theorem \citep{Ruffini1813, Abel1826}, which states that polynomials of degree five and higher have no general solutions in terms of radicals (i.e., $n$-th roots for positive integer $n$). 

In general, for $\lambda \neq -1, 0$, the inverse $\phi_\lambda^{-1}(t)$ (for $t\in[0,\phi_\lambda(0))$), the pseudoinverse $\phi^{[-1]}_\lambda(t)$ (for $t\in[0,\infty)$), and the copula in \eqref{eqn:PD_copula_defn}, lack a closed form. It will be seen that this is not a barrier to characterising the properties of the PD copulas. It is also not a barrier to practical applications.

\subsection{Zero set and zero curve}\label{sec:zero_sets}

For bivariate random vector $\X\equiv(X_1, X_2)^\top$ with copula $C(u_1, u_2)$, it is possible that small values of $X_1$ and $X_2$ never occur together. In copula-based models for $\X$, this constraint can be modelled by copulas with substantial \textit{zero sets}. In some applications, this property is directly useful for modelling. For example, \citet{Konig2015} fitted Archimedean copulas with substantial zero sets to a dataset of transmission-quality metrics on a quantum network, where the zero sets were useful for capturing a prominent feature of the data. The zero sets of Archimedean copulas have also previously been related to Pareto fronts in multi-objective optimisation problems \citep{Binois2015}. The zero set and zero curve are also relevant when analysing whether Archimedean copulas are absolutely continuous or not (see Section \ref{sec:pd_copula_density}). 

For bivariate copula $C(u_1, u_2)$, $u_1,u_2 \in [0,1]$, the zero set is $\sZ(C) \equiv \{(u_1, u_2) \in [0, 1]^2: C(u_1, u_2) = 0\}$. The zero set always contains the two lines, $\{(u_1, 0), (0, u_2): u_1, u_2 \in [0, 1]\}$, but some copulas have more extensive zero sets with positive area. For example, the zero set of the FHLB, $C_{\FHLB}(u_1, u_2) = \max\{u_1 + u_2 - 1, 0\}$, is the triangular region of the unit square with the vertices $(0,0)$, $(0,1)$, and $(1,0)$. The \textit{zero curve} traces the boundary of the zero set. The zero curve of an Archimedean copula, $C(u_1, u_2; \psi)$, is the curve defined by $\psi(u_1) + \psi(u_2) = \psi(0)$ for $(u_1,u_2)\in [0,1]^2$ \citep[e.g.,][]{Nelsen2006}. The result below establishes that the zero curve and zero set of a PD copula vary with the parameter $\lambda$. 

\begin{theorem}\label{thm:zero_set}
    When $\lambda \leq -1$, the zero set of $C_\lambda(u_1, u_2)$ is $\sZ(C_\lambda)= \{(u_1, 0), (0, u_2): u_1, u_2 \in [0, 1]\}$. When $\lambda > -1$, the zero set has positive area and, as $\lambda \to \infty$, the zero set tends to the triangle with vertices $(0, 0)$, $(0, 1)$, and $(1, 0)$.
\end{theorem}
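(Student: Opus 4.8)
The plan is to work directly from the zero-curve characterisation of an Archimedean copula recalled in Section \ref{sec:zero_sets}. Since $C_\lambda(u_1,u_2)=\phi_\lambda^{[-1]}(\phi_\lambda(u_1)+\phi_\lambda(u_2))$ and, by Definition \ref{def:generator}, $\phi_\lambda^{[-1]}(t)=0$ exactly when $t\ge\phi_\lambda(0)$, we have
\[
\sZ(C_\lambda)=\bigl\{(u_1,u_2)\in[0,1]^2:\ \phi_\lambda(u_1)+\phi_\lambda(u_2)\ge\phi_\lambda(0)\bigr\},
\]
with the zero curve given by equality. The whole proof is then an analysis of this superlevel condition, using only the value of $\phi_\lambda(0)$ from \eqref{eqn:phi_lambda_zero} together with the elementary facts (Section \ref{sec:CR_pd} and Proposition \ref{prop:phi_as_copula_generator}) that $\phi_\lambda$ is continuous on $[0,1]$, strictly decreasing, with $\phi_\lambda(1)=0$.

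When $\lambda\le-1$, \eqref{eqn:phi_lambda_zero} gives $\phi_\lambda(0)=\infty$, whereas $\phi_\lambda$ is finite on $(0,1]$; hence $\phi_\lambda(u_1)+\phi_\lambda(u_2)<\infty=\phi_\lambda(0)$ whenever $u_1,u_2>0$, while the sum equals $+\infty$ as soon as $u_1=0$ or $u_2=0$. This identifies $\sZ(C_\lambda)$ with the two axes $\{(u_1,0),(0,u_2):u_1,u_2\in[0,1]\}$ exactly. When $\lambda>-1$, we instead have $\phi_\lambda(0)=1/(\lambda+1)\in(0,\infty)$, and positive area is immediate from continuity of $\phi_\lambda$ at $0$: choosing $\delta>0$ with $\phi_\lambda(u)>\tfrac12\phi_\lambda(0)$ for all $u\in[0,\delta]$ gives $[0,\delta]^2\subseteq\sZ(C_\lambda)$, so $\sZ(C_\lambda)$ has area at least $\delta^2>0$.

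For the limit $\lambda\to\infty$ (so we may take $\lambda>0$), I would rescale the generator by setting $h_\lambda(x)\equiv(\lambda+1)\phi_\lambda(x)$. A one-line computation from the first branch of \eqref{eqn:CR_phi_function} yields $h_\lambda(x)=(1-x)+\lambda^{-1}(x^{\lambda+1}-x)$, so $\sup_{x\in[0,1]}|h_\lambda(x)-(1-x)|\le1/\lambda$. Since $\lambda+1>0$, multiplying the defining inequality through gives $\sZ(C_\lambda)=\{(u_1,u_2)\in[0,1]^2:\ h_\lambda(u_1)+h_\lambda(u_2)\ge1\}$. If $u_1+u_2<1$, then $h_\lambda(u_1)+h_\lambda(u_2)\ge(2-u_1-u_2)-2/\lambda>1$ once $\lambda>2/(1-u_1-u_2)$, so $(u_1,u_2)\in\sZ(C_\lambda)$ for all large $\lambda$; if $u_1+u_2>1$, then $h_\lambda(u_1)+h_\lambda(u_2)\le(2-u_1-u_2)+2/\lambda<1$ for all large $\lambda$, so $(u_1,u_2)\notin\sZ(C_\lambda)$ for all large $\lambda$. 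Hence $\sZ(C_\lambda)$ converges, in the sense of inner and outer set limits, to the closed triangle $\{(u_1,u_2):u_1,u_2\ge0,\ u_1+u_2\le1\}$ with vertices $(0,0),(0,1),(1,0)$; equivalently, the zero curve $\{h_\lambda(u_1)+h_\lambda(u_2)=1\}$ converges in Hausdorff distance to its hypotenuse $\{u_1+u_2=1\}\cap[0,1]^2$, since every point of the curve satisfies $|u_1+u_2-1|\le2/\lambda$ and, writing the curve as the graph $u_2=h_\lambda^{-1}(1-h_\lambda(u_1))$, the graph stays within $2/\lambda$ of $u_2=1-u_1$. The reductions are routine; the only point needing a little care is fixing the precise sense in which the zero set ``tends to'' the triangle, and the uniform estimate $\|h_\lambda-(1-\cdot)\|_\infty\le1/\lambda$ makes this clean, so I anticipate no real obstacle.
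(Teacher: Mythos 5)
Your proof is correct and follows essentially the same route as the paper's: both arguments reduce everything to comparing $\phi_\lambda(u_1)+\phi_\lambda(u_2)$ with $\phi_\lambda(0)$ from \eqref{eqn:phi_lambda_zero} and splitting on whether $\lambda\le-1$ or $\lambda>-1$. If anything, yours is tighter in the two places where the paper is informal: you actually prove the positive-area claim via the box $[0,\delta]^2$ (the paper only remarks that ``the solutions are non-trivial''), and your rescaling $h_\lambda=(\lambda+1)\phi_\lambda$ with the uniform bound $\sup_{x\in[0,1]}|h_\lambda(x)-(1-x)|\le 1/\lambda$ pins down the sense in which the zero set tends to the triangle, whereas the paper passes to a pointwise limit of the zero-curve equation without specifying the mode of convergence.
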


\subsection{Absolutely continuous and singular components}\label{sec:pd_copula_density}

In this section, we analyse when the PD copulas are absolutely continuous and when they contain a singular component. We begin by computing the Kendall function \citep{Genest1993} for the PD copulas. 

For any pair of absolutely continuous random variables $X_1$ and $X_2$, the bivariate random vector $(X_1, X_2)^\top$ has a Kendall function that only depends on the copula of $(X_1, X_2)^\top$. Suppose the random vector $(X_1, X_2)^\top$ has the Archimedean copula, $C(u_1, u_2; \psi_\theta)$. 
For $s \in (0, 1)$, the Kendall function is $K_\theta(s) = s - \psi_\theta(s)/\psi_\theta'(s)$, where $K_\theta(0) \equiv \lim_{s \downarrow 0} K_\theta(s)$ and $K_\theta(1) \equiv \lim_{s \uparrow 1} K_\theta(s) = 1$, and $s \downarrow 0$ and $s \uparrow 1$ denote that the one-sided limits are, respectively, approached from above and below. The Kendall function corresponds to the measure of the set, $\{(u_1, u_2) \in [0, 1]^2: C(u_1, u_2; \psi_\theta) \leq s\}$ for given $s \in [0, 1]$. The Kendall function for the PD copula is as follows: For $s \in (0, 1)$,
\begin{equation}
    K_\lambda(s) =  \begin{cases}
        \frac{\lambda}{\lambda + 1} \left( \frac{s^{\lambda + 1} - 1}{s^\lambda - 1} \right) & \lambda \neq -1, 0, \\
        \frac{s-1}{\log(s)} & \lambda = 0,\\
        \frac{s\log(s)}{s-1} & \lambda = -1.
    \end{cases} 
\end{equation}
From \citet{Genest1986}, the Archimedean copula $C(u_1, u_2; \psi_\theta)$ is absolutely continuous if $\lim_{s \downarrow 0}~\psi_\theta(s)/\psi_\theta'(s) = 0$. Otherwise, it has a singular component supported on the zero curve, with C-measure (i.e., probability mass) $K_\theta(0) = -\lim_{s \downarrow 0} \psi_\theta(s)/\psi_\theta'(s)$. As applied to the PD copulas, we obtain the following result. 

\begin{theorem}\label{thm:absolutely_continuous}
    The subfamily of power-divergence copulas, $\{C_\lambda(u_1,u_2): \lambda \leq 0\}$, is absolutely continuous. The subfamily of power-divergence copulas with $\lambda > 0$ have a singular part supported on the zero curve with C-measure $\lambda/(\lambda + 1) \in (0, 1)$.
\end{theorem}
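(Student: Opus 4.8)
The plan is to reduce the entire statement to evaluating a single one-sided limit and then invoke the \citet{Genest1986} criterion recalled just above the theorem. Since $C_\lambda(u_1,u_2)=C(u_1,u_2;\phi_\lambda)$, the generator $\phi_\lambda$ is continuously differentiable on $(0,\infty)$ (Section~\ref{sec:CR_pd}), and $K_\lambda(s)=s-\phi_\lambda(s)/\phi_\lambda'(s)$, we have $\lim_{s\downarrow 0}\phi_\lambda(s)/\phi_\lambda'(s)=-K_\lambda(0)$ where $K_\lambda(0)=\lim_{s\downarrow 0}K_\lambda(s)$. By the cited result, $C_\lambda$ is absolutely continuous when $K_\lambda(0)=0$, and otherwise has a singular component supported on the zero curve carrying $C$-measure exactly $K_\lambda(0)$. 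So it suffices to compute $K_\lambda(0)$ from the explicit formula for $K_\lambda(s)$ already derived.

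I would first dispatch the two special parameter values. For $\lambda=0$, $K_0(s)=(s-1)/\log s\to 0$ as $s\downarrow 0$, since the numerator tends to $-1$ and $\log s\to-\infty$; for $\lambda=-1$, $K_{-1}(s)=s\log s/(s-1)\to 0$, using the standard limit $s\log s\to 0$ together with $s-1\to-1$. Both copulas are therefore absolutely continuous. For $\lambda\neq-1,0$ I would use $K_\lambda(s)=\frac{\lambda}{\lambda+1}\cdot\frac{s^{\lambda+1}-1}{s^\lambda-1}$ and split according to the signs of $\lambda$ and $\lambda+1$. If $-1<\lambda<0$, then $s^{\lambda+1}\to 0$ while $s^\lambda\to\infty$, so the fraction tends to $0$ and $K_\lambda(0)=0$. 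If $\lambda<-1$, then $s^{\lambda+1}$ and $s^\lambda$ both diverge, and comparing leading powers gives $\frac{s^{\lambda+1}-1}{s^\lambda-1}\sim s\to 0$, so again $K_\lambda(0)=0$. Combining these four cases establishes that $C_\lambda$ is absolutely continuous for every $\lambda\le 0$, the first assertion. Finally, if $\lambda>0$, then $s^{\lambda+1}\to 0$ and $s^\lambda\to 0$, so $\frac{s^{\lambda+1}-1}{s^\lambda-1}\to 1$ and $K_\lambda(0)=\lambda/(\lambda+1)$, which lies strictly in $(0,1)$ precisely when $\lambda>0$. Since this limit is nonzero, $C_\lambda$ has a singular component; invoking Theorem~\ref{thm:zero_set} to record that the zero curve is non-degenerate for $\lambda>0$ (indeed $\phi_\lambda(0)=1/(\lambda+1)<\infty$ there), this component sits on the zero curve and carries mass $\lambda/(\lambda+1)$, as claimed.

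The argument is essentially a chain of elementary limits, so I do not expect a genuine obstacle; the only point demanding care is the case bookkeeping for $\lambda\neq-1,0$, namely correctly tracking which of $s^{\lambda+1}$ and $s^\lambda$ vanishes and which blows up as $s\downarrow 0$. If one preferred to bypass $K_\lambda$ and work directly with $\phi_\lambda(s)/\phi_\lambda'(s)$ from \eqref{eqn:CR_phi_function} and \eqref{eqn:CR_derivative}, the same vigilance is needed to verify that, in the regime $\lambda<-1$, the term $s^{\lambda+1}$ in the numerator dominates the linear and constant terms. Either route delivers the stated $C$-measures.
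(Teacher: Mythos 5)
Your proposal is correct and follows essentially the same route as the paper: both reduce the theorem to evaluating $-\lim_{s\downarrow 0}\phi_\lambda(s)/\phi_\lambda'(s)$ case by case and then invoke the criterion of \citet{Genest1986}. The only difference is cosmetic---you exploit the already-simplified Kendall function $K_\lambda(s)=\frac{\lambda}{\lambda+1}\frac{s^{\lambda+1}-1}{s^\lambda-1}$ so that every limit follows by comparing powers of $s$, whereas the paper evaluates the raw ratio with L'H\^opital's rule in the cases $\lambda<-1$, $\lambda=-1$, and $\lambda=0$.
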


Absolutely continuous Archimedean copulas have copula densities, $c(u_1, u_2; \psi_\theta) \equiv \partial^2 C(u_1, u_2; \psi_\theta)/\partial u_1 \partial u_2$; see \citet{Genest1986} for a standard form. Assuming $\lambda \leq 0$ (i.e., the absolutely continuous subfamily of the PD copulas), the copula density of a PD copula can be expressed as follows: For $\lambda \leq 0$ and all $(u_1,u_2)\in[0,1]^2$ that satisfy $0 \leq \phi_\lambda(u_1) + \phi_\lambda(u_2) \leq \phi_\lambda(0)$,
\begin{equation}
    c_\lambda(u_1, u_2) \equiv \frac{\partial^2 C_\lambda(u_1,u_2)}{\partial u_1\partial u_2} = \begin{cases}
        -\frac{\lambda C_\lambda(u_1, u_2)^{\lambda - 1}(u_1^\lambda - 1)(u_2^\lambda - 1)}{(C_\lambda(u_1, u_2)^\lambda - 1)^3} & \lambda \neq -1, 0,\\
        - \frac{\log(u_1)\log(u_2)}{C_0(u_1, u_2)\log(C_0(u_1,u_2))^3} & \lambda = 0,\\
        -\frac{(1-u_1^{-1})(1-u_2^{-1})}{C_{-1}(u_1,u_2)^2 (1 - C_{-1}(u_1,u_2)^{-1})} & \lambda = -1.
    \end{cases}\label{eqn:PD_copula_dens}
\end{equation}
For $\lambda \leq -1$, recall that $\phi_\lambda(0) = \infty$, and the domain of \eqref{eqn:PD_copula_dens} is the whole unit square, $[0, 1]^2$.

If the copula has a singular component (i.e., $\lambda > 0$), then the derivative \eqref{eqn:PD_copula_dens} fails to exist on the zero curve. It is still defined on a slightly restricted domain, namely $\{(u_1, u_2) \in [0,1]^2: 0 \leq \phi_\lambda(u_1) + \phi_\lambda(u_2) < \phi_\lambda(0)\}$. However, it cannot be called a density since it does not integrate to unity; in fact, it integrates to $1/(\lambda + 1)$ for any given $\lambda > 0$. 

\subsection{Ordering}\label{sec:ordering}

Assume that $\theta$ is univariate. The copula family $\{C_\theta(u_1, u_2): \theta \in \Theta\}$ may be `ordered' with respect to parameter $\theta$. The parametric family $\{C_\theta(u_1, u_2): \theta \in \Theta\}$ is called \textit{negatively ordered} if, for $\theta_1 \leq \theta_2$, $C_{\theta_1}(u_1, u_2) \geq C_{\theta_2}(u_1, u_2)$ for all $u_1, u_2 \in [0, 1]$. It is called \textit{positively ordered} if, for $\theta_1 \leq \theta_2$, $C_{\theta_1}(u_1, u_2) \leq C_{\theta_2}(u_1, u_2)$ for all $u_1, u_2 \in [0, 1]$. 
Using one of the convenient tests of ordering for families of Archimedean copulas \citep[Cor. 4.4.6]{Nelsen2006}, the next theorem establishes that PD copulas have an ordering with respect to $\lambda$.

\begin{theorem}\label{thm:ordering}
    The family of power-divergence copulas in \eqref{eqn:PD_copula_defn} is negatively ordered. That is, if $-\infty < \lambda_1 < \lambda_2 < \infty$, then $C_{\lambda_1}(u_1, u_2) \geq C_{\lambda_2}(u_1, u_2)$ for all $u_1,u_2 \in [0, 1]$.
\end{theorem}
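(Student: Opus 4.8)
The plan is to invoke the ordering test for families of Archimedean copulas of \citet[Cor. 4.4.6]{Nelsen2006}: if $C_1$ and $C_2$ are Archimedean copulas with continuously differentiable generators $\varphi_1,\varphi_2$ such that the ratio $\varphi_1'/\varphi_2'$ is nondecreasing on $(0,1)$, then $C_1 \leq C_2$ pointwise. Fix $-\infty < \lambda_1 < \lambda_2 < \infty$; negative ordering is the assertion that $C_{\lambda_2}(u_1,u_2) \leq C_{\lambda_1}(u_1,u_2)$ on $[0,1]^2$. So I apply the test with $\varphi_1 = \phi_{\lambda_2}$ and $\varphi_2 = \phi_{\lambda_1}$ (both continuously differentiable on $(0,1)$ by the discussion following \eqref{eqn:CR_phi_function}), and it suffices to show that $x \mapsto \phi_{\lambda_2}'(x)/\phi_{\lambda_1}'(x)$ is nondecreasing on $(0,1)$. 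Equivalently, since the composite $\phi_{\lambda_2}\circ\phi_{\lambda_1}^{[-1]}$ vanishes at the origin, one may instead show that it is concave --- hence subadditive --- and conclude via the criterion of \citet[Thm. 4.4.2]{Nelsen2006}; applying the decreasing pseudoinverse $\phi_{\lambda_2}^{[-1]}$ to that subadditivity inequality then recovers $C_{\lambda_1} \geq C_{\lambda_2}$ directly.

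The computation rests on the uniform integral representation
\begin{equation*}
\phi_\lambda'(x) = -\int_x^1 s^{\lambda-1}\,\d s, \qquad x \in (0,1], \quad \lambda \in (-\infty,\infty),
\end{equation*}
which one checks directly reproduces all three branches of \eqref{eqn:CR_derivative}, so the special cases $\lambda \in \{-1,0\}$ require no separate treatment. Writing $G_\lambda(x) \equiv \int_x^1 s^{\lambda-1}\,\d s > 0$, the relevant ratio is $R(x) \equiv \phi_{\lambda_2}'(x)/\phi_{\lambda_1}'(x) = G_{\lambda_2}(x)/G_{\lambda_1}(x)$, and differentiating (with $G_\lambda'(x) = -x^{\lambda-1}$) and combining the two resulting integrals gives
\begin{equation*}
R'(x) = \frac{1}{G_{\lambda_1}(x)^2}\int_x^1 x^{\lambda_1-1}\, s^{\lambda_1-1}\bigl(s^{\lambda_2-\lambda_1} - x^{\lambda_2-\lambda_1}\bigr)\,\d s.
\end{equation*}
Because $\lambda_2 - \lambda_1 > 0$, the factor $s^{\lambda_2-\lambda_1} - x^{\lambda_2-\lambda_1}$ is strictly positive whenever $x < s < 1$; hence the integrand is positive, $R'(x) > 0$ on $(0,1)$, and $R$ is (strictly) increasing. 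This verifies the hypothesis of the cited corollary and closes the argument.

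A couple of edge cases deserve a word. When $\lambda_1 > -1$ the generator $\phi_{\lambda_1}$ is non-strict ($\phi_{\lambda_1}(0) = 1/(\lambda_1+1) < \infty$), but the test concerns only the ratio on $(0,1)$ and applies through the pseudoinverse; moreover $\lambda_1 > -1$ forces $\lambda_2 > -1$, so no mixed strict/non-strict comparison arises. When $\lambda_1 \leq -1$ the generator is strict and $\phi_{\lambda_2}\circ\phi_{\lambda_1}^{-1}$ is genuinely concave on all of $[0,\infty)$. I expect the only real hazard to be bookkeeping: one must place $\phi_{\lambda_2}$ (the larger index) in the role of $\varphi_1$ and $\phi_{\lambda_1}$ in the role of $\varphi_2$, so that monotonicity of $R$ yields \emph{negative} rather than positive ordering --- a quick sanity check against the positively ordered Clayton family, whose analogous ratio is $t^{\theta_2-\theta_1}$, confirms the direction. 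The sign check on $R'$ is the only real content.
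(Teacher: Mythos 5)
Your proof is correct, and it reaches the theorem through the same test the paper uses --- \citet[Cor.~4.4.6]{Nelsen2006} applied to the ratio $\phi_{\lambda_2}'(x)/\phi_{\lambda_1}'(x)$, with the roles of the two generators assigned so that monotonicity of the ratio yields negative rather than positive ordering --- but the way you verify monotonicity is genuinely different and considerably cleaner. The paper differentiates the ratio by the quotient rule and then works through six separate cases ($\lambda_1=-1,\lambda_2=0$; $\lambda_1=-1,\lambda_2>-1$; $\lambda_1=0,\lambda_2>0$; $\lambda_1<-1,\lambda_2=-1$; $\lambda_1<0,\lambda_2=0$; and the generic case), in each one massaging the numerator into a positive multiple of some $\phi_\mu(x^\nu)\geq 0$. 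Your observation that $\phi_\lambda'(x) = -\int_x^1 s^{\lambda-1}\,\d s$ holds uniformly across all three branches of \eqref{eqn:CR_derivative} collapses this into a single computation: writing $R = G_{\lambda_2}/G_{\lambda_1}$ with $G_\lambda(x)=\int_x^1 s^{\lambda-1}\,\d s$, the numerator of $R'$ becomes $\int_x^1 x^{\lambda_1-1}s^{\lambda_1-1}\bigl(s^{\lambda_2-\lambda_1}-x^{\lambda_2-\lambda_1}\bigr)\,\d s$, which is manifestly positive for $x<s<1$ and $\lambda_2>\lambda_1$. Your handling of the edge cases is also sound: the test only involves the generators on $(0,1)$, so strictness of the inverses is irrelevant, and the integral representation makes the special treatment of $\lambda\in\{-1,0\}$ unnecessary. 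What the paper's approach buys is that each case exposes an identity relating derivative ratios back to $\phi$ functions; what yours buys is brevity, a uniform argument, and a transparent reason for the sign. The aside about deducing subadditivity of $\phi_{\lambda_2}\circ\phi_{\lambda_1}^{[-1]}$ from concavity is a valid alternative closing step but is not needed once the corollary applies.
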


\subsection{Limiting cases}\label{sec:limiting_cases}

The following proposition shows that the family of bivariate PD copulas includes the FHLB and FHUB as limiting cases. The family does not include the product copula.

\begin{proposition}\label{prop:FHLB_and_FHUB}
(i) The limit of $C_\lambda(u_1, u_2)$ as $\lambda \to \infty$ is the Fr\'echet-Hoeffding lower bound. (ii) The limit of $C_\lambda(u_1, u_2)$ as $\lambda \to -\infty$ is the Fr\'echet-Hoeffding upper bound.
\end{proposition}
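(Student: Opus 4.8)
The plan is to treat the two limits separately, since the generator $\phi_\lambda$ behaves qualitatively differently as $\lambda\to+\infty$ versus $\lambda\to-\infty$. Throughout, because we are taking a limit, we may restrict to $\lambda>0$ (for part (i)) or to $\lambda<-1$ (for part (ii)) and use the generic branch $\lambda\neq-1,0$ of \eqref{eqn:CR_phi_function}; there $\phi_\lambda$ is continuous, strictly decreasing and strictly convex on $[0,1]$, with a strict inverse when $\lambda<-1$ and with $\phi_\lambda(0)=1/(\lambda+1)$ when $\lambda>0$ (from \eqref{eqn:phi_lambda_zero}).

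For part (i) ($\lambda\to\infty$; FHLB), I would use the scale invariance of the Archimedean construction: multiplying a generator by a positive constant leaves the copula unchanged, so $C_\lambda(u_1,u_2)=\psi_\lambda^{[-1]}(\psi_\lambda(u_1)+\psi_\lambda(u_2))$ for the rescaled generator $\psi_\lambda\equiv(\lambda+1)\phi_\lambda$. A short computation gives $\psi_\lambda(x)=\lambda^{-1}(x^{\lambda+1}-x)+(1-x)$, so $|\psi_\lambda(x)-(1-x)|\le 1/\lambda$ uniformly on $[0,1]$; that is, $\psi_\lambda\to h$ uniformly, where $h(x)\equiv 1-x$. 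Now $h$ is itself a (non-strict) Archimedean generator in the sense of Definition \ref{def:generator}, with $h^{[-1]}(t)=\max\{1-t,0\}$, and its copula via Definition \ref{def:bivariate_archimedean} is exactly $C_{\FHLB}$. Moreover $\psi_\lambda(0)=h(0)=1$ for every $\lambda$, so all the pseudoinverses share the same flat part on $[1,\infty)$. I would then invoke the elementary fact that inverses of uniformly convergent, continuous, strictly monotone functions on a compact interval converge uniformly (a one-line modulus-of-continuity argument), giving $\psi_\lambda^{[-1]}\to h^{[-1]}$ uniformly on $[0,\infty)$; combined with $\psi_\lambda(u_j)\to h(u_j)$ and continuity of $h^{[-1]}$ this yields $C_\lambda(u_1,u_2)\to h^{[-1]}(h(u_1)+h(u_2))=\max\{u_1+u_2-1,0\}$.

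For part (ii) ($\lambda\to-\infty$; FHUB), rescaling fails because $\phi_\lambda$ blows up pointwise on $(0,1)$, so I would argue by a squeeze. Fix $(u_1,u_2)\in(0,1)^2$ and, by symmetry of $C_\lambda$, assume $u_1\le u_2$ (the boundary cases $u_j\in\{0,1\}$ hold for every $\lambda$ by the copula axioms). Since $\phi_\lambda^{-1}$ is strict and strictly decreasing for $\lambda<-1$, the bound $\phi_\lambda(u_2)\ge 0$ gives at once $C_\lambda(u_1,u_2)\le\phi_\lambda^{-1}(\phi_\lambda(u_1))=u_1=\min\{u_1,u_2\}$. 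For the matching lower bound, fix $\epsilon\in(0,u_1)$; it suffices to show $\phi_\lambda(u_1)+\phi_\lambda(u_2)\le\phi_\lambda(u_1-\epsilon)$ for all sufficiently negative $\lambda$, since then $C_\lambda(u_1,u_2)\ge\phi_\lambda^{-1}(\phi_\lambda(u_1-\epsilon))=u_1-\epsilon$. Writing $\mu=-\lambda\to+\infty$ we have $\phi_\lambda(x)=[\mu(\mu-1)]^{-1}(x^{1-\mu}-x-\mu(1-x))$, so for $x\in(0,1)$ the term $x^{1-\mu}$ dominates; the required inequality reduces to comparing $u_2^{1-\mu}$ against $(u_1-\epsilon)^{1-\mu}-u_1^{1-\mu}$, and since $u_1-\epsilon<u_1\le u_2$ with $1-\mu<0$, the ratio $(u_1-\epsilon)^{1-\mu}/u_2^{1-\mu}=(u_2/(u_1-\epsilon))^{\mu-1}\to\infty$, which forces the inequality for large $\mu$. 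Letting $\epsilon\downarrow 0$ gives $C_\lambda(u_1,u_2)\to\min\{u_1,u_2\}=C_{\FHUB}(u_1,u_2)$.

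I expect the main obstacle to be making the asymptotic comparison in part (ii) rigorous: one must show not just that $(u_1-\epsilon)^{1-\mu}$ dominates $u_2^{1-\mu}$ but that it does so by enough to absorb the $O(\mu)$ correction terms in both $\phi_\lambda(u_2)$ and $\phi_\lambda(u_1-\epsilon)-\phi_\lambda(u_1)$; the clean way is to factor $(u_1-\epsilon)^{1-\mu}$ out of both sides and check that the bracketed remainder converges to a strictly negative constant. (An alternative route for (ii) that avoids the squeeze, though it needs essentially the same diagonal estimate, is to use the negative ordering of the family---Theorem \ref{thm:ordering}---to conclude that $C_\lambda$ increases to \emph{some} copula as $\lambda\downarrow-\infty$, and then to identify that limit by showing its diagonal section equals $u$, which forces it to be $C_{\FHUB}$.) Part (i) is comparatively routine once the rescaling is in place.
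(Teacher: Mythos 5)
Your proof is correct, but it takes a genuinely different route from the paper. The paper's proof is short and leans on two ready-made limit criteria for one-parameter Archimedean families (Theorems 4.4.7 and 4.4.8 of Nelsen, 2006): it verifies $\lim_{\lambda\to\infty}\phi_\lambda(s)/\phi_\lambda'(t)=s-1$ and $\lim_{\lambda\to-\infty}\phi_\lambda(s)/\phi_\lambda'(s)=0$ for $s,t\in(0,1)$ by L'H\^opital's rule, and the conclusion follows from the cited theorems. You instead argue from first principles: for (i) you exploit the scale invariance of the Archimedean construction to replace $\phi_\lambda$ by $\psi_\lambda=(\lambda+1)\phi_\lambda$, observe that $\psi_\lambda\to 1-x$ uniformly on $[0,1]$ with the common value $\psi_\lambda(0)=1$, and pass the uniform convergence through the (Lipschitz) inverse; for (ii) you squeeze $C_\lambda$ between the universal upper bound $\min\{u_1,u_2\}$ and $u_1-\epsilon$ via the generator inequality $\phi_\lambda(u_1)+\phi_\lambda(u_2)\le\phi_\lambda(u_1-\epsilon)$, which holds for large $-\lambda$ because $(u_1-\epsilon)^{1-\mu}$ dominates both the other power terms and the $O(\mu)$ corrections after factoring it out. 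Your version is longer but self-contained and makes the mechanism visible (the rescaled generator literally converges to the FHLB generator $1-x$; the diagonal degeneracy of $\phi_\lambda$ forces comonotonicity), whereas the paper's version is a two-line calculation conditional on knowing the right black-box criteria. Both arguments are sound; you correctly identified and dispatched the one delicate point in (ii), namely that the exponential domination must also absorb the terms linear in $\mu$.
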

    
\noindent The limiting cases of the PD copula are consistent with the negative ordering of the family with respect to the parameter $\lambda$, per Theorem \ref{thm:ordering}. 

\subsection{Concordance and dependence measures}\label{sec:dependence_measures}

In this section, we derive Kendall's tau \citep{Kendall1938} and the tail-dependence coefficients for the PD copulas. Kendall's tau for an Archimedean copula with generator $\psi_\theta$ is defined as $\tau(\theta) \equiv 1 + 4\int_0^1 \psi_\theta(s)/\psi_\theta'(s)~\d s$ \citep[e.g.,][Cor 5.1.4]{Nelsen2006}. Using \eqref{eqn:CR_phi_function} and \eqref{eqn:CR_derivative} in this expression gives the Kendall's tau for the PD copula: That is, for $\lambda \in (-\infty,\infty)$,
\begin{equation}
    \tau(\lambda) \equiv \begin{cases}
        1 + \frac{2}{\lambda + 1} - \frac{4\lambda}{\lambda + 1} \int_0^1\frac{s-1}{s^\lambda - 1}\d s & \lambda \neq -1, 0,\\
        3 - 4\log(2) & \lambda = 0\\
        7-2\pi^2/3 & \lambda = -1,
    \end{cases}\label{eqn:Kendalls_tau}
\end{equation}
where the integral in the first case exists for all $\lambda \neq -1, 0$, though it may not have a closed form. The last two cases correspond to numerical values of approximately $\tau(0) \simeq 0.227$ and $\tau({-1}) \simeq 0.420$. As $\lambda$ goes to $-\infty$, $\tau(\lambda)$ goes to $1$; as $\lambda$ goes to $\infty$, $\tau(\lambda)$ goes to $-1$. This follows easily from Proposition \ref{prop:FHLB_and_FHUB}. The next proposition also establishes that $\tau(\lambda)$ is a monotone function of $\lambda$.

\begin{proposition}\label{prop:Kendalls_tau_lambda}
    For $\lambda \in (-\infty, \infty)$, $\tau(\lambda)$ is a monotone decreasing function of $\lambda$.     
\end{proposition}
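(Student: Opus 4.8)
The plan is to avoid differentiating the non-elementary integral in \eqref{eqn:Kendalls_tau} directly and instead argue through the Kendall function $K_\lambda$ obtained in Section~\ref{sec:pd_copula_density}. First I would record the identity $\tau(\lambda) = 3 - 4\int_0^1 K_\lambda(s)\,\d s$, which follows from the Archimedean formula $\tau(\lambda) = 1 + 4\int_0^1 \phi_\lambda(s)/\phi_\lambda'(s)\,\d s$ together with $K_\lambda(s) = s - \phi_\lambda(s)/\phi_\lambda'(s)$ and $\int_0^1 s\,\d s = 1/2$; here all integrals are finite because $\phi_\lambda(s)/\phi_\lambda'(s)$ extends continuously to $s=0$ for every $\lambda$ (its limit there is $0$ for $\lambda\le 0$ and $-\lambda/(\lambda+1)$ for $\lambda>0$). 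Hence it suffices to prove that $\lambda\mapsto\int_0^1 K_\lambda(s)\,\d s$ is strictly increasing, and for that it is enough to establish the pointwise claim: for each fixed $s\in(0,1)$, the map $\lambda\mapsto K_\lambda(s)$ is strictly increasing on $(-\infty,\infty)$.

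The second step is to rewrite $K_\lambda(s)$ in a form that exposes this monotonicity. Using $\int_0^1 s^{\mu u}\,\d u = (s^\mu-1)/(\mu\log s)$ for $\mu\ne 0$ (and $=1$ for $\mu=0$) with $\mu=\lambda+1$ and $\mu=\lambda$, one checks that, for all $\lambda\in(-\infty,\infty)$ and $s\in(0,1)$,
\[
K_\lambda(s) \;=\; \frac{\int_0^1 s^{(\lambda+1)u}\,\d u}{\int_0^1 s^{\lambda u}\,\d u} \;=\; \frac{\int_0^1 s^{u}\,s^{\lambda u}\,\d u}{\int_0^1 s^{\lambda u}\,\d u},
\]
and this single formula reproduces all three branches of $K_\lambda$, including $\lambda=0$ and $\lambda=-1$, so those values need no special treatment. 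In words, $K_\lambda(s)$ is the weighted average of the strictly decreasing function $u\mapsto s^u$ over $u\in[0,1]$ with weight density proportional to $s^{\lambda u}$; as $\lambda$ grows the weight is pushed toward $u=0$, where $s^u$ is largest, so the average should grow.

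To make this rigorous --- this is the step I expect to be the main obstacle --- fix $s\in(0,1)$ and $\lambda_1<\lambda_2$. Clearing the (positive) denominators, the inequality $K_{\lambda_1}(s)<K_{\lambda_2}(s)$ is equivalent to
\[
\Big(\!\int_0^1 s^{(\lambda_2+1)u}\,\d u\Big)\Big(\!\int_0^1 s^{\lambda_1 v}\,\d v\Big) - \Big(\!\int_0^1 s^{(\lambda_1+1)u}\,\d u\Big)\Big(\!\int_0^1 s^{\lambda_2 v}\,\d v\Big) > 0 .
\]
Writing the left-hand side as a double integral over $[0,1]^2$ and symmetrizing under $u\leftrightarrow v$ turns it into
\[
\tfrac12\int_0^1\!\!\int_0^1 \big(s^{\lambda_1 u+\lambda_2 v}-s^{\lambda_1 v+\lambda_2 u}\big)\big(s^{v}-s^{u}\big)\,\d u\,\d v .
\]
Because $0<s<1$ and $\lambda_1<\lambda_2$, the exponent $\lambda_1 u+\lambda_2 v$ is the smaller of the two exactly when $u>v$, so $s^{\lambda_1 u+\lambda_2 v}-s^{\lambda_1 v+\lambda_2 u}$ has the sign of $u-v$; and $s^{v}-s^{u}$ also has the sign of $u-v$. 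Hence the integrand is $\ge 0$ throughout $[0,1]^2$ and $>0$ off the diagonal, so the double integral is strictly positive, proving that $K_\lambda(s)$ is strictly increasing in $\lambda$. (Equivalently, one may differentiate: letting $U$ have density proportional to $s^{\lambda u}$ on $[0,1]$, $\partial_\lambda K_\lambda(s) = (\log s)\,\mathrm{Cov}(s^{U},U) > 0$, since $\log s<0$ and the covariance of a strictly decreasing and a strictly increasing function of $U$ is strictly negative.)

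Integrating the pointwise inequality over $s\in(0,1)$ and using the identity from the first step then gives that $\tau(\lambda)$ is strictly decreasing, consistent with the limits $\tau(\lambda)\to 1$ as $\lambda\to-\infty$ and $\tau(\lambda)\to-1$ as $\lambda\to\infty$ already noted after \eqref{eqn:Kendalls_tau}. An alternative, shorter-looking route is to combine the negative ordering of the family (Theorem~\ref{thm:ordering}) with the fact that Kendall's tau is monotone with respect to the concordance order; but upgrading that to \emph{strict} monotonicity of $\tau$ still requires essentially the computation above (or the observation that distinct values of $\lambda$ give copulas differing on a set of positive measure), so the Kendall-function argument appears to be the cleanest self-contained option.
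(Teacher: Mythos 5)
Your proof is correct, but it takes a genuinely different route from the paper. The paper's own proof is the one-line argument you mention only in passing at the end: it invokes the negative ordering of the family (Theorem~\ref{thm:ordering}) together with the fact that Kendall's tau respects the concordance order, and concludes that $\tau(\lambda)$ is non-increasing. Your argument is instead a self-contained computation: the identity $\tau(\lambda)=3-4\int_0^1 K_\lambda(s)\,\d s$, the representation $K_\lambda(s)=\int_0^1 s^u s^{\lambda u}\,\d u\,/\int_0^1 s^{\lambda u}\,\d u$ (which I checked reproduces all three branches, including $\lambda=0,-1$), and a Chebyshev-type correlation inequality obtained by symmetrising the cross-multiplied difference over $[0,1]^2$; the sign analysis of $(s^{\lambda_1u+\lambda_2v}-s^{\lambda_1v+\lambda_2u})(s^v-s^u)$ is right, since both factors carry the sign of $u-v$. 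What your approach buys is \emph{strict} monotonicity, which the paper's concordance-order argument does not deliver without extra work, and which is in fact what underpins the uniqueness of the moment estimator $\hat\lambda$ discussed right after \eqref{eqn:Kendalls_tau}; what it costs is length, compared with the paper's immediate corollary of Theorem~\ref{thm:ordering}. Your closing remark correctly identifies this trade-off.
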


The expression in \eqref{eqn:Kendalls_tau} enables method-of-moments estimation for $\lambda$ based on inversion of Kendall's tau \citep{Genest1993}. Let $(X_1,X_2)^\top$ be two absolutely continuous random variables with joint CDF $F$, and let $\bar{\tau}_n$ be the sample version of Kendall's tau for the dataset, $\mathcal{D}_n=\{(x_{1i}, x_{2i})^\top \sim F: i=1,...,n\}$. The PD copula can be fitted to the dataset $\mathcal{D}_n$ by using a root-finding algorithm to compute $\hat{\lambda}$ such that $\bar{\tau}_n - \tau(\hat{\lambda}) = 0$. Proposition \ref{prop:Kendalls_tau_lambda}, together with the facts that $\lim_{\lambda \to \infty}\tau(\lambda)=-1$ and $\lim_{\lambda\to-\infty}\tau(\lambda) = 1$, implies that there is a unique value of $\hat\lambda$ associated with every value of $\bar{\tau}_n \in [-1, 1]$. This estimator is consistent and asymptotically unbiased \citep{Genest1993}. 

We now calculate the lower and upper tail-dependence coefficients of the PD copula.  For absolutely continuous random vector $\X\equiv(X_1, X_2)^\top$, the tail-dependence coefficients depend only on the copula of $\X$. Therefore, let $U_1 \equiv F_1(X_1)$ and $U_2 \equiv F_2(X_2)$, where recall that $F_1$ and $F_2$ are the marginal CDFs of $X_1$ and $X_2$. Then define $\U \equiv (U_1, U_2)^\top$, whose joint CDF is the copula $C(u_1, u_2)$ with $u_1, u_2 \in [0, 1]$. The lower-tail dependence coefficient is defined as $T_L \equiv \lim_{u \downarrow 0} \Pr(U_1 \leq u \mid U_2 \leq u)$; the upper-tail dependence coefficient is $T_U \equiv \lim_{u \uparrow 1} \Pr(U_1 \geq u \mid U_2 \geq u)$. When the copula (i.e., the CDF of $\U$) is an Archimedean copula with generator $\psi$, there are standard expressions for the tail-dependence coefficients in terms of the generator and its pseudoinverse \citep[e.g.,][]{Nelsen1997}. A copula is said to exhibit lower-tail dependence if $T_L$ is defined and $T_L > 0$; if $T_L = 0$ then there is no lower-tail dependence. The same can be said for upper-tail dependence. 

The next result shows the lower tail-dependence coefficient of the PD copula depends on $\lambda$, while the upper tail-dependence coefficient is non-zero but constant for all $\lambda \in (-\infty, \infty)$.

\begin{theorem}\label{thm:tail_dependence}
    (i) The lower tail-dependence coefficient is,
    $$
    T_L(\lambda) = \begin{cases}
    2^{1/(\lambda+1)} & \lambda < -1,\\
    0 & \lambda \geq -1.
    \end{cases}
    $$
    (ii) The upper tail-dependence coefficient is $T_U(\lambda) = 2 - \sqrt{2} \simeq 0.5858$ for all $\lambda \in (-\infty, \infty)$.
\end{theorem}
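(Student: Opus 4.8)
The plan is to reduce both coefficients to the behaviour of the generator $\phi_\lambda$ and its pseudoinverse near the two endpoints, using the standard Archimedean tail-dependence formulas \citep{Nelsen1997}. Since $C_\lambda(u,u) = \phi_\lambda^{[-1]}(2\phi_\lambda(u))$, the substitution $s = \phi_\lambda(u)$ (so $u = \phi_\lambda^{[-1]}(s)$, with $u\downarrow 0 \iff s\uparrow\phi_\lambda(0)$ and $u\uparrow 1 \iff s\downarrow 0$) turns the definitions into
\[
T_L(\lambda) = \lim_{u\downarrow 0}\frac{C_\lambda(u,u)}{u} = \lim_{s\uparrow \phi_\lambda(0)}\frac{\phi_\lambda^{[-1]}(2s)}{\phi_\lambda^{[-1]}(s)}, \qquad T_U(\lambda) = 2 - \lim_{u\uparrow 1}\frac{1-C_\lambda(u,u)}{1-u} = 2 - \lim_{s\downarrow 0}\frac{1-\phi_\lambda^{[-1]}(2s)}{1-\phi_\lambda^{[-1]}(s)},
\]
where the $T_U$ expression follows from $\Pr(U_1\ge u, U_2\ge u) = 1 - 2u + C_\lambda(u,u) = 2(1-u) - (1 - C_\lambda(u,u))$. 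Everything then comes down to the local behaviour of $\phi_\lambda^{[-1]}$ at $\phi_\lambda(0)$ (for $T_L$) and at $0$ (for $T_U$).

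For part (i) I would split into the three regimes. If $\lambda > -1$ (including $\lambda=0$), then $\phi_\lambda(0) = 1/(\lambda+1) < \infty$ by \eqref{eqn:phi_lambda_zero}; choosing $u^\ast\in(0,1)$ with $\phi_\lambda(u^\ast)=\phi_\lambda(0)/2$ (possible since $\phi_\lambda$ is continuous and strictly decreasing), every $u<u^\ast$ has $2\phi_\lambda(u)>\phi_\lambda(0)$, hence $C_\lambda(u,u)=0$, so $T_L(\lambda)=0$ — equivalently, Theorem~\ref{thm:zero_set} gives a zero set of positive area that (one checks) contains a neighbourhood of the origin. If $\lambda<-1$, the inverse is strict and $\phi_\lambda(0)=\infty$; because $\lambda+1<0$ the term $x^{\lambda+1}$ dominates the bounded terms $-x+\lambda(1-x)$ as $x\downarrow 0$, and $\lambda(\lambda+1)>0$, so $\phi_\lambda(x)\sim x^{\lambda+1}/[\lambda(\lambda+1)]$; inverting yields $\phi_\lambda^{-1}(t)\sim(\lambda(\lambda+1)\,t)^{1/(\lambda+1)}$ as $t\to\infty$, whence $T_L(\lambda)=\lim_{t\to\infty}(2t/t)^{1/(\lambda+1)}=2^{1/(\lambda+1)}\in(0,1)$. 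If $\lambda=-1$, then $\phi_{-1}(x)=x-1-\log x\sim-\log x$ as $x\downarrow 0$, so $\phi_{-1}^{-1}(t)\sim e^{-t}$ (equivalently, from \eqref{eqn:pseudoinverse_neg1}, $\phi_{-1}^{-1}(t) = -\gW_0(-e^{-(t+1)})\sim e^{-(t+1)}$ since $\gW_0(z)\sim z$ near $0$), giving $T_L(-1)=\lim_{t\to\infty}e^{-t}=0$.

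For part (ii) the key observation is that the generator has the same quadratic behaviour at $x=1$ for every $\lambda$: conditions (a)–(c) of Definition~\ref{def:phi_divergence} give $\phi_\lambda(1)=\phi_\lambda'(1)=0$, and from \eqref{eqn:CR_derivative} one gets $\phi_\lambda''(x)=x^{\lambda-1}$ for $\lambda\ne-1,0$, $\phi_0''(x)=1/x$, and $\phi_{-1}''(x)=x^{-2}$, each equal to $1$ at $x=1$. Hence $\phi_\lambda(x)=\tfrac12(1-x)^2+o((1-x)^2)$ as $x\to1$, which inverts to $\phi_\lambda^{[-1]}(s)=1-\sqrt{2s}+o(\sqrt{s})$ as $s\downarrow 0$. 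Substituting into the ratio for $T_U$ gives
\[
\frac{1-\phi_\lambda^{[-1]}(2s)}{1-\phi_\lambda^{[-1]}(s)} = \frac{\sqrt{4s}+o(\sqrt s)}{\sqrt{2s}+o(\sqrt s)} \longrightarrow \sqrt{2}\quad\text{as }s\downarrow 0,
\]
so $T_U(\lambda)=2-\sqrt{2}$ for all $\lambda\in(-\infty,\infty)$.

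I expect the main obstacle to be rigour in the asymptotic steps rather than any conceptual difficulty. For $\lambda<-1$ in part (i) one must control the subleading terms of $\phi_\lambda$ tightly enough to conclude that the ratio converges to \emph{exactly} $2^{1/(\lambda+1)}$: writing $\phi_\lambda(x)=\frac{x^{\lambda+1}}{\lambda(\lambda+1)}(1+g(x))$ with $g(x)\to 0$ as $x\downarrow 0$ and solving explicitly for $x$ makes the error term transparent. For part (ii) one must check that $\phi_\lambda(x)\sim\tfrac12(1-x)^2$ can legitimately be inverted to $\phi_\lambda^{[-1]}(s)=1-\sqrt{2s}(1+o(1))$ and that this one-term expansion suffices to pin down the limiting ratio. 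A cleaner alternative for both steps is L'Hôpital applied directly to the defining ratios: it produces the functional equation $T_L(\lambda)^{\lambda+1}=2$ for $\lambda<-1$, and, using $(\phi_\lambda^{[-1]})'(s)=1/\phi_\lambda'(\phi_\lambda^{[-1]}(s))$ together with $\phi_\lambda''(1)=1$, it pins down $T_U$; but this route still requires the preliminary fact that the limits exist, so I would present the asymptotic-equivalence argument as primary and note L'Hôpital as a shortcut.
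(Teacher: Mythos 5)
Your argument is correct, but it follows a genuinely different route from the paper's. You work from the classical diagonal-ratio formulas of \citet{Nelsen1997}, $T_L=\lim_{s\uparrow\phi_\lambda(0)}\phi_\lambda^{[-1]}(2s)/\phi_\lambda^{[-1]}(s)$ and $T_U=2-\lim_{s\downarrow 0}\bigl(1-\phi_\lambda^{[-1]}(2s)\bigr)/\bigl(1-\phi_\lambda^{[-1]}(s)\bigr)$, and supply the needed local behaviour of the pseudoinverse by inverting one-term asymptotic expansions of the generator at $x=0$ and $x=1$; the paper instead deliberately avoids the pseudoinverse (which has no closed form for general $\lambda$) and computes the \citet{Charpentier2009} regular-variation indices $\nu_0(\lambda)=-\lim_{s\downarrow 0}s\phi_\lambda'(s)/\phi_\lambda(s)$ and $\nu_1(\lambda)=-\lim_{s\downarrow 0}s\phi_\lambda'(1-s)/\phi_\lambda(1-s)$ by L'H\^opital, then reads off $T_L=2^{-1/\nu_0}$ and $T_U=2-2^{1/\nu_1}$. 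The two approaches extract the same information --- the index $\lambda+1$ of regular variation of $\phi_\lambda$ at $0$ and the quadratic order of vanishing at $1$ (your observation that $\phi_\lambda''(1)=1$ uniformly in $\lambda$ is exactly why $\nu_1\equiv 2$ in the paper) --- but the Charpentier--Segers route packages the existence of the limits and the inversion step into a citable theorem, whereas you must justify the inversion of the asymptotic equivalences yourself. You do this adequately: the step from $\phi_\lambda(x)\sim\frac12(1-x)^2$ to $1-\phi_\lambda^{-1}(s)\sim\sqrt{2s}$ is legitimate because $\phi_\lambda^{-1}(s)\to 1$ as $s\downarrow 0$ and one merely divides the equivalence through, and likewise for the $x^{\lambda+1}$ term at the origin when $\lambda<-1$ (where $\lambda(\lambda+1)>0$ keeps the leading coefficient positive). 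Your separate treatments of $\lambda>-1$ (finite $\phi_\lambda(0)$ forces $C_\lambda(u,u)=0$ near the origin) and $\lambda=-1$ (logarithmic growth, via $\gW_0(z)\sim z$) match the case split in the paper and are correct. The one point worth making explicit if you write this up is that for $\lambda>-1$ the pseudoinverse is non-strict, but the upper-tail expansion only uses $\phi_\lambda^{[-1]}$ on a right neighbourhood of $s=0$, where it coincides with the genuine inverse, so the non-strictness is harmless there.
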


\noindent These tail-dependence coefficients are derived using the results of \citet{Charpentier2009} instead of the usual expressions in terms of the pseudoinverse of the generator originally given by \citet{Nelsen1997}; see Supplement S7.

The tail-dependence coefficients of the PD copulas suggest that the subfamily with $\lambda > -1$ could be useful for modelling phenomena with moderate upper tail dependence and an inherent restriction on the co-occurrence of small values of two random variables, $X_1$ and $X_2$. On the other hand, the subfamily of PD copulas with $\lambda \leq -1$ could be useful for modelling phenomena with moderate upper tail dependence and a wide range of lower tail-dependence behaviours. 

\subsection{PD copulas in $d \geq 3$ dimensions}\label{sec:multivariate}

It is possible to define PD copulas in $d \geq 3$ dimensions for some values of $\lambda$ but not for others. We consider three cases: $\lambda > 0$, $-1 < \lambda \leq 0$, and $\lambda \leq -1$. 

For the case where $\lambda > 0$, recall that, in order for $\phi_{\lambda}^{[-1]}(t)$ to be $3$-monotone over $t \in [0, \infty)$, it is first necessary that the derivative $(\phi_{\lambda}^{[-1]})'(t)$ exists everywhere on $t \in (0, \infty)$ \citep{McNeil2009}. The next theorem shows it does not. In particular, it fails to exist at $t = 1/(\lambda + 1)$. See Supplement S7 for the details.

\begin{lemma}\label{lem:no_derivative_at_phi0}
    Let $\lambda > 0$. Then the derivative of $\phi^{[-1]}_\lambda(t)$, $t \in [0, \infty)$, does not exist at $t = \phi_\lambda(0) = 1/(\lambda + 1)$.
\end{lemma}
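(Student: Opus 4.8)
Here is the strategy. Write $t_0 \equiv 1/(\lambda+1)$; since $\lambda > 0$ we have $\lambda \neq -1,0$ and $\lambda+1 > 0$, so \eqref{eqn:phi_lambda_zero} gives $\phi_\lambda(0) = t_0 \in (0,\infty)$, which is an interior point of the domain $[0,\infty)$ of $\phi_\lambda^{[-1]}$. By Definition~\ref{def:generator} and \eqref{eqn:PD_copula_defn}, $\phi_\lambda^{[-1]}(t) = 0$ for every $t \geq t_0$. Hence the right-hand derivative of $\phi_\lambda^{[-1]}$ at $t_0$ exists and equals $0$. The plan is therefore to show that the left-hand derivative at $t_0$ exists and is \emph{not} $0$, so the two one-sided derivatives disagree and $\phi_\lambda^{[-1]}$ fails to be differentiable at $t_0$.

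To get the left-hand derivative I would argue directly with the difference quotient. First, a one-line simplification of \eqref{eqn:CR_phi_function} gives, for $x \in (0,1]$,
\[
\phi_\lambda(0) - \phi_\lambda(x) \;=\; \frac{x}{\lambda} \;-\; \frac{x^{\lambda+1}}{\lambda(\lambda+1)} \;=\; \frac{x}{\lambda}\Bigl(1 - \tfrac{x^{\lambda}}{\lambda+1}\Bigr).
\]
Now fix $h \in (0, t_0)$ and set $x_h \equiv \phi_\lambda^{[-1]}(t_0 - h) \in (0,1)$, so that $h = \phi_\lambda(0) - \phi_\lambda(x_h)$ and $\phi_\lambda^{[-1]}(t_0) = 0$. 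Continuity of the pseudoinverse at the endpoint — namely $\lim_{t \uparrow t_0}\phi_\lambda^{[-1]}(t) = 0$, recorded just before Proposition~\ref{prop:unique_solution} — gives $x_h \to 0^+$ as $h \to 0^+$. Substituting the identity above,
\[
\frac{\phi_\lambda^{[-1]}(t_0) - \phi_\lambda^{[-1]}(t_0 - h)}{h} \;=\; \frac{-x_h}{\phi_\lambda(0) - \phi_\lambda(x_h)} \;=\; \frac{-\lambda}{1 - x_h^{\lambda}/(\lambda+1)} \;\longrightarrow\; -\lambda
\]
as $h \to 0^+$, since $x_h^{\lambda} \to 0$ (this is where $\lambda > 0$ is used). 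So the left-hand derivative of $\phi_\lambda^{[-1]}$ at $t_0$ equals $-\lambda$, which is strictly negative, hence nonzero. Combined with the right-hand derivative being $0$, this shows the derivative of $\phi_\lambda^{[-1]}$ does not exist at $t = t_0 = \phi_\lambda(0) = 1/(\lambda+1)$, proving the lemma.

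The only point that needs care is the passage to the limit in the left-hand difference quotient: one must know $x_h \to 0^+$ (continuity of the pseudoinverse at the endpoint, already available) and that $x_h$ stays strictly positive so the algebraic identity for $\phi_\lambda(0) - \phi_\lambda(x_h)$ is valid; both are immediate. If one prefers to avoid the difference quotient, an equivalent route is to apply the inverse-function theorem on $(0,t_0)$ together with $\phi_\lambda'(x) \to -1/\lambda$ as $x \to 0^+$ (from \eqref{eqn:CR_derivative}), which yields $(\phi_\lambda^{-1})'(t) \to -\lambda$ as $t \uparrow t_0$; the standard theorem that a function continuous at an endpoint whose derivative has a finite one-sided limit there has that limit as its one-sided derivative then gives the left-hand derivative $-\lambda$, and the conclusion follows as before.
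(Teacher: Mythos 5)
Your proof is correct, and it reaches the same two numbers as the paper: left-hand slope $-\lambda$, right-hand slope $0$ at $t_0=1/(\lambda+1)$. The route differs slightly, and in a way worth noting. The paper computes $(\phi_\lambda^{-1})'(t)=\lambda/(\phi_\lambda^{-1}(t)^\lambda-1)$ on $(0,t_0)$ via the Inverse Function Theorem and then observes that $\lim_{t\uparrow t_0}(\phi_\lambda^{[-1]})'(t)=-\lambda$ while $\lim_{t\downarrow t_0}(\phi_\lambda^{[-1]})'(t)=0$, concluding nondifferentiability from the mismatch of these \emph{limits of the derivative}; strictly speaking that inference requires the standard (mean-value-theorem) fact that a one-sided limit of the derivative at a point of continuity equals the one-sided derivative there, which the paper leaves implicit and which you explicitly flag in your closing remark. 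Your primary argument instead evaluates the left-hand difference quotient directly, using the identity $\phi_\lambda(0)-\phi_\lambda(x)=\tfrac{x}{\lambda}\bigl(1-\tfrac{x^{\lambda}}{\lambda+1}\bigr)$ and the continuity $x_h\to 0^+$, so it computes the one-sided derivative itself and is self-contained at that step. Both proofs use $\lambda>0$ in the same essential place, namely $x^{\lambda}\to 0$ as $x\to 0^+$. In short: same idea, with your difference-quotient version being marginally more airtight and the paper's IFT version being marginally shorter.
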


\noindent This immediately implies the following. 

\begin{theorem}\label{thm:no_pd_copulas_d_geq_3_lambda_above_0}
    For $\lambda > 0$ and $u_1, ..., u_d \in [0,1]$, $\phi^{[-1]}_\lambda(\phi_\lambda(u_1) + \cdots + \phi_\lambda(u_d))$ is not a valid copula for any $d \geq 3$.  
\end{theorem}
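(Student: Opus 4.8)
The plan is to reduce the statement to a one-line consequence of Lemma \ref{lem:no_derivative_at_phi0}, using the characterisation of $d$-monotone generators recalled in Section \ref{sec:multivariate_Archimedean}. By \citet{McNeil2009}, for a fixed $d \geq 3$ the function $\phi_\lambda^{[-1]}(\phi_\lambda(u_1) + \cdots + \phi_\lambda(u_d))$ is a valid $d$-dimensional copula if and only if the pseudoinverse $\phi_\lambda^{[-1]}$ is $d$-monotone on $[0,\infty)$. So it suffices to show that $\phi_\lambda^{[-1]}$ is not $d$-monotone on $[0,\infty)$ for any $d \geq 3$.

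The key step is to observe that $d$-monotonicity on $[0,\infty)$, for $d \geq 3$, demands that $\phi_\lambda^{[-1]}(t)$ possess $d - 2 \geq 1$ derivatives over the interior $(0,\infty)$; in particular, $\phi_\lambda^{[-1]}$ must be differentiable at \textit{every} point of $(0,\infty)$ (equivalently, one may invoke the necessary condition, stated in Section \ref{sec:multivariate_Archimedean}, that the pseudoinverse be continuously differentiable on $(0,\infty)$). Since $\lambda > 0$, equation \eqref{eqn:phi_lambda_zero} gives $\phi_\lambda(0) = 1/(\lambda + 1) \in (0,1)$, so the point $t = 1/(\lambda+1)$ lies in the interior $(0,\infty)$ of the domain of $\phi_\lambda^{[-1]}$. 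But Lemma \ref{lem:no_derivative_at_phi0} asserts precisely that $\phi_\lambda^{[-1]}$ fails to be differentiable at $t = 1/(\lambda+1)$. Hence $\phi_\lambda^{[-1]}$ cannot be $d$-monotone for any $d \geq 3$, and the theorem follows.

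There is no real obstacle remaining once Lemma \ref{lem:no_derivative_at_phi0} is granted: the entire substance is the non-smoothness of the pseudoinverse at the threshold where it switches from the genuine inverse branch to the identically-zero branch, which is exactly what Lemma \ref{lem:no_derivative_at_phi0} establishes. The only care needed in the present deduction is to confirm that this threshold, $t = 1/(\lambda+1)$, is an \textit{interior} point of $[0,\infty)$ (so that the kink genuinely obstructs $d$-monotonicity, rather than being a harmless boundary feature at $t = 0$), and this is immediate from \eqref{eqn:phi_lambda_zero} because $\lambda > 0$.
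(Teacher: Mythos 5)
Your proof is correct and follows essentially the same route as the paper: invoke the McNeil--Ne\v{s}lehov\'a characterisation of $d$-dimensional Archimedean copulas via $d$-monotonicity of the pseudoinverse, note that $d$-monotonicity for $d \geq 3$ requires differentiability of $\phi_\lambda^{[-1]}$ throughout $(0,\infty)$, and conclude from Lemma \ref{lem:no_derivative_at_phi0} that this fails at $t = 1/(\lambda+1)$. Your added check that this point is interior to $[0,\infty)$ is a harmless (and sensible) extra sentence, but the argument is otherwise identical to the paper's.
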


The situation is different (and more complicated) when $-1 < \lambda \leq 0$. Like the $\lambda > 0$ case, the inverse is non-strict, and $\phi_\lambda(0) = 1/(\lambda + 1)$. However, unlike the previous case, the following result guarantees that the derivative $(\phi^{[-1]}_{\lambda})'(t)$ exists for all $t \in [0, \infty)$, even at $t = 1/(\lambda + 1)$, and that $-(\phi^{[-1]}_{\lambda})'(t) \geq 0$ for all $t \in (0, \infty)$. 

\begin{lemma}\label{lem:derivative_exists}
    For $-1 < \lambda \leq 0$, the pseudoinverse $\phi_\lambda^{[-1]}(t)$ is differentiable over all $t \in (0, \infty)$, and the derivative $(\phi_\lambda^{[-1]})'(1/(\lambda + 1)) = 0$. Further, $-(\phi^{[-1]}_{\lambda})'(t) \geq 0$.
\end{lemma}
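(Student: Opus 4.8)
The plan is to split the domain $(0,\infty)$ at the junction point $t^\ast \equiv \phi_\lambda(0) = 1/(\lambda+1)$, which is finite for $\lambda > -1$ by \eqref{eqn:phi_lambda_zero}, and treat the three regions $(0, t^\ast)$, $\{t^\ast\}$, and $(t^\ast,\infty)$ in turn. Throughout I would use that $\phi_\lambda$ is strictly decreasing, convex, and continuously differentiable on $(0,\infty)$ (Section \ref{sec:CR_pd}), and that, since the inverse is non-strict here, $\phi_\lambda^{[-1]}(t) = \phi_\lambda^{-1}(t)$ for $t \in [0, t^\ast)$ and $\phi_\lambda^{[-1]}(t) = 0$ for $t \geq t^\ast$.

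On the open interval $(0, t^\ast)$ I would invoke the inverse function theorem. For such $t$, $x \equiv \phi_\lambda^{-1}(t) \in (0,1)$, and by \eqref{eqn:CR_derivative} we have $\phi_\lambda'(x) = \lambda^{-1}(x^\lambda - 1) < 0$ when $\lambda \in (-1,0)$ and $\phi_0'(x) = \log x < 0$ when $\lambda = 0$; in either case $\phi_\lambda'(x) \neq 0$, so $\phi_\lambda^{-1}$ is differentiable at $t$ with $(\phi_\lambda^{[-1]})'(t) = 1/\phi_\lambda'(\phi_\lambda^{-1}(t)) < 0$. This simultaneously yields differentiability on $(0, t^\ast)$ and the inequality $-(\phi_\lambda^{[-1]})'(t) > 0$ there. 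On $(t^\ast,\infty)$ the function is identically zero, hence differentiable with derivative $0$, and the inequality holds trivially.

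The one point needing care is $t^\ast$ itself. The right-hand derivative there is $0$ because $\phi_\lambda^{[-1]}$ is constant on $[t^\ast,\infty)$. For the left-hand derivative I would combine continuity of $\phi_\lambda^{[-1]}$ at $t^\ast$ (the pseudoinverse is continuous, with $\phi_\lambda^{-1}(t) \to 0$ as $t \uparrow t^\ast$, as noted just before Proposition \ref{prop:unique_solution}) with the derivative computed on $(0,t^\ast)$: as $t \uparrow t^\ast$ we have $x = \phi_\lambda^{-1}(t) \downarrow 0$, and by \eqref{eqn:CR_derivative} $\phi_\lambda'(x) \to -\infty$ (for $\lambda \in (-1,0)$, $x^\lambda \to \infty$ so $\lambda^{-1}(x^\lambda - 1) \to -\infty$; for $\lambda = 0$, $\log x \to -\infty$), whence $(\phi_\lambda^{[-1]})'(t) = 1/\phi_\lambda'(\phi_\lambda^{-1}(t)) \to 0$. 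The standard consequence of the mean value theorem (a function continuous at an endpoint and differentiable on the adjacent open interval, with derivative tending to a finite limit $L$, has one-sided derivative $L$ at the endpoint) then gives that the left-hand derivative of $\phi_\lambda^{[-1]}$ at $t^\ast$ exists and equals $0$. Hence both one-sided derivatives at $t^\ast$ are $0$, so $\phi_\lambda^{[-1]}$ is differentiable at $t^\ast$ with $(\phi_\lambda^{[-1]})'(t^\ast) = 0$. Assembling the three regions gives differentiability on all of $(0,\infty)$ together with $-(\phi_\lambda^{[-1]})'(t) \geq 0$ throughout.

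I expect the gluing at $t^\ast$ to be the only genuine obstacle: specifically, verifying that $|\phi_\lambda'(x)| \to \infty$ as $x \downarrow 0$ so that the slope of $\phi_\lambda^{-1}$ flattens to zero at the edge of its domain, and then promoting that limit-of-derivatives to an actual one-sided derivative via the mean value theorem. Everything else is a direct application of the inverse function theorem together with the explicit formulas \eqref{eqn:CR_derivative} and \eqref{eqn:phi_lambda_zero}.
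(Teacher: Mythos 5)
Your proof is correct, and its skeleton matches the paper's: split $(0,\infty)$ at $t^\ast = \phi_\lambda(0) = 1/(\lambda+1)$, apply the inverse function theorem on $(0,t^\ast)$ to get $(\phi_\lambda^{[-1]})'(t) = 1/\phi_\lambda'(\phi_\lambda^{-1}(t)) < 0$, note the function is identically zero beyond $t^\ast$, and match the two one-sided behaviours at $t^\ast$ by showing the interior derivative tends to $0$ as $t \uparrow t^\ast$ (because $\phi_\lambda^{-1}(t)\downarrow 0$ and $\phi_\lambda'(x)\to-\infty$ as $x\downarrow 0$). Two differences are worth recording. First, the paper treats $\lambda = 0$ as a separate case via the closed-form pseudoinverse \eqref{eqn:pseudoinverse_0} and the derivative $1/(1+\gW_{-1}((t-1)/\exp\{1\}))$, whereas you fold it into the same IFT computation using $\phi_0'(x) = \log x$; your route is more uniform and avoids the Lambert W machinery entirely. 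Second, you explicitly invoke the mean-value-theorem lemma (continuity at the endpoint plus a finite limit of the derivative on the adjacent open interval implies the one-sided derivative exists and equals that limit) to upgrade $\lim_{t\uparrow t^\ast}(\phi_\lambda^{[-1]})'(t) = 0$ to an actual left-hand derivative at $t^\ast$. The paper passes from ``the two one-sided limits of the derivative agree'' directly to ``the derivative exists at $t^\ast$'' without stating this step; your version supplies the justification that makes that inference valid. Both arguments reach the same conclusion, but yours is the tighter write-up of the junction point.
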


\noindent Yet $\phi^{[-1]}_{\lambda}(t)$ is not necessarily $3$-monotone for all $-1 < \lambda \leq 0$. Per \citet{McNeil2009}, $-(\phi^{[-1]}_{\lambda})'(t)$ must also be convex over $t \in [0, \infty)$ in order for $\phi_\lambda^{[-1]}(t)$ to be $3$-monotone. The next result demonstrates that $-(\phi^{[-1]}_\lambda)'(t)$ fails to be convex on $t \in [0, \infty)$ unless $\lambda \leq -0.5$.

\begin{lemma}\label{lem:fails_to_be_convex}
    For $-1 < \lambda \leq 0$, the function $-(\phi^{[-1]}_\lambda)'(t)$ is non-increasing on $t \in [0, \infty)$. However, while the  function $-(\phi^{[-1]}_\lambda)'(t)$ is convex on $t \in [0, \infty)$ for $-1 < \lambda \leq -0.5$, it fails to be convex for some values of $t$ when $-0.5 < \lambda \leq 0$.
\end{lemma}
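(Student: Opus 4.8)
The plan is to parametrise by $y \equiv \phi^{[-1]}_\lambda(t) \in (0,1]$, use the inverse-function relation to compute $g(t) \equiv -(\phi^{[-1]}_\lambda)'(t)$ and its $t$-derivatives, and reduce both claims to the sign of a single factor. I would treat $-1 < \lambda < 0$ first (this already includes $\lambda = -\tfrac12$) and return to $\lambda = 0$ separately at the end. Recall from Section \ref{sec:defn_pd_copulas} that, setting $t^\star \equiv \phi_\lambda(0) = 1/(\lambda+1) < \infty$, the pseudoinverse $\phi^{[-1]}_\lambda$ coincides with $\phi_\lambda^{-1}$ on $[0,t^\star)$ and vanishes on $[t^\star,\infty)$, and that Lemma \ref{lem:derivative_exists} already gives differentiability on $(0,\infty)$ with $(\phi^{[-1]}_\lambda)'(t^\star) = 0$. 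On $(0,t^\star)$ the generator $\phi_\lambda$ is $C^\infty$ with non-vanishing derivative $\phi_\lambda'(x) = \lambda^{-1}(x^\lambda - 1)$, so $\phi^{[-1]}_\lambda$ is $C^\infty$ there and the differentiations below are legitimate.

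First I would record the consequences of the inverse-function theorem. For $t \in (0,t^\star)$ with $y = \phi^{[-1]}_\lambda(t)$, one has $\mathrm{d}y/\mathrm{d}t = 1/\phi_\lambda'(y) = -g(t)$ and $g(t) = -\lambda/(y^\lambda - 1) = |\lambda|/(y^\lambda - 1) > 0$ (since $y \in (0,1)$ and $\lambda < 0$ force $y^\lambda > 1$), while $g \equiv 0$ on $(t^\star,\infty)$; because $y^\lambda \to \infty$ as $y \downarrow 0$, $g$ is also continuous at $t^\star$. A short differentiation gives $\mathrm{d}g/\mathrm{d}y = y^{\lambda-1}g^2$, hence $\mathrm{d}g/\mathrm{d}t = -y^{\lambda-1}g^3 < 0$ on $(0,t^\star)$. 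This already settles the first assertion: $g$ is strictly decreasing on $(0,t^\star)$ and constant on $[t^\star,\infty)$, hence non-increasing on $(0,\infty)$, and on $[0,\infty)$ under the convention $g(0) = +\infty$ (which is forced by $\phi_\lambda'(1) = 0$). Differentiating once more, substituting $\mathrm{d}y/\mathrm{d}t = -g$ and $g = |\lambda|/(y^\lambda-1)$, and collecting terms, I expect to obtain
\begin{equation*}
\frac{\mathrm{d}^2 g}{\mathrm{d}t^2} \;=\; y^{\lambda-2}\, g^4\, \frac{(1-\lambda) - (2\lambda+1)\,y^\lambda}{y^\lambda - 1}, \qquad t \in (0,t^\star).
\end{equation*}
Since $y^{\lambda-2} > 0$, $g^4 > 0$ and $y^\lambda - 1 > 0$ throughout, the sign of $\mathrm{d}^2 g/\mathrm{d}t^2$ equals the sign of $h(y) \equiv (1-\lambda) - (2\lambda+1)y^\lambda$.

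It then remains to analyse $h$. If $-1 < \lambda \le -\tfrac12$, then $2\lambda + 1 \le 0$ while $y^\lambda > 0$, so $h(y) \ge 1 - \lambda > 0$ and $g$ is convex on $(0,t^\star)$; since $g \equiv 0$ on $[t^\star,\infty)$, $g$ is continuous at $t^\star$, and $g'_-(t^\star) \le 0 = g'_+(t^\star)$ (because $g' < 0$ on $(0,t^\star)$), the two pieces paste together into a convex function on $(0,\infty)$, i.e. $g$ is convex on $[0,\infty)$. If $-\tfrac12 < \lambda < 0$, then $2\lambda + 1 > 0$ and, as $t \uparrow t^\star$ (equivalently $y \downarrow 0$), $y^\lambda \to \infty$, so $h(y) \to -\infty$; thus $\mathrm{d}^2 g/\mathrm{d}t^2 < 0$ on a left neighbourhood of $t^\star$ and $g$ is not convex there. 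Finally, for $\lambda = 0$ the same scheme with $\phi_0(x) = 1 - x + x\log x$, $\phi_0'(x) = \log x$ and $t^\star = 1$ yields $g(t) = -1/\log y$, $\mathrm{d}g/\mathrm{d}t = 1/(y(\log y)^3) < 0$ on $(0,1)$, and $\mathrm{d}^2 g/\mathrm{d}t^2 = -(\log y + 3)/(y^2(\log y)^5)$, whose sign is that of $\log y + 3$: positive when $y$ is near $1$ (i.e. $t$ near $0$) but negative when $y$ is near $0$ (i.e. $t$ near $1$), so $g$ is again not convex. Together these cases prove the lemma on the full range $-1 < \lambda \le 0$.

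I expect the main obstacle to be the second differentiation: showing that $\mathrm{d}^2 g/\mathrm{d}t^2$ collapses to the product of the manifestly positive factor $y^{\lambda-2}g^4/(y^\lambda-1)$ with the expression $(1-\lambda) - (2\lambda+1)y^\lambda$, which is affine in $y^\lambda$, is where both the threshold $\lambda = -\tfrac12$ and the ``convex near $t=0$, concave near $t = t^\star$'' picture for $-\tfrac12 < \lambda \le 0$ become visible, and it has to be pushed through carefully. A secondary technical point is the junction at $t = t^\star$: one must check continuity of $g$ there together with $g'_-(t^\star) \le g'_+(t^\star)$, which is exactly what promotes convexity on $(0,t^\star)$ to convexity on all of $[0,\infty)$ in the range $-1 < \lambda \le -\tfrac12$.
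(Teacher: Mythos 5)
Your proposal is correct and follows essentially the same route as the paper's proof: both use the inverse-function theorem to express the derivatives of $\phi_\lambda^{[-1]}$ in terms of $y = \phi_\lambda^{-1}(t)$ and reduce convexity to the sign of a factor affine in $y^\lambda$ (your $(1-\lambda)-(2\lambda+1)y^\lambda$ is algebraically identical to the paper's $(1+\gamma)y^\gamma + 2\gamma - 1$ after substituting $\lambda = -\gamma$), with the threshold at $\lambda = -\tfrac12$ emerging in the same way. Two points where you are actually more careful than the paper: you establish non-convexity at $\lambda = 0$ analytically via the sign of $\log y + 3$, where the paper resorts to ``simply plotting'', and you explicitly verify the convex pasting condition $g'_-(t^\star) \le 0 = g'_+(t^\star)$ at the junction $t^\star = 1/(\lambda+1)$, which the paper passes over.
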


\noindent As a corollary, all pseudoinverses in the set $\{\phi_{\lambda}^{-1}: -1 < \lambda \leq -0.5\}$ are (at least) $3$-monotone on $t \in [0, \infty)$. Then we obtain the following result.

\begin{theorem}\label{thm:three_monotone}
    For $-1 < \lambda \leq -0.5$ and $u_1, u_2, u_3 \in [0,1]$, $C_\lambda(u_1, u_2, u_3) \equiv \phi^{[-1]}_\lambda(\phi_\lambda(u_1) + \phi_\lambda(u_2) + \phi_\lambda(u_3))$ is a valid copula in three dimensions. 
\end{theorem}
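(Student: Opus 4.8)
The plan is to deduce Theorem \ref{thm:three_monotone} directly from the characterisation of $3$-monotone generators given by \citet{McNeil2009}, combined with the three lemmas immediately preceding it. Recall that, for a fixed $d = 3$, the generator $\psi = \phi_\lambda$ produces a valid trivariate Archimedean copula if and only if the pseudoinverse $\psi^{[-1]} = \phi_\lambda^{[-1]}$ is $3$-monotone on $[0,\infty)$, which in this case means: (i) $\phi_\lambda^{[-1]}(t) \geq 0$ for all $t \in (0,\infty)$; (ii) $\phi_\lambda^{[-1]}$ has a derivative on $(0,\infty)$ with $-(\phi_\lambda^{[-1]})'(t) \geq 0$; and (iii) $-(\phi_\lambda^{[-1]})'(t)$ (equivalently $(-1)^{d-2}(\phi_\lambda^{[-1]})^{(d-2)}(t)$ with $d=3$) is convex and non-increasing on $[0,\infty)$. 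So the entire argument reduces to verifying (i)--(iii) for every $\lambda$ in the range $(-1, -0.5]$.

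First I would dispatch (i): since $\phi_\lambda^{[-1]}: [0,\infty) \to [0,1]$ by construction (it is a pseudoinverse of a generator, cf. Definition \ref{def:generator}), non-negativity is immediate — indeed $\phi_\lambda^{[-1]}(t) \in [0,1] \subset [0,\infty)$ for all $t$. Next, (ii) is exactly the content of Lemma \ref{lem:derivative_exists}: for $-1 < \lambda \leq 0$ — and in particular for $-1 < \lambda \leq -0.5$ — the pseudoinverse $\phi_\lambda^{[-1]}$ is differentiable on all of $(0,\infty)$ (including the potentially problematic point $t = \phi_\lambda(0) = 1/(\lambda+1)$, where the derivative equals $0$), and $-(\phi_\lambda^{[-1]})'(t) \geq 0$ there. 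Finally, (iii) splits into two halves: monotonicity of $-(\phi_\lambda^{[-1]})'$ and its convexity. Lemma \ref{lem:fails_to_be_convex} supplies both — it asserts that $-(\phi_\lambda^{[-1]})'(t)$ is non-increasing on $[0,\infty)$ for all $-1 < \lambda \leq 0$, and that it is convex on $[0,\infty)$ precisely when $-1 < \lambda \leq -0.5$. Assembling (i), (ii), and (iii), we conclude that $\phi_\lambda^{[-1]}$ is $3$-monotone on $[0,\infty)$ for every $\lambda \in (-1, -0.5]$, and hence, by the McNeil--Nešlehová criterion, $C_\lambda(u_1, u_2, u_3) = \phi_\lambda^{[-1]}(\phi_\lambda(u_1) + \phi_\lambda(u_2) + \phi_\lambda(u_3))$ is a valid copula in three dimensions for all such $\lambda$.

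Since the three lemmas are assumed as given, the proof of Theorem \ref{thm:three_monotone} is essentially an assembly step, and there is no genuine obstacle at this level — the real work has already been done in establishing Lemmas \ref{lem:derivative_exists} and \ref{lem:fails_to_be_convex}. The only point requiring a little care is making sure the $d$-monotonicity definition is invoked with the correct index ($d = 3$, so we need $(d-2) = 1$ derivative, and the first derivative — suitably signed — must be convex and non-increasing), and checking that the boundary point $t = 1/(\lambda+1)$ is handled: here Lemma \ref{lem:derivative_exists} is exactly what guarantees that differentiability does not fail at that junction, which is the subtlety that distinguishes the $-1 < \lambda \leq 0$ regime from the $\lambda > 0$ regime ruled out in Theorem \ref{thm:no_pd_copulas_d_geq_3_lambda_above_0} via Lemma \ref{lem:no_derivative_at_phi0}. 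One could optionally remark that the result is sharp in the sense that Lemma \ref{lem:fails_to_be_convex} shows $3$-monotonicity genuinely fails for $-0.5 < \lambda \leq 0$, so the cutoff at $\lambda = -0.5$ cannot be relaxed, and that the case $\lambda \leq -1$ (strict inverse, completely monotone, handled elsewhere) already covers all dimensions.
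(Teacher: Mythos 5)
Your proposal is correct and follows essentially the same route as the paper, which likewise deduces the theorem directly from Lemmas \ref{lem:derivative_exists} and \ref{lem:fails_to_be_convex} together with the $3$-monotonicity criterion of \citet{McNeil2009} (their Proposition 2.3). You simply spell out the assembly of the three conditions that the paper's one-line proof leaves implicit.
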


\noindent On the other hand, PD copulas with $\lambda > -0.5$ are restricted to two dimensions only. 

For $\lambda \leq -1$, the inverses $\{\phi_\lambda^{-1}: \lambda \leq -1\}$ are strict and, in fact, the inverses have derivatives of all orders on $t \in (0,\infty)$. This follows from the Inverse Function Theorem since, by inspection, $\phi_{\lambda}(x)$ in \eqref{eqn:CR_phi_function} is infinitely differentiable on $x \in (0, \infty)$ when $\lambda \leq -1$. The only question is whether the derivatives of the inverse satisfy $(-1)^k(\phi_{\lambda}^{-1})^{(k)}(t) \geq 0$ for all $k = 1, 2, ...$, or the weaker condition of $d$-monotonicity for some $d \geq 3$. Regarding $d$-monotonicity, Lemma \ref{lem:fails_to_be_convex} and its proof in Supplement S7 can be extended straightforwardly to $\lambda < -1$, so $\phi_{\lambda}^{-1}(t)$ is at least $3$-monotone on $t \in [0,\infty)$. Although Lemma \ref{lem:fails_to_be_convex} does not cover $\lambda = -1$, we show below that a stronger result holds for this case anyway. 

As for complete monotonicity, a general result for all inverses with $\lambda \leq -1$ remains elusive. However, the result below shows that $\phi_{-1}^{-1}(t)$ and $\phi_{-2}^{-1}(t)$ are completely monotone on $t \in [0, \infty)$. Part (i) of Lemma \ref{lem:pd_copulas_completely_monotone} follows from properties of the principal branch of the Lambert W function \citep[due to, e.g.,][]{Kalugin2012}. Our proof of part (ii)  uses Fa\`{a} di Bruno's formula and a property of incomplete exponential Bell polynomials \citep{Bell1934, Comtet1974}. See  Supplement S6 and S7 for details.
    
\begin{lemma}\label{lem:pd_copulas_completely_monotone}
    (i) Let $\lambda = -1$. The strict inverse $\phi^{-1}_{-1}(t)$ in \eqref{eqn:pseudoinverse_neg1} is completely monotone on $t\in [0, \infty)$. (ii) Let $\lambda = -2$. The strict inverse $\phi^{-1}_{-2}(t)$ is completely monotone on $t\in [0, \infty)$.
\end{lemma}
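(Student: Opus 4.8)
\emph{Proof plan.} For both parts the plan is to exhibit the strict inverse as a series $\sum_n a_n\rho_n(t)$ with all coefficients $a_n\ge 0$ and all $\rho_n$ completely monotone on $[0,\infty)$, and then to differentiate term by term: this gives $(-1)^k(\phi_\lambda^{-1})^{(k)}(t)=\sum_n a_n(-1)^k\rho_n^{(k)}(t)\ge 0$ on $(0,\infty)$, which, together with continuity at $t=0$, is complete monotonicity in the sense of Section~\ref{sec:multivariate_Archimedean}. The first thing I would pin down are the closed forms of the two inverses: for $\lambda=-1$, formula \eqref{eqn:pseudoinverse_neg1}; for $\lambda=-2$, the observation that \eqref{eqn:zeros_of_pseudopolynomial} collapses to the quadratic $x^2-2(t+1)x+1=0$ (the $z=1$ case of the polynomial reductions in Section~\ref{sec:defn_pd_copulas}), whose two roots are reciprocal and straddle $1$, so that $\phi_{-2}^{-1}(t)=(t+1)-\sqrt{(t+1)^2-1}$ for $t\in[0,\infty)$.

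\emph{Part (i).} I would recognise $-\gW_0(-x)$ as the tree function and use its classical expansion $-\gW_0(-x)=\sum_{n\ge1}\frac{n^{n-1}}{n!}x^n$, valid for $|x|\le e^{-1}$. Substituting $x=e^{-(t+1)}\in(0,e^{-1}]$ yields
\begin{equation*}
\phi_{-1}^{-1}(t)=\sum_{n\ge1}\frac{n^{n-1}}{n!\,e^{n}}\,e^{-nt},
\end{equation*}
a series with strictly positive coefficients whose building blocks $e^{-nt}$ are completely monotone on $[0,\infty)$. I would check that it converges at $t=0$ (to $1$) via Stirling, which gives $n^{n-1}/(n!\,e^{n})=O(n^{-3/2})$, and that for $t>0$ it may be differentiated term by term on $(0,\infty)$ — immediate, since $\sum_n n^{n-1}/(n!\,e^n)<\infty$ and each $e^{-nt}$ together with its derivatives decays geometrically away from the origin. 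This gives $(-1)^k(\phi_{-1}^{-1})^{(k)}(t)=\sum_{n\ge1}\frac{n^{n-1}}{n!\,e^n}n^k e^{-nt}\ge 0$ for all $k\ge0$ and $t>0$, and continuity at $0$ closes the argument. (The Lambert-$W$ integral representations of \citet{Kalugin2012} would give the same conclusion, but the series route is self-contained.)

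\emph{Part (ii).} Here I would put $s:=t+1\ge1$ and expand the square root: $s-\sqrt{s^2-1}=s\bigl(1-\sqrt{1-s^{-2}}\bigr)=s\sum_{k\ge1}b_k s^{-2k}=\sum_{k\ge1}b_k(t+1)^{-(2k-1)}$, where $1-\sqrt{1-z}=\sum_{k\ge1}b_k z^k$ is the binomial series of the square root, all of whose coefficients $b_k$ are strictly positive. Each $(t+1)^{-(2k-1)}$ is completely monotone on $[0,\infty)$; the series converges at $t=0$ because $\sum_{k\ge1}b_k=1-\sqrt{1-1}=1$; and for $t>0$ the factors $(t+1)^{-2k}$ decay geometrically in $k$ while the derivative magnitudes of $(t+1)^{-(2k-1)}$ grow only polynomially in $k$, so term-by-term differentiation on $(0,\infty)$ is again valid. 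Exactly as in Part (i), this yields $(-1)^k(\phi_{-2}^{-1})^{(k)}(t)\ge0$ on $(0,\infty)$, and continuity at $0$ completes it.

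\emph{Main obstacle.} Neither part is deep once the closed forms and the two classical expansions are in hand; the real work is the bookkeeping — verifying positivity of the coefficients ($n^{n-1}/n!$ in (i), the square-root coefficients in (ii)), the endpoint convergence via Stirling in (i), and the legitimacy of term-by-term differentiation on $(0,\infty)$. If I wanted to avoid the tree-function series in Part (ii), the alternative is to write $\phi_{-2}^{-1}(t)=\bigl[(t+1)+\sqrt{(t+1)^2-1}\bigr]^{-1}$ and argue the denominator is a Bernstein function — its derivative $1+(t+1)/\sqrt{(t+1)^2-1}$ is completely monotone, being the absolutely monotone map $x\mapsto(1-x)^{-1/2}$ composed with the completely monotone map $t\mapsto(t+1)^{-2}$ (Widder's composition theorem, as already invoked in Section~\ref{sec:multivariate_Archimedean}) — and then use that the reciprocal of a positive Bernstein function is completely monotone. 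That last fact is used nowhere else in the paper, so that route is the one I would expect to take the most care to present cleanly.
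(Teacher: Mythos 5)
Your proof is correct, but it follows a genuinely different route from the paper's. For part (i), the paper writes $\phi_{-1}^{-1}(t)=\gV(\exp\{-(t+1)\})$ with $\gV(r)=-\gW_0(-r)$, proves $\gV$ is absolutely monotone by invoking \citet{Kalugin2012}'s result that $\gW_0'$ is completely monotone, and concludes via Widder's composition theorem; you instead expand the tree function as $-\gW_0(-x)=\sum_{n\ge1}\frac{n^{n-1}}{n!}x^n$ and exhibit $\phi_{-1}^{-1}(t)=\sum_{n\ge1}\frac{n^{n-1}}{n!\,e^n}e^{-nt}$ as a positive mixture of exponentials, which is essentially a direct appeal to Bernstein's characterisation and is more self-contained (no external Lambert-$W$ result needed, at the cost of the Stirling/term-by-term-differentiation bookkeeping you identify). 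For part (ii) the divergence is larger: the paper never writes down the closed form $\phi_{-2}^{-1}(t)=(t+1)-\sqrt{(t+1)^2-1}$ in its proof, but instead differentiates the defining quadratic implicitly $(k+1)$ times via Fa\`a di Bruno's formula, extracts a recursion for $x^{(k+1)}(t)$ in terms of $x^{(k)}(t)$ and an incomplete exponential Bell polynomial $\gB_{k+1,2}(t)$, and closes the sign argument by induction using the sign property of Bell polynomials established in Lemma \ref{lem:Bell_polynomials}; your binomial-series expansion $\sum_{k\ge1}b_k(t+1)^{-(2k-1)}$ with $b_k>0$ is markedly more elementary and sidesteps the entire Bell-polynomial apparatus (your closed form is consistent with the paper's own expression \eqref{eqn:Cn2}, which corresponds to $1+t-\sqrt{t^2+2t}$). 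The trade-off is that the paper's implicit-differentiation machinery is set up with an eye toward generalisation to other $\lambda=-(z+1)$ where no radical closed form exists, whereas your argument is specific to the quadratic case; for the statement as given, your route is cleaner.
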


\noindent It immediately follows from Lemma \ref{lem:pd_copulas_completely_monotone} that the PD copulas with $\lambda = -1$ and $\lambda = -2$ are valid copulas in all dimensions $d \geq 3$; see the next result.

\begin{theorem}\label{thm:completely_monotone_cases}
    For all $d \geq 3$ and $u_1, ..., u_d \in [0,1]$, $C_{-1}(u_1, ..., u_d) \equiv \phi_{-1}^{-1}(\phi_{-1}(u_1) + \cdots + \phi_{-1}(u_d))$ and $C_{-2}(u_1, ..., u_d) \equiv \phi_{-2}^{-1}(\phi_{-2}(u_1) + \cdots + \phi_{-2}(u_d))$ are valid copulas.
\end{theorem}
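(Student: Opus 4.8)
The plan is to derive the theorem directly from Lemma~\ref{lem:pd_copulas_completely_monotone} together with the Kimberling criterion recalled in Section~\ref{sec:multivariate_Archimedean}. First I would record two preliminaries. By Proposition~\ref{prop:phi_as_copula_generator}, both $\phi_{-1}$ and $\phi_{-2}$ are valid Archimedean-copula generators in the sense of Definition~\ref{def:generator}, since they are the instances $\lambda = -1$ and $\lambda = -2$ of the function $\phi_\lambda$ in \eqref{eqn:CR_phi_function} and hence satisfy (a)--(c) in Definition~\ref{def:phi_divergence}. Moreover, because $-1 \leq -1$ and $-2 \leq -1$, equation~\eqref{eqn:phi_lambda_zero} gives $\phi_{-1}(0) = \phi_{-2}(0) = \infty$, so each generator has a strict inverse and its pseudoinverse coincides with the ordinary inverse on all of $[0,\infty)$; in particular $\phi_{-1}^{[-1]} = \phi_{-1}^{-1}$ and $\phi_{-2}^{[-1]} = \phi_{-2}^{-1}$.

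With these in hand, the substantive input is Lemma~\ref{lem:pd_copulas_completely_monotone}, which asserts that $\phi_{-1}^{-1}(t)$ and $\phi_{-2}^{-1}(t)$ are completely monotone on $t \in [0,\infty)$. By the result of \citet{Kimberling1974} recalled in Section~\ref{sec:multivariate_Archimedean}, complete monotonicity of the pseudoinverse is sufficient (and necessary) for an Archimedean-copula generator to yield a valid $d$-dimensional Archimedean copula for every $d \geq 3$. Applying this with $\psi = \phi_{-1}$ and with $\psi = \phi_{-2}$ shows that $C_{-1}(u_1,\dots,u_d) = \phi_{-1}^{-1}(\phi_{-1}(u_1) + \cdots + \phi_{-1}(u_d))$ and $C_{-2}(u_1,\dots,u_d) = \phi_{-2}^{-1}(\phi_{-2}(u_1) + \cdots + \phi_{-2}(u_d))$ are valid copulas for all $d \geq 3$ and $u_1,\dots,u_d \in [0,1]$, which is exactly the claim.

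There is essentially no obstacle at the level of this theorem itself: once Lemma~\ref{lem:pd_copulas_completely_monotone} is available, it is a one-line corollary of Kimberling's criterion. The genuine work has already been discharged in establishing that lemma --- via the complete-monotonicity properties of the principal branch of the Lambert $W$ function \citep[e.g.,][]{Kalugin2012} for the $\lambda = -1$ case, and via Fa\`{a} di Bruno's formula combined with the sign pattern of the incomplete exponential Bell polynomials \citep{Bell1934, Comtet1974} for the $\lambda = -2$ case. The only points meriting a sentence of care in the write-up of the corollary are the verification that $\phi_{-1}$ and $\phi_{-2}$ are bona fide generators (Proposition~\ref{prop:phi_as_copula_generator}) and that the strictness of the inverse holds when $\lambda \leq -1$, so that ``pseudoinverse'' in Kimberling's statement may be read as ``inverse'' without ambiguity; both are immediate from the material already assembled in Sections~\ref{sec:multivariate_Archimedean} and~\ref{sec:defn_pd_copulas}.
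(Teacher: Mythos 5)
Your proposal is correct and follows exactly the same route as the paper's own proof: Theorem \ref{thm:completely_monotone_cases} is deduced as an immediate corollary of Lemma \ref{lem:pd_copulas_completely_monotone} via the Kimberling complete-monotonicity criterion. The extra remarks on generator validity and strictness of the inverse for $\lambda \leq -1$ are fine but not needed beyond what the paper already records.
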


We conjecture that all members of $\{\phi_{\lambda}^{-1}: \lambda \leq -1\}$ are completely monotone on $t \in [0, \infty)$. However, in the absence of a general proof, it is prudent to check that $\phi_\lambda^{-1}(t)$ is $d$-monotone on $t\in[0,\infty)$ for given $\lambda \leq -1$ and $d\geq3$ if such a PD copula is required. Using, for example, Mathematica \citep{Mathematica}, it is straightforward to check whether $\phi^{-1}_{\lambda}(t)$ is $d$-monotone on $t\in [0,\infty)$ for given $d\geq3$ and $\lambda < -1$. Letting $\gamma > 1$ and setting $\lambda = -\gamma$, simply note that $(\phi_{-\gamma}^{-1})'(t) = -\gamma\phi^{-1}_{-\gamma}(t)^\gamma/(1-\phi^{-1}_{-\gamma}(t)^\gamma)$ for all $t\in (0,\infty)$ by the Inverse Function Theorem, and use this fact to compute (by symbolic differentiation) the $(d-2)$-th, $(d-1)$-th, and $d$-th derivatives of $\phi^{-1}_{-\gamma}(t)$ to verify that $(-1)^{d-2}(\phi_\lambda^{-1})^{(d-2)}(t) \geq 0$ and that $(-1)^{d-2}(\phi_\lambda^{-1})^{(d-2)}(t)$ is non-increasing and convex on $t \in (0, \infty)$. We need only check these three derivatives (and no preceding derivatives) because Proposition 2.3 of \citet{McNeil2009} shows it suffices to only check that $(-1)^{d-2}(\phi_\lambda^{-1})^{(d-2)}(t)$ is non-negative, convex, and non-increasing for a given $d \geq 3$.

\section{Computational aspects}\label{sec:computing}

\subsection{Exact formulas for certain values of $\lambda$}\label{sec:exact}

Though Proposition \ref{prop:unique_solution} guarantees that the pseudoinverse $\phi_\lambda^{[-1]}(t)$ exists for $t \in [0, \infty)$, the PD copula has no closed form for general $\lambda \neq 0, -1$. A non-exhaustive list of exceptions is $\lambda \in \{-4, -3, -2, -2/3, -1/2, 1, 2, 3\}$. In these cases, \eqref{eqn:zeros_of_pseudopolynomial} or a suitable rearrangement reduces to the problem of finding the zeros of a quadratic, cubic, or quartic equation in $x$ or $w\equiv x^{\lambda+1}$. Such problems have been studied for several centuries. Regardless, deriving the exact solutions can involve considerable effort, even with the aid of a computer algebra system. Therefore, we do not examine every possibility. Supplement S3 presents three illustrative examples, where $\lambda \in \{-2, -0.5, 1\}$.

\subsection{Computing the PD copula and its density}\label{sec:compute_copula}

For $\lambda \neq -1, 0$, the inverse $\phi_\lambda^{-1}(t)$ is given by the solutions of \eqref{eqn:zeros_of_pseudopolynomial}. Recall that Proposition \ref{prop:unique_solution} guarantees an appropriate solution in the range $\phi_\lambda^{-1}(t) \in [0, 1]$ exists and is unique for all $t \in [0, \phi_\lambda(0))$. Algorithm \ref{alg:compute_copula} uses this insight to develop a numerical routine for computing the PD copula for any $u_1, u_2 \in [0, 1]$. Copulas for $\lambda \in \{-\sqrt{2}, \sqrt{2}\}$ (i.e., values of $\lambda$ that do not admit closed-form representations of the corresponding PD copulas) are plotted in Fig. \ref{fig:sim_copulas} to illustrate this numerical routine.

\begin{algorithm}
\caption{Computing the power-divergence copula for general $\lambda \in (-\infty, \infty)$.}\label{alg:compute_copula}
\begin{algorithmic}
\State Let $\lambda \in (-\infty, \infty)$.
\State Let $u_1, u_2 \in [0,1]$.
\State Compute $t = \phi_\lambda(u_1) + \phi_\lambda(u_2)$ using \eqref{eqn:CR_phi_function}.
\If{$t \leq \phi_\lambda(0)$, which is always true if $\lambda \leq -1$,}
    \If{$\lambda \neq 0, -1$,}
    \State Compute the sole solution of \eqref{eqn:zeros_of_pseudopolynomial} in $[0, 1]$ by a root-finding algorithm; call it $x^*$. 
    \State $C_\lambda(u_1, u_2) = x^*$.
    \Else
    \State Compute $C_{-1}(u_1, u_2)$ with \eqref{eqn:copula_neg1} or $C_{0}(u_1, u_2)$ with \eqref{eqn:copula_zero} as appropriate.
    \EndIf
\Else
    \State $C_\lambda(u_1, u_2) = 0$.
\EndIf
\end{algorithmic}
\end{algorithm}

\begin{figure}[!ht]
    \centering
    \includegraphics[width=0.8\textwidth]{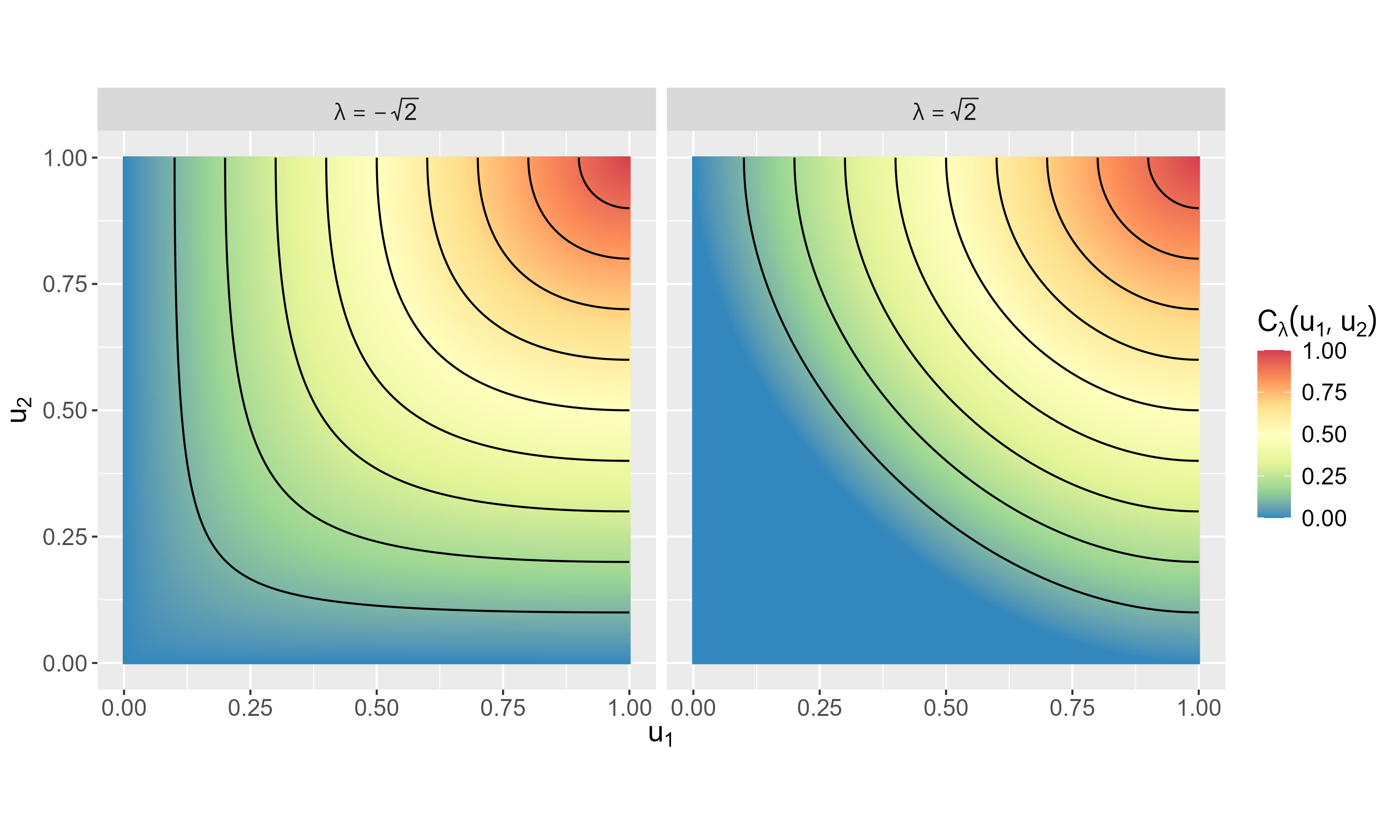}
    \caption{Plots of $C_\lambda(u_1, u_2)$ for $\lambda \in \{-\sqrt{2},\sqrt{2}\}$ and $u_1, u_2 \in [0, 1]$ obtained numerically.}
    \label{fig:sim_copulas}
\end{figure}

Computing the copula density (or at least the mixed partial derivative in \eqref{eqn:PD_copula_dens}) on its domain of definition is straightforward. For any particular $u_1, u_2 \in [0, 1]$ and $\lambda \in (-\infty,\infty)$, check if $\phi_\lambda(u_1) + \phi_\lambda(u_2) < \phi_\lambda(0)$. If so, simply compute the copula $C_\lambda(u_1, u_2)$ by Algorithm \ref{alg:compute_copula}, and then substitute this value into \eqref{eqn:PD_copula_dens}. For $\lambda \geq 0$, recall that the `density' is only for the absolutely continuous part of the copula. The singular component of the copula is supported on the zero curve and, when $\lambda > 0$, has $100\times \lambda/(\lambda + 1)\%$ of the probability mass of the copula. The remaining probability mass is distributed according to the `density' in the absolutely continuous part. Fig. \ref{fig:sim_copula_densities} shows the computed values of $\partial^2C_\lambda(u_1, u_2)/\partial u_1\partial u_2$ for $\lambda = -\sqrt{2}$ and $\lambda = \sqrt{2}$ over the relevant domains of definition.

\begin{figure}[!ht]
    \centering
    \includegraphics[width=\textwidth]{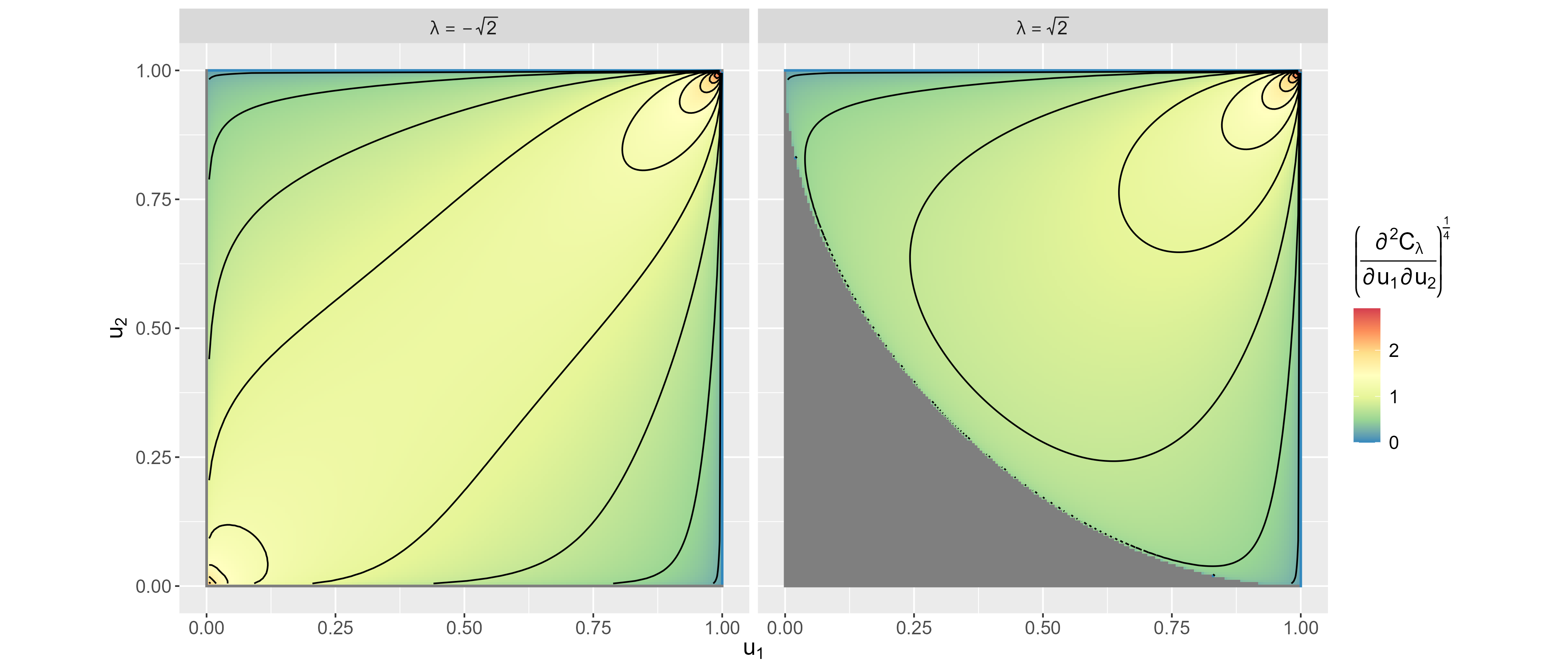}
    \caption{For $\lambda = -\sqrt{2}$ (left panel) and $\lambda = \sqrt{2}$ (right panel), plots of the fourth root of $\partial^2C_\lambda(u_1, u_2)/\partial u_1\partial u_2$ for $u_1, u_2 \in [0, 1]$. The fourth root was used for display purposes only.}
    \label{fig:sim_copula_densities}
\end{figure}

\subsection{Simulating from the PD copula}\label{sec:simulate_from_copula}

Simulation from the bivariate PD copula is enabled by the well known `conditional distribution method' \citep[e.g.,][Sec 2.9]{Nelsen2006}. Supplement S4 gives the details of the algorithm for the PD copulas. Fig. \ref{fig:simulated_data} plots data simulated from the PD copulas for selected values of $\lambda$, namely $\lambda \in \{-10, -3, -2, -0.5, 1, 2, 3, 10\}$. Some notable features of the plots include the upper tail dependence apparent in the simulated random variates and the heavy zero curves when $\lambda = 1, 2, 3, 10$. This reflects the large C-measures of the zero curves, which go up to approximately $0.91$ when $\lambda = 10$. For $\lambda = -0.5$, a zero set is apparent but the C-measure of the zero curve is 0, meaning no random variates sit on the zero curve. 

\begin{figure}[!ht]
    \centering
    \includegraphics[width=\linewidth]{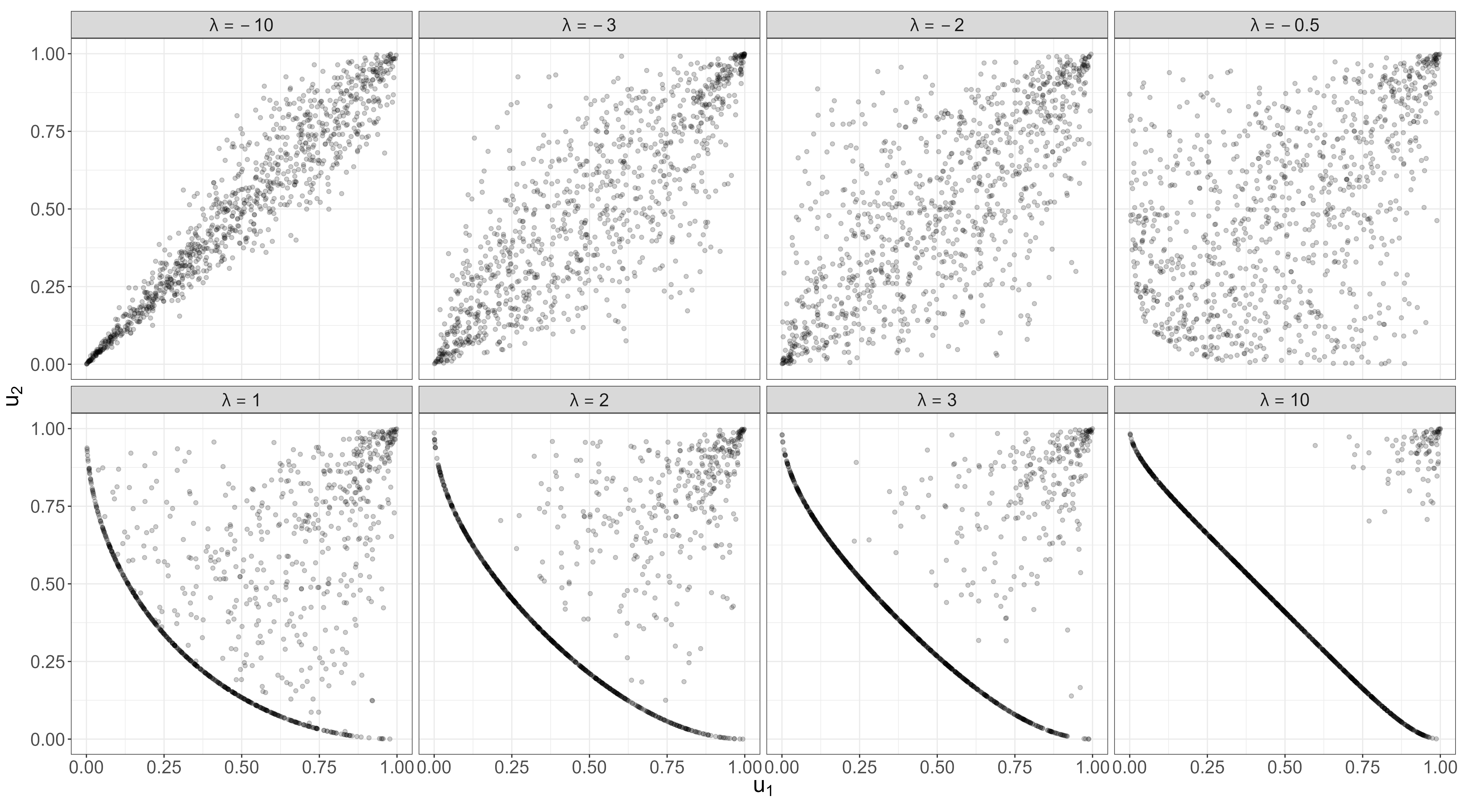}
    \caption{Scatterplots of 1000 bivariate realisations from the power-divergence copulas, with $\lambda \in \{-10, -3, -2, -0.5, 1, 2, 3, 10\}$.}
    \label{fig:simulated_data}
\end{figure}

\section{Analysis of Danish fire insurance data}\label{sec:applied}

In this section, we analyse a dataset of Danish insurance claims for businesses experiencing fire-related losses between 1980 and 1990. The dataset comprises 2,167 records of the losses to building, contents, and profits in millions of Danish krone (MDK), adjusted for inflation to the year 1985. The data are available in the R package `fitdistrplus' \citep{fitdistrplus}. We refer to these data as the `Danish fire insurance data'. 

An insurance company or their reinsurers may be interested to know, for example, the joint exceedance probabilities of losses to buildings, contents, and profits to inform pricing of policies, etc. Therefore, characterising the (potentially complicated) dependence between these variables is of interest, and copulas are uniquely suited to this task. In what follows, we demonstrate the PD copulas achieve an adequate fit to the Danish fire insurance data, whereas a suite of well known Archimedean, extreme-value, and elliptical copulas do not. 

The Danish fire insurance data were previously analysed by \citet{Haug2011} and \citet{Kularatne2021}, who examined the bivariate relationship between losses to building and losses to contents for small subsets of the data. We did not follow their choices in processing and modelling the data. Instead, our analysis examined the bivariate relationship between the sum of losses to building and contents, designated as `material losses', and the loss to profits. Records with no loss to profits were removed, so the following inferences are conditional on non-zero loss to profits. Fig. \ref{fig:danish_data} shows the remaining $616$ records. The losses are shown in MDK, but they are also shown following a transformation into `copula data' using a non-parametric rank-based transformation of the margins. The copula data reveal the dependence structure of the material losses and losses to profits, which appears to have moderate upper tail dependence. The plot also suggests that small values of the marginal variables do not occur together. This is likely because the dataset only contains records of insurance claims totalling over one MDK in losses. 

\begin{figure}[!ht]
    \centering
    \includegraphics[width=.7\textwidth]{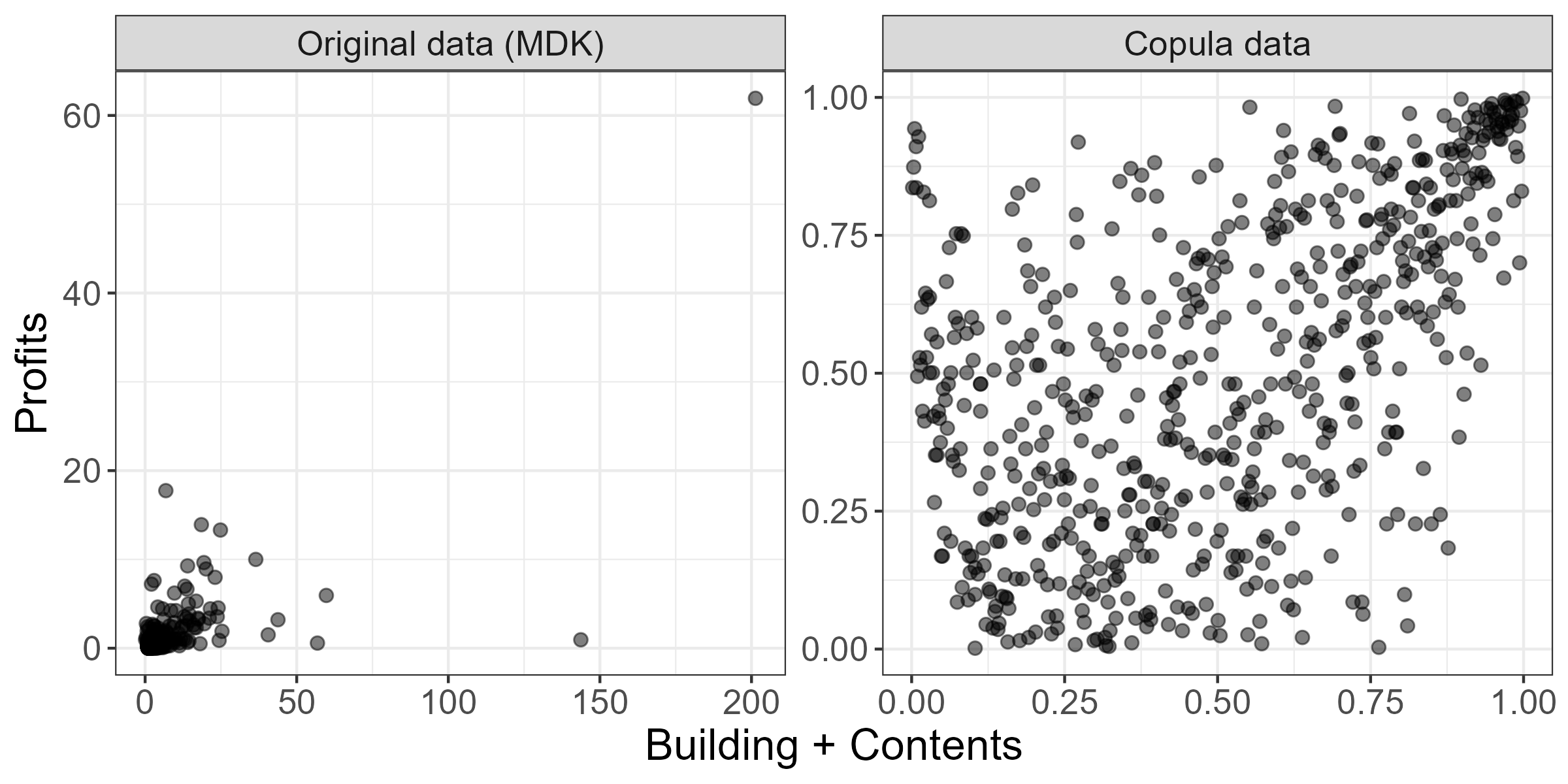}
    \caption{Losses to profits versus total material losses (the sum of losses to building and contents) in millions of Danish krone (MDK) (left panel) and as copula data (right panel).}
    \label{fig:danish_data}
\end{figure}

For the formal analysis, we fitted the PD copula to the Danish fire insurance data, and we compared the fit to those of a suite of Archimedean, extreme-value, and elliptical copulas implemented in the `copula' R package  \citep{copula}. The Archimedean copulas used for the comparison were the Clayton \citep{Clayton1978}, Gumbel \citep{Gumbel1960}, Frank \citep{Frank1979}, and Joe \citep{Joe1993} copulas \citep[also see][Table 4.2.1 and surrounding exposition]{Nelsen2006}. For the Clayton copula, its survival copula was fitted to the data instead. The extreme-value copulas were the Galambos \citep{Galambos1975}, Husler-Reiss \citep{Husler1989}, and Tawn \citep{Tawn1988} copulas. A bivariate Gaussian copula was also fitted. Parameters were estimated using the method-of-moments estimator based on inversion of Kendall's tau \citep[e.g.,][]{Genest1993}, which was calculated to be $0.361$ from the data. Table \ref{tab:summary_table_insurance} shows the estimates of the fitted copulas' parameters. 

\begin{table}[!ht]
    \centering
    \caption{Copulas fitted to the Danish fire insurance data, their type (AR for Archimedean, EV for extreme-value, and EL for elliptical), and their estimated parameters. The symbol (S) next to a copula's name indicates its survival copula was fitted. Goodness-of-fit tests \citep{Genest2009, Genest2011} are reported in the last two columns.\vspace{3pt}}
    \begin{tabular}{|ccc|cc|}
    \hline
         \multicolumn{3}{|c|}{Fitted copulas} & \multicolumn{2}{c|}{Goodness-of-fit test}\\
         \hline
         Copula & Type & Fitted parameter & Statistic & p-value \\
         \hline
         Power-divergence & AR & -0.647 & 0.036 & $0.078$\\
         Clayton (S) & AR & 1.129 & 0.074 & $<0.001$\\
         Frank & AR & 3.645 & 0.303 & $<0.001$ \\
         Joe & AR & 2.027 & 0.069 & $<0.001$\\
         Gumbel & AR/EV & 1.565 & 0.165 & $<0.001$\\
         Galambos & EV & 0.841 & 0.093 & $0.014$\\
         Husler-Reiss & EV & 1.282 & 0.101 & $0.007$\\
         Tawn & EV & 0.888 & 0.085 & $0.018$ \\
         Gaussian & EL & 0.537 &  0.313 & $<0.001$\\
         \hline
    \end{tabular}
    \label{tab:summary_table_insurance}
\end{table}

The parametric-bootstrap goodness-of-fit (PB-GOF) test described in \citet{Genest2009} was used to check whether the fitted Archimedean copulas and the Gaussian copula provided an adequate fit to the Danish fire insurance data. The goodness-of-fit test from \citet{Genest2011} was used for the extreme-value copulas. Almost all tests were performed using the `copula' R package. Only the PB-GOF test for the PD copula required a bespoke implementation of the algorithm in Appendix A of \citet{Genest2009}. Table \ref{tab:summary_table_insurance} reports the test statistics and $p$-values, which indicate that, at a 5\% significance level, only the fitted PD copula with $\hat\lambda = -0.647$ provides an adequate fit to the Danish fire insurance data. This is likely due to its moderate upper-tail dependence, shape of its copula density (which exists for $\hat\lambda = -0.647$ due to Theorem \ref{thm:absolutely_continuous}), and zero set that captures the region where small material losses and losses to profits do not occur together. 

\begin{figure}[!ht]
    \centering
    \includegraphics[width=\textwidth]{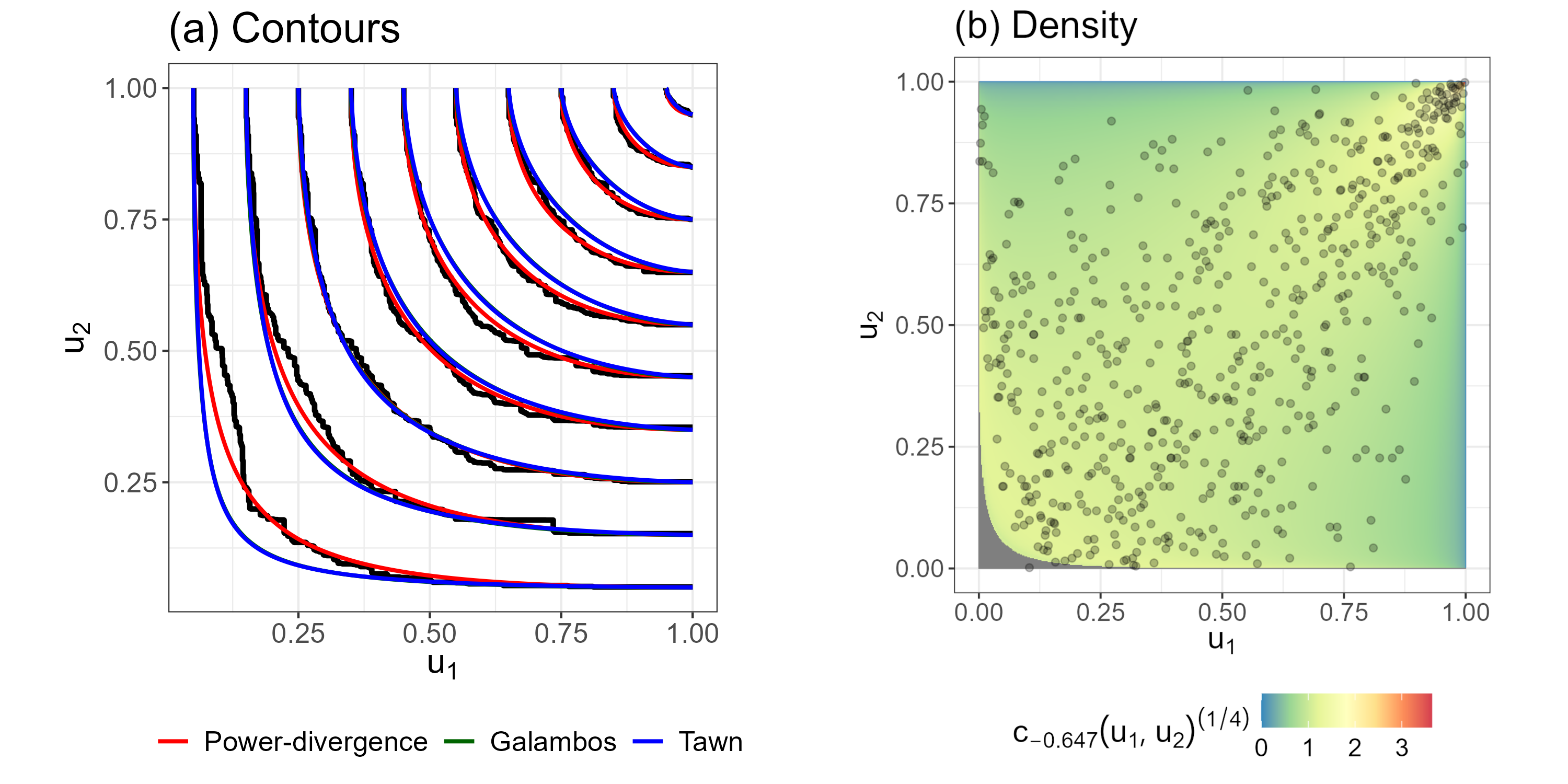}
    \caption{(a) Contours of the empirical copula (jagged black lines) and fitted power-divergence (PD) copula, Galambos copula, and Tawn copula (see legend). The contours are displayed for the level curves of the copulas at $\{0.05, 0.15, ..., 0.85, 0.95\}$. (b) The copula data superimposed on (the fourth root of) the copula density of the fitted PD copula. (The fourth-root transformation of the density is purely intended to aid visualisation.)}
    \label{fig:fitness}
\end{figure}

Visual confirmation of the PB-GOF results is provided by Fig. \ref{fig:fitness}. In particular, Fig. \ref{fig:fitness}(a) shows that the contours of the PD copula with $\lambda = -0.647$ line up closely with those from the empirical copula of the Danish fire insurance data. The contours of the Galambos and Tawn copulas are also shown because these two achieved the highest $p$-values in the PB-GOF test among the benchmark copulas. These copulas have virtually identical contours. Neither fit the empirical copula well, particularly in the lower tail. Fig. \ref{fig:fitness}(b) plots the copula density of the PD copula with $\hat\lambda = -0.647$ underneath the copula data. This emphasises the role of the upper-tail dependence and the zero curve in capturing the shape of the bivariate dependence structure. In Supplement S5, we also show that synthetic datasets simulated from the fitted PD copula with $\hat\lambda = -0.647$ resemble the original data, but synthetic datasets simulated from the Galambos and Tawn copulas do not. 

\section{Discussion and conclusion}\label{sec:discussion}

This paper has shown that the same functions that generate $\phi$ divergences also generate families of Archimedean copulas. Subsequently, we developed the family associated with the well known power divergences \citep{Cressie1984, Read1988}, called the power-divergence (PD) copulas. The properties of this family were extensively studied. Analysis of a Danish fire insurance dataset demonstrated the potential usefulness of the PD copulas, which achieved an adequate fit to the data when several Archimedean, extreme-value, and elliptical copulas did not. As a final note, Theorem \ref{prop:phi_as_copula_generator} implies there are Archimedean copulas associated with other $\phi$ divergences. The exploration of these copulas is left to future research.

\section*{Acknowledgements} 

HB was supported by Australian Research Council Future Fellowship FT190100374. ARP (postdoctoral research fellow) was partially supported by the same grant. ARP thanks Dr David Gunawan for a helpful discussion about the  applications of Archimedean copulas. 

\bibliographystyle{apalike}
\bibliography{PearseBondell_ArchCopula_Arxiv.bib}

\appendix

\setcounter{equation}{0}
\setcounter{figure}{0}
\setcounter{algorithm}{0}
\renewcommand{\theequation}{S\arabic{equation}}
\renewcommand{\thealgorithm}{S\arabic{algorithm}}
\renewcommand{\thefigure}{S\arabic{figure}}
\renewcommand{\thesection}{S\arabic{section}}
\renewcommand{\thesubsection}{S\arabic{section}.\arabic{subsection}}

\section*{{\Large Supplementary information}}

This is the Supplementary Information (`the Supplement') for the paper. It is structured as follows. Section S1 discusses an alternative definition of phi divergences (here labelled `$\varphi$ divergences') used by some researchers and its relationship to the one given in Definition \ref{def:phi_divergence} (`$\phi$ divergences'). Section S2 gives an example of a $\phi$ function that is not continuously differentiable on $(0, \infty)$. Section S3 presents exact formulas for the power-divergence (PD) copulas with $\lambda = -2$, $\lambda = -0.5$ and $\lambda = 1$. In Section S4, we show how to simulate from a bivariate PD copula. Section S5 presents additional results for the analysis of the Danish fire insurance data. Section S6 defines the incomplete exponential Bell polynomials and proves a lemma about them that is needed for the following section. Section S7 presents the proofs of all propositions in the main text. 

\section{Alternative definition of phi divergence} 

\subsection{A different form of phi ($\varphi$) divergence}

When defining a phi divergence of one probability distribution from another, alternative definitions have been used other than Definition \ref{def:phi_divergence}. For clarity, from this point on, we use the symbol $\phi$ to refer to $\phi$ divergences and $\phi$-divergence generators under Definition \ref{def:phi_divergence}. We use $\varphi$ to refer to $\varphi$ divergences under the alternative definition given below.

The following definition of a $\varphi$ divergence is sometimes used to define divergences of one probability distribution from another. 

\begin{definition}\label{def:suppinfo_alternative}
Let $P$ and $Q$ be absolutely continuous probability distributions admitting densities $p$ and $q$ with respect to Lebesgue measure. Let $\varphi(x)$ be a convex function for $x \geq 0$ that satisfies,
\begin{enumerate}[label=(\alph*)]
    \item $\varphi(1) = 0$, and
    \item $\varphi''(1) > 0$.
\end{enumerate}
Then, the $\varphi$ divergence of $P$ from $Q$ is defined as,
$$
D_\varphi(p||q) \equiv \int q(s)\times \varphi\!\left(\!\frac{p(s)}{q(s)}\!\right)~\d s,
$$
where $0\times\varphi(0/0) = 0$, and $0\times\varphi(v/0) = v \times \{\lim_{x \to \infty} \varphi(x)/x\}$. 
\end{definition}

When the probability densities $p$ and $q$ are used as arguments of the $\phi$ divergence of Definition \ref{def:phi_divergence}, it is straightforward to show that $D_\phi(p \lVert q) = \int q(s)\times \phi(p(s)/q(s))~\d s$ and $D_\varphi(p \lVert q) = \int q(s)\times \varphi(p(s)/q(s))~\d s$ are exactly equivalent. However, while Definition \ref{def:suppinfo_alternative} is only valid when $\int p(s)~\d s = 1$ and $\int q(s)~\d s =1$, Definition \ref{def:phi_divergence} is valid for non-negative functions that may not integrate to unity.

\subsection{Relationship to Definition \ref{def:phi_divergence}}

We can recover $\phi(x)$ from $\varphi(x)$ ($x \geq 0$) by noting the affine-invariance property of convex functions \citep[e.g.,][p. 56]{Amari2016}: That is, if $\varphi(x)$ is convex with $\varphi(1) = 0$ and $\varphi''(1) > 0$, the function $\phi^@(x) = \varphi(x) + c\times (x-1)$, with real-valued constant $c$, is also convex with $\phi^@(1) = 0$ and $(\phi^@)''(1) > 0$. Now, following, for example, \citet{Cressie2002}, take $c = -\varphi'(1)$, and define $\phi(x) \equiv  \varphi(x) + \varphi'(1)\times (1 - x)$. Compared to $\varphi(x)$, $\phi(x)$ has the additional property that $\phi'(1) = 0$. 

\subsection{Failure of $\varphi$ to be an Archimedean copula generator}

The function $\varphi(x),~x\geq 0,$ in Definition \ref{def:suppinfo_alternative} does not necessarily correspond to an Archimedean-copula generator, though some do. For example, $\varphi(x) = -\log(x)$ satisfies (a) and (b) of Definition \ref{def:suppinfo_alternative}, and it is a valid Archimedean-copula generator that corresponds to the product copula. However, this does not hold in general. Consider the counterexample below. 

Let $\lambda \in (-\infty, \infty)$. \citet{Cressie1984} consider the function, 
$$
\varphi_\lambda(x) \equiv \frac{x^{\lambda+1}-x}{\lambda(\lambda+1)},~\lambda \neq -1, 0,
$$
where $x\geq 0$, and $\varphi_0(x)\equiv \lim_{\lambda \to 0} \phi_\lambda(x)$ and $\varphi_{-1}(x) \equiv \lim_{\lambda \to -1}\varphi_\lambda(x)$. It is straightforward to verify that $\phi_\lambda(x)$ in \eqref{eqn:CR_phi_function} is $\phi_\lambda(x) \equiv  \varphi_\lambda(x) + \varphi_\lambda'(1)(1 -x)$ for $x\geq0$.

Now take $\lambda = 1$, which yields $\varphi_1(x) = 0.5(x^2-x)$. Then $\varphi_1(x)$ is not strictly decreasing on $x \in [0, 1]$, since its first derivative, $\varphi_1'(x) = x-0.5$, is negative for $x \in [0, 0.5)$, zero at $x = 0.5$, and positive for $x \in (0.5, 1]$. This is one example of a $\varphi$ function that does not satisfy the requirements of an Archimedean-copula generator given in Definition \ref{def:generator}.

\section{$\phi$ function that is not continuously differentiable}

Define the piecewise $\phi$ function, 
$$
\phi_{\pw}(x) =\begin{cases} 
0.25 - \exp\{2\} + \exp\{1/x\} & 0 \leq x < 0.5,\\
(1-x)^2 & 0.5 \leq x <\infty.
\end{cases}
$$ 
The function $\phi_{\pw}$ satisfies (a)-(c) of Definition \ref{def:phi_divergence}. It is strictly convex for all $x \in [0, \infty)$; it is equal to zero at $x = 1$; it has two derivatives at $x = 1$, with $\phi_\pw'(1) = 0$ and $\phi_\pw''(1)>0$; but it lacks a derivative at $x = 0.5$. After restricting $x \in [0, 1]$, the pseudoinverse of $\phi_\pw(x)$ is given by,
$$
\phi_{\pw}^{[-1]}(t) = \begin{cases}
    1-\sqrt{t} & 0 \leq t < 0.25,\\
    [\log(t -0.25 + \exp\{2\})]^{-1} & 0.25 \leq t < \infty.
\end{cases}
$$
This is a strict inverse, so $\phi^{[-1]}_{\pw}(t) = \phi_{\pw}^{-1}(t)$ for all $t\in[0,\infty)$, but it lacks a derivative at $t = 0.25$, where the two cases meet. Fig. \ref{fig:piecewise} clearly illustrates this point. Since $\phi_\pw^{-1}$ is not differentiable for all $t \in (0, \infty)$, $\phi^{-1}_\pw$ cannot be $d$-monotone on $t\in[0,\infty)$ for any $d \geq 3$ \citep{McNeil2009}. Therefore, $\phi_\pw$ does not generate a valid Archimedean copula in any dimensions $d \geq 3$.

\begin{figure}[ht]
    \centering
    \includegraphics[width=\textwidth]{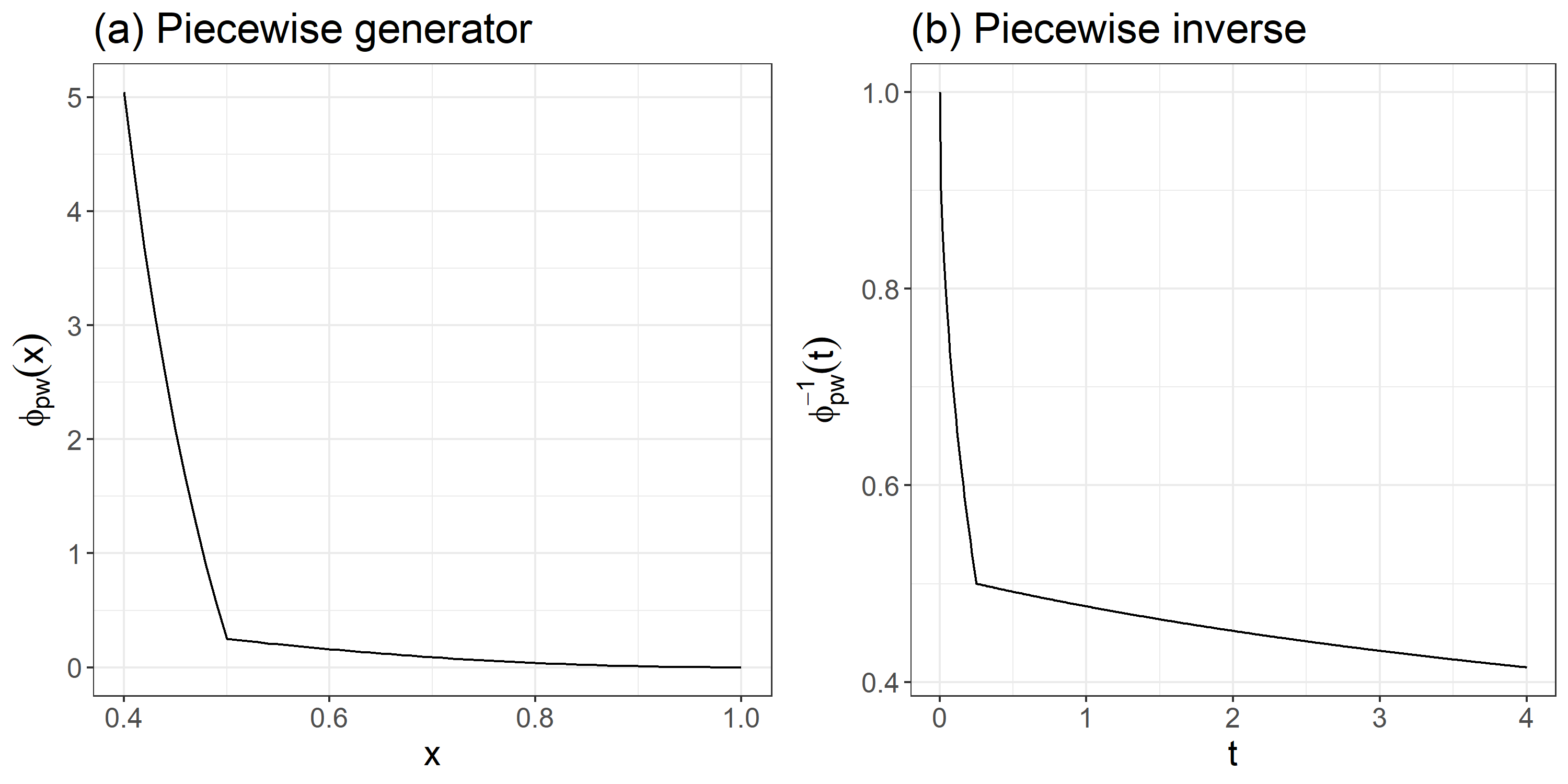}
    \caption{The left panel (a) shows the piecewise $\phi$ function, $\phi_{\pw}(x),~x\in [0,1]$. Note that $\phi_{\pw}(x)$  has only been plotted for $x\in [0.4,1]$ to `zoom in' on the point $x = 0.5$, where the function is continuous but lacks a derivative. The right panel (b) shows the (strict) inverse $\phi_{\pw}^{-1}(t)$ over $t\in [0, 4]$. }
    \label{fig:piecewise}
\end{figure}

\newpage

\section{Exact formulas for $\lambda = -2$, $\lambda = -0.5$ and $\lambda = 1$}

For $u_1,u_2\in[0,1]$, the PD copula with $\lambda = -2$ is,
\begin{align}
    C_{-2}(u_1,u_2) = 1 &+ 0.5(u_1^{-1} + u_1+u_2^{-1} + u_2 - 4)\nonumber\\
    &- \sqrt{0.25(u_1^{-1} + u_1+u_2^{-1} + u_2 - 4)^2 + u_1^{-1} + u_1+u_2^{-1} + u_2 - 4}.\label{eqn:Cn2}
\end{align}
The associated copula density, which exists for all $u_1,u_2\in [0, 1]$ due to Theorem \ref{thm:absolutely_continuous}, is denoted by $c_{-2}(u_1, u_2)$. The expression is not straightforward to simplify, but it can be obtained in closed form by substituting \eqref{eqn:Cn2} into \eqref{eqn:PD_copula_dens}. 

\begin{figure}[!ht]
    \centering
    \includegraphics[width=.78\textwidth]{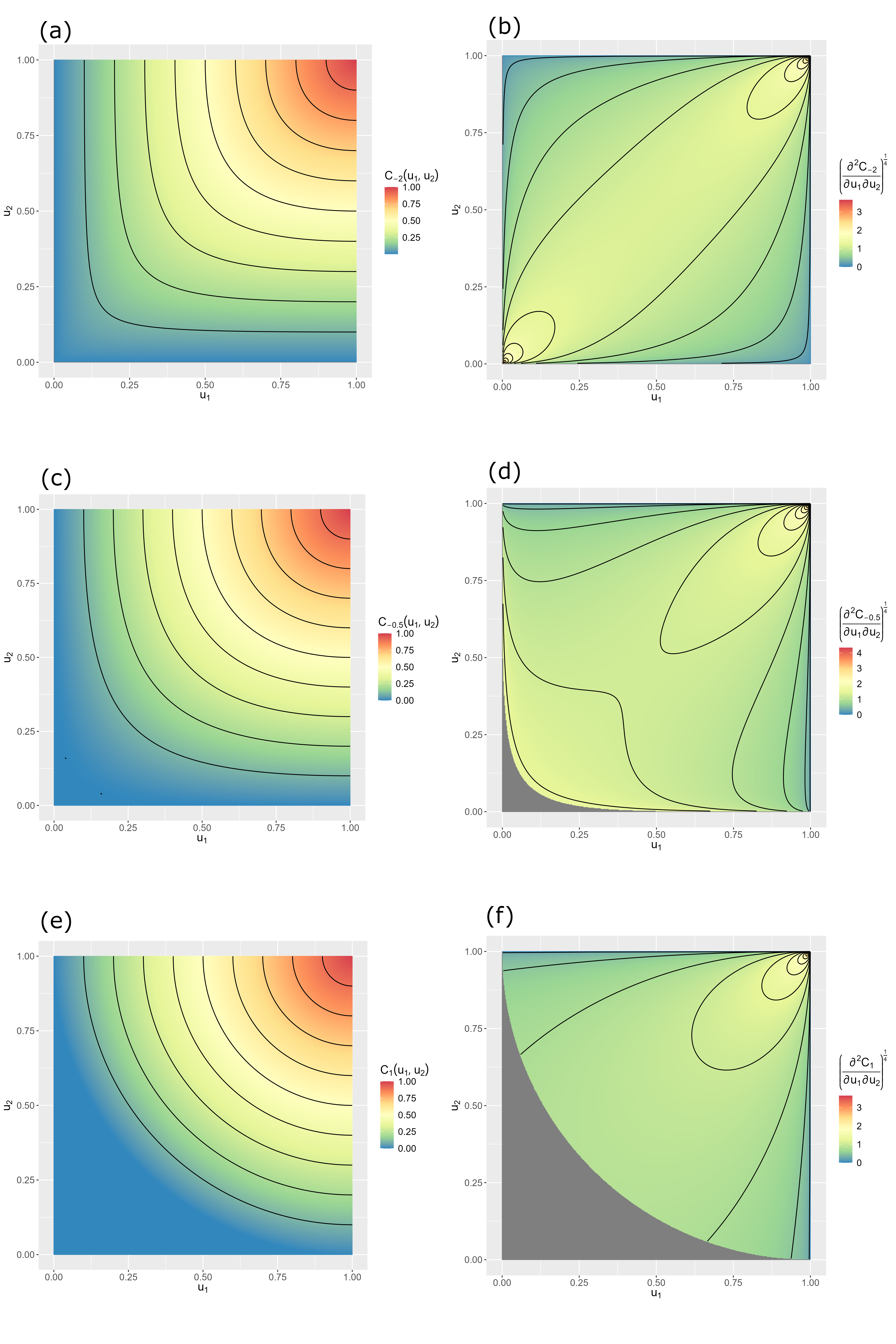}
    \caption{Left column: (a) Power-divergence (PD) copula with $\lambda = -2$; see \eqref{eqn:Cn2}. (c) PD copula with $\lambda = -0.5$; see \eqref{eqn:C05}. (e) PD copula with $\lambda = 1$; see \eqref{eqn:C1}. Right column: (b) The fourth root of the PD copula density with $\lambda = -2$. (d) The fourth root of the PD copula density with $\lambda = -0.5$; see \eqref{eqn:c05}. (f) The fourth root of the derivative $\partial C_1(u_1, u_2)/\partial u_1\partial u_2$ for $u_1, u_2 \in [0,1]$; see \eqref{eqn:c1}. The fourth-root transformations in (b), (d), and (f) are used for visualisation purposes only.}
    \label{fig:closed_form}
\end{figure}

The PD copula with $\lambda = -0.5$ is,
\begin{align}
    &C_{-0.5}(u_1, u_2)\nonumber\\
    &=\begin{cases}
    3 - 2(\sqrt{u_1} + \sqrt{u_2}) + (u_1 + u_2) & 0 \leq 2 - 2(\sqrt{u_1}+\sqrt{u_2}) + (u_1 + u_2) < 1,\\
     ~~-~2\sqrt{2-2(\sqrt{u_1} + \sqrt{u_2}) + (u_1 + u_2)}& \\
    0 & 1 \leq 2 - 2(\sqrt{u_1}+\sqrt{u_2}) + (u_1 + u_2)< \infty.
\end{cases}\label{eqn:C05}
\end{align}
The copula $C_{-0.5}$ is absolutely continuous by Theorem \ref{thm:absolutely_continuous}. For $0 \leq 2 - 2(\sqrt{u_1}+\sqrt{u_2}) + (u_1 + u_2) \leq 1$, the associated copula density is given by,
\begin{align}
    c_{-0.5}(u_1, u_2) =\frac{(1-(\sqrt{u_1})^{-1})(1-(\sqrt{u_2})^{-1})}{(2-2(\sqrt{u_1} + \sqrt{u_2})+u_1+u_2)^{3/2}}.\label{eqn:c05}
\end{align}
On the other hand, with $\lambda = 1$, the PD copula is,
\begin{equation}
C_1(u_1, u_2) = \max\!\left\{1 - \sqrt{u_1^2 - 2u_1 + u^2_2 - 2u_2 + 2}, 0\right\},~u_1,u_2\in[0,1].\label{eqn:C1}    
\end{equation}
The copula $C_1(u_1, u_2)$ has a singular part and an absolutely continuous part. The singular part is supported on the zero curve (the curve in $[0,1]^2$ traced by $u^2_1 - 2 u_1 + u_2^2 - 2 u_2 + 2 = 1$), and it contains 50\% of the probability mass of the bivariate distribution by Theorem \ref{thm:absolutely_continuous}. On the absolutely continuous part of the copula, over the part of the unit square where $u^2_1 - 2 u_1 + u_2^2 - 2 u_2 + 2 < 1$, the derivative $\partial^2C_{1}(u_1, u_2)/\partial u_1 \partial u_2$ exists and is given by,
\begin{equation}
\frac{\partial^2 C_1}{\partial u_1 \partial u_2} = \frac{(u_1 - 1)(u_2 - 1)}{(u_1^2 - 2 u_1 + u_2^2 - 2u_2 + 2)^{3/2}}.\label{eqn:c1}   
\end{equation}

Fig. \ref{fig:closed_form} displays the copulas and derivatives for $\lambda = -0.5$ and $\lambda = 1$. Recall that the PD copulas with $\lambda = -2$ and $\lambda = -0.5$ are absolutely continuous, but the PD copula with $\lambda = 1$ is not. Therefore, Fig. \ref{fig:closed_form}(b) and \ref{fig:closed_form}(d) show copula densities whereas, strictly speaking, Fig. \ref{fig:closed_form}(f) is not a density.

\section{Bivariate simulation}\label{sec:bivariate_simulation}

Algorithm \ref{alg:sim_bivariate} below provides details of the conditional distribution method \citep[e.g., ][Sec 2.9]{Nelsen2006} for simulating from bivariate PD copulas. 

\begin{algorithm}
\caption{The conditional distribution method for simulating from the bivariate power-divergence copula with $\lambda \in (-\infty, \infty)$.}\label{alg:sim_bivariate}
\begin{algorithmic}[1]
\State Let $\lambda \in (-\infty, \infty)$.
\State Simulate $u_1$ and $t$ independently from the uniform distribution on $[0, 1]$.
\If{$\lambda = 0$}
\State Compute $u_2 = \phi_{0}^{[-1]}\!\left(\phi_0\!\left(u_1^{1/t}\right) - \phi_0(u_1)\right)$ using \eqref{eqn:CR_phi_function} and \eqref{eqn:pseudoinverse_0}.
\Else
\If{$1 + t^{-1}(u_1^\lambda - 1) < 0$}
    \State Compute $u_2 = \phi_\lambda^{[-1]}(\phi_\lambda(0) - \phi_\lambda(u_1))$ using \eqref{eqn:CR_phi_function}, \eqref{eqn:pseudoinverse_neg1}, and \eqref{eqn:zeros_of_pseudopolynomial} as appropriate. 
\Else
    \State Compute $u_2 = \phi_\lambda^{[-1]}\!\left(\phi_\lambda\!\left((1 + t^{-1}(u_1^\lambda - 1))^{1/\lambda}\right) - \phi_\lambda(u_1)\right)$; see \eqref{eqn:CR_phi_function}, \eqref{eqn:pseudoinverse_neg1}, and \eqref{eqn:zeros_of_pseudopolynomial}.
\EndIf
\EndIf
\State Return the pair $(u_1, u_2)$. Discard $t$. 
\end{algorithmic}
\end{algorithm}

\section{Simulating new insurance data}\label{sec:simulating_insurance_data}

Section \ref{sec:applied} shows that the PD copula with $\hat\lambda = -0.647$ appears to fit the Danish fire insurance data well. Downstream analyses using the fitted PD copula may include the calculation of joint exceedance probabilities for extreme material losses and losses to profits. The PD copula could also be used to simulate future insurance claims, with the fitted model enabling simulation-based decision-making. Fig. \ref{fig:insurance_sims} suggests that the fitted PD copula with $\hat\lambda = -0.647$ can be used to simulate realistic datasets, while random variates from the Galambos and Tawn copulas appear to exhibit the correct upper-tail dependence, but they do not resemble the data in the lower tail.

\begin{figure}[!ht]
    \centering
    \includegraphics[width=0.65\textwidth]{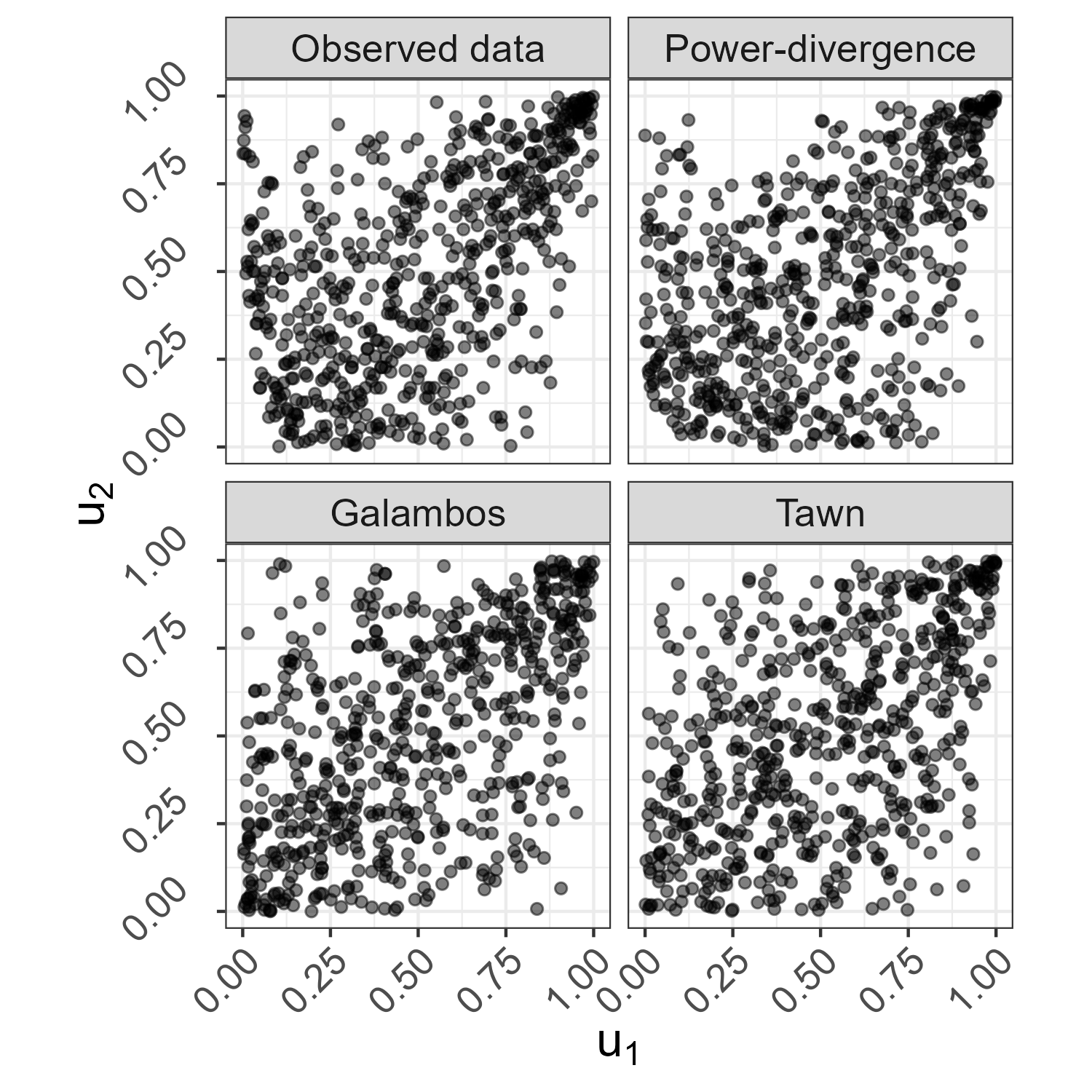}
    \caption{The observed Danish fire insurance data and datasets of the same size ($n=616$) simulated from the fitted power-divergence, Galambos, and Tawn copulas.}
    \label{fig:insurance_sims}
\end{figure}

\section{Incomplete exponential Bell polynomials}\label{sec:Bell_polynomials}

Here, we define incomplete exponential Bell polynomials \citep{Bell1934}. See the definition below, which follows \citet[pp. 133-137]{Comtet1974}. From this point, we simply refer to incomplete exponential Bell polynomials as `Bell polynomials'. 

\begin{definition}\label{def:Bell_polynomial}
    Let $k$ and $j$ be positive integers with $k \geq j$. For variables $x_1, ..., x_{k-j+1}$ and non-negative integers $c_{1}, c_{2}, ..., c_{k-j+1}$, the incomplete exponential Bell polynomial $\gB_{k,j} \equiv B_{k,j}(x_1, ..., x_{k-j+1})$ is defined as,
    \begin{align}
    &\gB_{k,j} \equiv B_{k,j}(x_1, ..., x_{k-j+1})\nonumber\\
    &~~~~~= \sum \frac{k!}{c_1!c_2 !\cdots c_{k-j+1}!}\left(\frac{x_1}{1!}\right)^{c_1} \left(\frac{x_2}{2!}\right)^{c_2}\cdots \left(\frac{x_{k-j+1}}{(k-j+1)!}\right)^{c_{k-j+1}},\label{eqn:Bell_polynomial}
    \end{align}
    where the sum is taken over all (non-negative integer) solutions of the system of equations,
    \begin{align}
        c_1 + c_2 + \cdots + c_{k-j+1} &= j,\label{eqn:sum_to_j}\\
        1c_1 + 2c_2 + \cdots + (k-j+1) c_{k-j+1} &= k.\label{eqn:total_weight_sum_to_k}
    \end{align}
\end{definition}

The lemma below establishes a property of Bell polynomials that is needed to prove Lemma \ref{lem:pd_copulas_completely_monotone}. The following proposition is proved immediately, but the proofs of Lemma \ref{lem:pd_copulas_completely_monotone} and the other propositions from the main text are provided in the next section.

\begin{lemma}\label{lem:Bell_polynomials}
    For positive integers $k,j$ where $k \geq j$, let $x_1, ..., x_{k-j+1}$ be variables that satisfy $(-1)^lx_l \geq 0$ for $l = 1, ..., k-j+1$. Define $\gB_{k,j} \equiv B_{k,j}(x_1, ..., x_{k-j+1})$ as in Definition \ref{def:Bell_polynomial}. Then, for all $j = 1, ..., k$, $\gB_{k,j} \geq 0$ if $k$ is even, and $\gB_{k,j} \leq 0$ if $k$ is odd.
\end{lemma}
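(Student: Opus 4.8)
The plan is a direct sign-counting argument carried out term by term in the defining sum \eqref{eqn:Bell_polynomial}. First I would fix an arbitrary admissible tuple of non-negative integers $(c_1, \ldots, c_{k-j+1})$, i.e.\ one satisfying \eqref{eqn:sum_to_j} and \eqref{eqn:total_weight_sum_to_k}, and isolate the corresponding summand
$$
\frac{k!}{c_1!\,c_2!\cdots c_{k-j+1}!}\left(\frac{x_1}{1!}\right)^{c_1}\left(\frac{x_2}{2!}\right)^{c_2}\cdots\left(\frac{x_{k-j+1}}{(k-j+1)!}\right)^{c_{k-j+1}}.
$$
The multinomial coefficient $k!/(c_1!\cdots c_{k-j+1}!)$ and each factor $1/(l!)^{c_l}$ are strictly positive, so the sign of this summand is governed entirely by the monomial $\prod_{l=1}^{k-j+1} x_l^{c_l}$.

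Next I would use the hypothesis $(-1)^l x_l \geq 0$ to write $x_l = (-1)^l |x_l|$ with $|x_l| \geq 0$ for each $l$. Raising to the power $c_l$ and multiplying over $l$ then gives
$$
\prod_{l=1}^{k-j+1} x_l^{c_l} \;=\; \prod_{l=1}^{k-j+1} (-1)^{l c_l}\,|x_l|^{c_l} \;=\; (-1)^{\sum_{l} l c_l}\prod_{l=1}^{k-j+1} |x_l|^{c_l} \;=\; (-1)^{k}\prod_{l=1}^{k-j+1} |x_l|^{c_l},
$$
where the last step invokes the weight constraint \eqref{eqn:total_weight_sum_to_k}, namely $\sum_l l c_l = k$. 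Hence every summand of $\gB_{k,j}$ equals $(-1)^k$ times a non-negative real number, which is zero precisely when some $x_l$ with $c_l > 0$ vanishes.

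Finally I would sum over all admissible tuples: since each term carries the common sign $(-1)^k$, we obtain $\gB_{k,j} = (-1)^k R$ for some $R \geq 0$, so $\gB_{k,j} \geq 0$ when $k$ is even and $\gB_{k,j} \leq 0$ when $k$ is odd. This holds uniformly over $j = 1, \ldots, k$, because $j$ enters only through the constraint \eqref{eqn:sum_to_j} on the index set of the sum and never affects the sign of any individual term. I do not anticipate a genuine obstacle here; the argument is essentially bookkeeping. The only points warranting care are the handling of vanishing $x_l$ (accommodated by permitting $R = 0$, consistent with the non-strict inequalities claimed) and the precise use of \eqref{eqn:total_weight_sum_to_k} to collapse $(-1)^{\sum_l l c_l}$ to $(-1)^k$, which is exactly where the parity of $k$ — rather than of $j$ — emerges.
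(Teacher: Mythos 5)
Your proposal is correct and takes essentially the same route as the paper's proof: both fix an admissible tuple, observe that the multinomial coefficient and factorials are positive, and deduce the sign of each summand from the weight constraint $\sum_l l\,c_l = k$. Your factorisation $x_l = (-1)^l |x_l|$, which collapses the product of signs directly to $(-1)^k$, is a cleaner execution of the parity bookkeeping that the paper carries out by a case analysis on whether each $l \times c_l$ is even or odd, but the underlying argument is identical.
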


\begin{proof}
    From Definition \ref{def:Bell_polynomial}, each summand in $\gB_{k,j}$ is,
    \begin{equation}
            \frac{k!}{c_1!c_2 !\cdots c_{k-j+1}!}\prod_{l=1}^{k-j+1}\left(\frac{x_l}{l!}\right)^{c_l}.\label{eqn:crucial_product}
    \end{equation}
    The term \eqref{eqn:crucial_product} is non-negative if $k$ is even and non-positive if $k$ is odd. To see this, observe that the number $l\times c_l$ is even if and only if $c_l$ is even or $l$ is even (hence $x_l \geq 0$). 
    It follows that, if $l \times c_l$ is even, then $(x_{l}/l!)^{c_l} \geq 0$ since either $x_{l} \geq 0$ or its exponent is even. As for the only remaining case, observe that the number $l\times c_l$ is odd if and only if $c_l$ is odd and $l$ is odd (hence $x_l \leq 0$). Hence, if $l\times c_l$ is odd, then $(x_{l}/l!)^{c_l} \leq 0$ since odd powers of a non-positive value are also non-positive. 

    Now, from \eqref{eqn:total_weight_sum_to_k}, we see that $\sum_{l=1}^{k-j+1} (l\times c_l) = k$. If the integer $k$ is even, then the sum in \eqref{eqn:total_weight_sum_to_k} must consist of any number of terms where $l\times c_l$ is even, and an even number of terms where $l\times c_l$ is odd. Therefore, the product \eqref{eqn:crucial_product} must be non-negative if $k$ is even because it will be a product of non-negative terms (since $(x_l/l!)^{c_l} \geq 0$ when $l\times c_l$ is even) and an even number of non-positive terms (since $(x_l/l!)^{c_l} \leq 0$ when $l\times c_l$ is odd). On the other hand, if $k$ is odd, then the sum in \eqref{eqn:total_weight_sum_to_k} consists of any number of terms where $l\times c_l$ is even, and an odd number of non-positive terms where $l\times c_l$ is odd. This implies the product \eqref{eqn:crucial_product} is a product of some non-negative terms as well as, crucially, an odd number of non-positive terms (since $(x_l/l!)^{c_l} \leq 0$ if $l \times c_l$ is odd). This implies that, if $k$ is odd, \eqref{eqn:crucial_product} is non-positive. 
    
    Finally, returning to the definition of $\gB_{k,j}$ in \eqref{eqn:Bell_polynomial}, we see that \eqref{eqn:Bell_polynomial} is a sum of \eqref{eqn:crucial_product}. Since only the value of $k$, not $j$, determines whether \eqref{eqn:crucial_product} is non-negative or non-positive, it follows that, for any given $k$, \eqref{eqn:Bell_polynomial} is a sum of only non-negative terms if $k$ is even and a sum of only non-positive terms if $k$ is odd. Therefore, if $(-1)^l x_l \geq 0$ for all $l = 1, ..., k-j+1$, then, for all $j = 1, ..., k$,  $\gB_{k,j} \geq 0$ if $k$ is even and $\gB_{k,j} \leq 0$ if $k$ is odd, as required.
\end{proof}

\section{Proofs of results in the main text}\label{sec:proofs}

\begin{proof}[Proof of Proposition \ref{prop:phi_as_copula_generator}]
    By Definition \ref{def:phi_divergence}, $\phi(1) = 0$, and $\phi(x)$ is a convex function for all $x \geq 0$ (strictly convex at $x = 1$). As for being strictly decreasing over $x\in [0, 1]$, the convexity of $\phi$, and the properties $\phi'(1) = 0$ and $\phi''(1) > 0$, together imply that $\phi(x)$ is minimised uniquely at $x = 1$ and is strictly decreasing over $x \in [0, 1]$. Therefore, the proposition follows.
\end{proof}

\begin{proof}[Proof of Proposition \ref{prop:unique_solution}]
    First, when $t = 0$, \eqref{eqn:zeros_of_pseudopolynomial} becomes $0 = x^{\lambda + 1} - (\lambda + 1)x + \lambda$ for $\lambda \neq -1, 0$. Clearly, $x = 1$ is a solution. To see that it is the \textit{unique} solution in $[0, 1]$, define $g_\lambda(x) = x^{\lambda + 1} - (\lambda + 1)x + \lambda$. The first derivative with respect to $x$ is $g_\lambda'(x) = (\lambda + 1)(x^\lambda - 1)$. For all $\lambda \neq -1, 0$, $g_\lambda'(1) = 0$. For $x\in[0,1)$, $g_\lambda'(x) < 0$ if either $\lambda > 0$ or $\lambda < -1$; if $-1 < \lambda < 0$, then $g_\lambda'(x) > 0$. This means $g_\lambda(x)$ is strictly monotone decreasing over $x\in[0,1)$ for $\lambda > 0$ and $\lambda < -1$, and it is strictly monotone increasing over $x \in [0,1)$ for $0 < \lambda < -1$. Since $x = 1$ satisfies $g_\lambda(1) = 0$, and since $g_\lambda(x)$ is either strictly decreasing or strictly increasing over $x\in[0,1)$, it follows that $x = 1$ is the only value of $x \in [0,1]$ that satisfies \eqref{eqn:zeros_of_pseudopolynomial} when $t = 0$.   

    Now let $t\in (0, \phi_\lambda(0))$ be fixed. Define the function $h_\lambda(x) = x^{\lambda+1}-(\lambda + 1)x + \lambda - \lambda(\lambda + 1)t$ for $x \in [0, 1]$ and $\lambda \neq -1, 0$. The derivative of $h_\lambda(x)$ with respect to $x$ is $h'_\lambda(x) = (\lambda + 1)\{x^\lambda - 1\}$, and $h'_\lambda(1)=0$ for all $\lambda \neq -1, 0$. With a slight abuse of notation, write $h_{a < \lambda <b}(x)$ and $h'_{a < \lambda < b}(x)$ to make statements about the values of $h_\lambda(x)$ and its derivative over an interval of $\lambda$ values, namely $a < \lambda < b$. Now consider the following three cases:
    \begin{enumerate}
        \item When $-\infty < \lambda < -1$ and $t \in (0, \infty)$, note that $h_{-\infty < \lambda < -1}(x)$ is a strictly monotone decreasing function over $x\in [0,1)$ since $h'_{-\infty < \lambda < -1}(x) < 0$ for all $x\in [0, 1)$. Further, $\lim_{x \downarrow 0} h_{-\infty < \lambda < -1}(x) = \infty$ and $h_{-\infty < \lambda < -1}(1) = -\lambda(\lambda + 1)t$, which is negative since, here, $\lambda(\lambda+1) > 0$ and $t \in (0, \infty)$. Being strictly decreasing over $x\in[0,1)$ with opposite signs at $x = 0$ and $x = 1$, the Intermediate Value Theorem (IVT) implies that $h_{-\infty < \lambda < -1}(x_0) = 0$ for a unique $x_0$ in the interval $[0, 1)$. 
        
        \item When $-1 < \lambda < 0$  and $t \in (0,1/(\lambda +1))$, note that $h_{-1 < \lambda < 0}(0) < 0$ and $h_{-1 < \lambda < 0}(1) > 0$ (i.e, the function has opposite signs at $x = 0$ and $x = 1$), while the derivative $h'_{-1 < \lambda < 0}(x) > 0$ for all $x \in [0, 1)$, meaning that $h_{-1 < \lambda < 0}(x)$ is a strictly increasing function of $x \in [0, 1)$. Together with the IVT, these facts imply that there exists a unique value $x_0\in [0,1)$ such that $h_{-1 < \lambda < 0}(x_0) = 0$. 
        
        \item When $0 < \lambda <\infty$ and $t \in (0,1/(\lambda + 1))$, we have $h_{0 < \lambda < \infty}(0) > 0$ and $h_{0 < \lambda < \infty}(1) < 0$ (i.e., the function has opposite signs at $x = 0$ and $x = 1$). The derivative of $h_{0 < \lambda < \infty}(x)$ is $h_{0 < \lambda < \infty}'(x) < 0$ for all $x \in [0, 1)$, so $h(x)$ is a strictly decreasing function of $x \in [0,1)$. Together with the IVT, these facts imply that there exists a unique value $x_0 \in [0, 1)$ for which $h_{0 < \lambda < \infty}(x_0) = 0$. 
    \end{enumerate}
    Considering all cases together, the proposition follows. 
\end{proof}

\begin{proof}[Proof of Theorem \ref{thm:zero_set}]
    Recall from \eqref{eqn:phi_lambda_zero} that $\phi_\lambda(0) = \infty$ for $\lambda \leq -1$ and $\phi_\lambda(0) = 1/(\lambda + 1)$ for $\lambda > -1$. Since the zero curve of the Archimedean copula $C_\lambda(u_1, u_2) \equiv C(u_1,u_2; \phi_\lambda)$ is traced by $(u_1,u_2)\in [0,1]^2$ that satisfy $\phi_\lambda(u_1) + \phi_\lambda(u_2) = \phi_\lambda(0)$, and since $\phi_\lambda(u) < \infty$ for all $u \in (0, 1]$, it is obvious that only elements of the set $\{(u_1, 0), (0, u_2): u_1, u_2 \in [0, 1]\}$ satisfy $\phi_\lambda(u_1) + \phi_\lambda(u_2)=\phi_\lambda(0)$ when $\lambda \leq -1$. This shows the PD copulas have a zero set consisting of only the lines $\{(u_1, 0), (0, u_2): u_1, u_2 \in [0,1]\}$ for $\lambda \leq -1$. On the other hand, when $\lambda > -1$, substitute $\phi_\lambda(x) = (\lambda(\lambda + 1))^{-1}(x^{\lambda + 1} - x + \lambda(1 - x))$ into $\phi_\lambda(u_1) + \phi_\lambda(u_2)=\phi_\lambda(0)$ to obtain the following equation for the zero curve:
    $$
    \frac{u_1^{\lambda + 1} - u_1 + u_2^{\lambda + 1} - u_2}{\lambda} + 2 - u_1 - u_2 = 1.
    $$
    The solutions are non-trivial, and this proves the second part of the theorem. The final part of the theorem follows from the fact that,
    $$
    \lim_{\lambda \to \infty} \frac{u_1^{\lambda + 1} - u_1 + u_2^{\lambda + 1} - u_2}{\lambda} + 2 - u_1 - u_2= 2 - u_1 - u_2 = 1,
    $$
    since $u_1, u_2 \in [0, 1]$. Rearranging this equation yields $u_1 + u_2 = 1$, which defines a straight line from $(1, 0)$ to $(0, 1)$ in the unit square. Hence the zero set is the triangle defined by the vertices $(0, 0)$, $(0, 1)$, and $(1, 0)$. Now, the theorem follows.
\end{proof}

\begin{proof}[Proof of Theorem \ref{thm:absolutely_continuous}]
    From \citet[Thm 4.3.3]{Nelsen2006}, it is sufficient to analyse the limit $-\{\lim_{s\downarrow 0} \phi_\lambda(s)/\phi_\lambda'(s)\}$. If this limit is zero, the copula is absolutely continuous. Otherwise, it has a singular component supported on the zero curve, with C-measure (i.e., probability mass) given by the value of the limit. We address this problem in five cases, namely when $\lambda < -1$, $\lambda = -1$, $-1 < \lambda < 0$, $\lambda = 0$, and $\lambda > 0$. When $\lambda < -1$, the limit,
        $$
        -\lim_{s\downarrow 0} \left\{\frac{\phi_\lambda(s)}{\phi_\lambda'(s)}\right\} = -\frac{1}{\lambda+1}\left\{\lim_{s\downarrow 0} \frac{s^{\lambda+1} - s + \lambda(1-s)}{s^\lambda-1}\right\},
        $$
    can be evaluated by two applications of L'H\^opital's rule. Hence, we have,
        $$
        -\left\{\lim_{s\downarrow 0} \frac{\phi_\lambda(s)}{\phi_\lambda'(s)}\right\} = -\frac{1}{(\lambda-1)}\left\{\lim_{s\downarrow 0} s\right\} = 0.
        $$
    When $\lambda = -1$, the limit, 
    $$
    -\left\{\lim_{s \downarrow 0} \frac{\phi_{-1}(s)}{\phi_{-1}'(s)}\right\} = -\left\{\lim_{s \downarrow 0} \frac{s - 1 - \log(s)}{1 - s^{-1}}\right\}, 
    $$
    can be evaluated after one application of L'H\^opital's rule, which then reveals,
    $$
    - \left\{\lim_{s \downarrow 0} \frac{\phi_{-1}(s)}{\phi_{-1}'(s)}\right\} = - \left\{\lim_{s \downarrow 0} \frac{1-s^{-1}}{s^{-2}}\right\} = -\left\{\lim_{s\downarrow 0} s^2 - s\right\} = 0.
    $$
    When $-1 < \lambda < 0$, the limit is,
        $$
        -\left\{\lim_{s\downarrow 0} \frac{\phi_\lambda(s)}{\phi_\lambda'(s)}\right\} = -\frac{1}{\lambda+1}\left\{\lim_{s\downarrow 0} \frac{s^{\lambda+1} - s + \lambda(1-s)}{s^\lambda-1}\right\} = 0,
        $$
        by inspection.
    When $\lambda = 0$, the limit is,
    $$
        -\left\{\lim_{s\downarrow 0} \frac{\phi_0(s)}{\phi_0'(s)}\right\} = -\left\{\lim_{s \downarrow 0} \frac{1 - s + s \log(s)}{\log(s)}\right\} = -\left\{\lim_{s \downarrow 0}\frac{1-s}{\log(s)} + \lim_{s\downarrow 0} s\right\} = 0.
    $$
    When $\lambda > 0$, we can evaluate the limit by inspection. That is,
        $$
        -\left\{\lim_{s\downarrow 0} \frac{\phi_\lambda(s)}{\phi_\lambda'(s)}\right\} = -\frac{1}{\lambda+1}\left\{\lim_{s\downarrow 0} \frac{s^{\lambda+1} - s + \lambda(1-s)}{s^\lambda-1}\right\} = \frac{\lambda}{\lambda + 1},
        $$
    which is always in the interval $(0, 1)$.
    
    Combining the analysis from all five cases reveals that,
    $$
    -\left\{\lim_{s\downarrow 0} \frac{\phi_\lambda(s)}{\phi_\lambda'(s)}\right\} = \begin{cases}
      0  &\lambda \leq 0,\\
      \frac{\lambda}{\lambda+1}  & \lambda > 0.
    \end{cases}
    $$
    Theorem 1 of \citet{Genest1986} then implies that the power-divergence copulas are absolutely continuous for $\lambda \leq 0$ and have a singular component supported on the zero curve with C-measure $\lambda/(\lambda+1)$ for $\lambda > 0$. Hence the theorem follows.
\end{proof}

\begin{proof}[Proof of Theorem \ref{thm:ordering}]
    The PD copula has the generator $\phi_\lambda(x)$ in  \eqref{eqn:CR_phi_function} with $\lambda \in (-\infty,\infty)$, which is continuously differentiable over $x \in (0, 1)$. Then, per \citet[Cor. 4.4.6]{Nelsen2006}, it suffices to show that, for $-\infty < \lambda_1 < \lambda_2 < \infty$, $g(x; \lambda_1, \lambda_2)=\phi_{\lambda_2}'(x)/\phi_{\lambda_1}'(x)$ is a non-decreasing function over $x \in (0, 1)$. This involves verifying that the derivative of $g(x; \lambda_1, \lambda_2)$ is non-negative in several cases. 

    Begin by noting that, for all $\lambda_1 < \lambda_2$ and $x \in (0, 1)$,
    $$
    g'(x; \lambda_1, \lambda_2) = \frac{\phi_{\lambda_2}''(x)\phi_{\lambda_1}'(x) - \phi_{\lambda_1}''(x)\phi_{\lambda_2}'(x)}{[\phi_{\lambda_1}'(x)]^2},
    $$
    by the quotient rule. Then also note that, for given $\lambda \in (-\infty, \infty)$ and $x \geq 0$ (hence also for $x \in (0,1)$), the function $\phi_\lambda(x)$ is non-negative, and its first two derivatives with respect to $x$ are given by,
    $$
    \phi_\lambda'(x) = \begin{cases}
        (x^\lambda - 1)/\lambda & \lambda \neq -1, 0,\\
        \log(x) & \lambda = 0,\\
        1 - x^{-1} & \lambda = -1,
    \end{cases}~~~\text{and}~~~    \phi_\lambda''(x) = \begin{cases}
        x^{\lambda - 1}& \lambda \neq -1, 0,\\
        x^{-1} & \lambda = 0,\\
        x^{-2} & \lambda = -1.\\
    \end{cases}
    $$
    These facts are needed in the sequel, where we examine the derivative of $g(x; \lambda_1, \lambda_2) = \phi_{\lambda_2}'(x)/\phi_{\lambda_1}'(x)$ for various combinations of $\lambda_1$ and $\lambda_2$.  

    If $\lambda_1 = -1$ and $\lambda_2 = 0$, we have,
    $$
    g'(x; -1, 0) = \frac{x - 1 - \log(x)}{(x-1)^2} = \frac{\phi_{-1}(x)}{(x-1)^2}.
    $$
    The derivative is always non-negative for $x \in (0,1)$ because the numerator is non-negative for $x \geq 0$ by definition, and the denominator is positive for $x \in (0, 1)$. Hence, $g(x; -1, 0)$ is non-decreasing over $x \in (0, 1)$.  

    If $\lambda_1 = -1$ and $\lambda_2 > -1$, we have to deal with two cases, namely $\lambda_2$ = 0 and $\lambda_2 \neq 0$. We have already dealt with the case where $\lambda_2 = 0$ above. Then, when $\lambda_2 \neq 0$, 
    $$
    g'(x; -1, \lambda_2) = \frac{x^{\lambda_2+1} - (\lambda_2^{-1}+1)x^{\lambda_2} + \lambda_2^{-1}}{(x-1)^2} = \frac{\lambda_2^{-1}(\lambda_2^{-1} + 1)\phi_{\lambda_2^{-1}}(x^{\lambda_2})}{(x-1)^2}.
    $$
    The denominator is obviously positive for $x \in (0, 1)$. The numerator is non-negative because $\phi_{\lambda}(x) \geq 0$ for all $\lambda \in (-\infty, \infty)$ and $x \geq 0$ by definition, and $\lambda_2^{-1}(1+\lambda_2^{-1}) = \lambda_2^{-2}(\lambda_2 + 1) > 0$ since $\lambda_2 > -1$. Hence, the derivative $g'(x; -1, \lambda_2)$ is non-negative for $x \in (0, 1)$. Therefore, $g(x; -1,\lambda_2)$ is non-decreasing over $x \in (0, 1)$.

    If $\lambda_1 = 0$ and $\lambda_2 > 0$, we have,
    $$
    g'(x; 0, \lambda_2) = \frac{(\lambda_2x)^{-1}(x^{\lambda_2}\log(x^{\lambda_2}) - x^{\lambda_2} + 1)}{\log(x)^2} = \frac{(\lambda_2x)^{-1}\phi_0(x^{\lambda_2})}{\log(x)^2}.
    $$
    The numerator is non-negative since $\lambda_2x>0$ and $\phi_0(x^{\lambda_2}) \geq 0$ for $x\in (0,1)$ and $\lambda_2 > 0$. The denominator is positive since $\log(x)^2 > 0$ for $x\in(0,1)$. Therefore, $g'(x; 0, \lambda_2) \geq 0$ for $x\in(0,1)$, which implies that $g(x; 0, \lambda_2)$ is a non-decreasing function over $x \in (0, 1)$.

    If $\lambda_1 < -1$ and $\lambda_2 = -1$, we have,
    $$
    g'(x; \lambda_1, -1) = -\frac{\lambda_1^2x^{-2}(x^{\lambda_1 + 1} - (\lambda_1^{-1} + 1)x^{\lambda_1}+\lambda_1^{-1})}{(x^{\lambda_1}-1)^2} = -\frac{(1+\lambda_1)x^{-2}\phi_{\lambda_1^{-1}}(x^{\lambda_1})}{(x^{\lambda_1}-1)^2}.
    $$
    The denominator is obviously positive for $x\in(0,1)$ and $\lambda_1 < -1$. The term $\phi_{\lambda_1^{-1}}(x^{\lambda_1})$ in the numerator is non-negative for all $\lambda_1 < -1$ and $x\geq 0$ by definition, and $-(1+\lambda_1)x^{-2} > 0$ for $x\in(0,1)$ and $\lambda_1 < -1$. It then follows that $g'(x; \lambda_1, -1) \geq 0$ for $x\in(0,1)$. This then shows that $g(x; \lambda_1, -1)$ is non-decreasing over $x\in(0,1)$.

    If $\lambda_1 < 0$ and $\lambda_2 = 0$, there are two cases. We have already dealt with the case where $\lambda_1 = -1$ above. Otherwise, for $\lambda_1 \neq -1$, we have,
    $$
    g'(x;\lambda_1; 0) = -\frac{\lambda_1x^{-1}(1-x^{\lambda_1} +x^{\lambda_1} \log(x^{\lambda_1}))}{(x^{\lambda_1} - 1)^2} = -\frac{\lambda_1x^{-1}\phi_0(x^{\lambda_1})}{(x^{\lambda_1}-1)^2}.
    $$
    The denominator is obviously positive for $x \in (0, 1)$ and $\lambda_1 < 0$. The numerator is non-negative since $\phi_0(x^{\lambda_1}) \geq 0$ for all $\lambda_1 < 0$ and $x \geq 0$ by definition, and $-\lambda_{1} x^{-1} > 0$ for $x\in(0,1)$ and $\lambda_1 < 0$. It then follows that $g'(x; \lambda_1, 0)$ is non-negative. Therefore, $g(x; \lambda_1, 0)$ is non-decreasing over $x\in(0,1)$.

    If $-\infty < \lambda_1 < \lambda_2 < \infty$ and $\lambda_1, \lambda_2 \neq -1, 0$, write $\lambda_2 = \lambda_1 + \kappa$ for some positive constant $\kappa > 0$. Then, we have,
    \begin{align*}
        g'(x; \lambda_1, \lambda_1+\kappa) &= -\frac{\lambda_1x^{\lambda_1-1}(x^\kappa -(1 - (\lambda_1+\kappa)^{-1}\lambda_1)x^{\lambda_1 + \kappa} - (\lambda_1 + \kappa)^{-1}\lambda_1)}{(x^{\lambda_1}-1)^2}\\ &= \frac{\lambda_1x^{\lambda_1 - 1}(1-(\lambda_1 + \kappa)^{-1}\lambda_1)(\lambda_1 + \kappa)^{-1}\lambda_1\phi_{-(\lambda_1 + \kappa)^{-1}\lambda_1}(x^{\lambda_1 + \kappa})}{(x^{\lambda_1} - 1)^2}
        \end{align*}
    In terms of $\lambda_1$ and $\lambda_2$, this can be written as,
    \begin{align*}
       g'(x; \lambda_1, \lambda_2) &= \frac{\lambda_1x^{\lambda_1 - 1}[(1-\lambda_2^{-1}\lambda_1)\lambda_2^{-1}\lambda_1]\phi_{-\lambda_2^{-1}\lambda_1}(x^{\lambda_2})}{(x^{\lambda_1} - 1)^2}.
    \end{align*}
    It is obvious that the denominator is positive for $x\in(0,1)$ and $\lambda_1 \neq -1, 0$. As for the numerator, $x^{\lambda_1 - 1}$ is always non-negative for $x\in(0,1)$ and $\lambda_1 \neq -1, 0$. The term $\phi_{-\lambda_2^{-1}\lambda_1}(x^{\lambda_2}) \geq 0$ for $x \geq 0$ for all $\lambda_1, \lambda_2 \neq -1, 0$ by definition. The remaining terms in the numerator can be expressed as,
    $$
    \lambda_1 \left(1 - \frac{\lambda_1}{\lambda_2}\right) \left(\frac{\lambda_1}{\lambda_2}\right) = \left(\lambda_2 - \lambda_1\right) \left(\frac{\lambda_1}{\lambda_2}\right)^2. 
    $$
    Since $\lambda_2 > \lambda_1$ by definition, this term is always positive for all $\lambda_1, \lambda_2 \neq -1, 0$. Hence, we have shown that all terms in the numerator and denominator are either non-negative or positive, so it follows that $g'(x; \lambda_1, \lambda_2)$ is also non-negative for all $\lambda_1, \lambda_2 \neq -1, 0$.

    Considering all cases together, we see that $g(x; \lambda_1, \lambda_2)$ is a non-decreasing function of $x \in (0,1)$ for all $-\infty < \lambda_1 < \lambda_2 < \infty$, and the theorem follows.  
\end{proof}

\begin{proof}[Proof of Proposition \ref{prop:FHLB_and_FHUB}]
    For part (i), from Theorem 4.4.7 in \citet{Nelsen2006}, it is sufficient to show that 
    $$
    \lim_{\lambda \to \infty} \frac{\phi_\lambda(s)}{\phi'_\lambda(t)} = s - 1, ~s, t \in (0, 1).
    $$
    With appropriate substitutions and an application of L'H\^opital's rule, we obtain, 
    $$
    \lim_{\lambda \to \infty} \frac{s^{\lambda + 1} - s + \lambda(1-s)}{(\lambda + 1)(t^\lambda - 1)} = \lim_{\lambda \to \infty} \frac{s^{\lambda + 1}\log(s) + 1 - s}{t^\lambda - 1 + (\lambda + 1)t^\lambda\log(t)} = s-1,
    $$
    where recall $s, t \in (0, 1)$. Hence part (i) of the proposition follows.

    For part (ii), from Theorem 4.4.8 in \citet{Nelsen2006}, it is sufficient to show that 
    $$
    \lim_{\lambda \to -\infty} \frac{\phi_\lambda(s)}{\phi'_\lambda(s)} = 0, ~s \in (0, 1).
    $$
    With appropriate substitutions and two applications of L'H\^opital's rule, we obtain,
    $$
    \lim_{\lambda \to -\infty} \frac{s^{\lambda + 1} - s + \lambda(1-s)}{(\lambda + 1)(s^\lambda - 1)} = \lim_{\lambda \to -\infty} \frac{s^{\lambda + 1}\log(s)^2}{(\lambda + 1)s^\lambda\log(s)^2 + 2s^\lambda \log(s)},~s\in (0,1).
    $$
    The limit on the right-hand side (RHS) is $0$ as $\lambda \to -\infty$. Hence part (ii) of the proposition follows. 
\end{proof}

\begin{proof}[Proof of Proposition \ref{prop:Kendalls_tau_lambda}]
    Let $-\infty < \lambda_1 < \lambda_2 < \infty$ be real-valued parameters. Kendall's tau obeys the ordering of the copula family \citep[e.g.,][Def 5.1.7]{Nelsen2006}. Due to Theorem \ref{thm:ordering}, $C_{\lambda_1}(u_1, u_2) \geq C_{\lambda_2}(u_1, u_2)$ for all $u_1, u_2 \in [0, 1]$. Therefore, recalling $\tau(\lambda)$ in \eqref{eqn:Kendalls_tau}, we have $\tau(\lambda_1) \geq \tau(\lambda_2)$, and $\tau(\lambda)$ must be a non-increasing function of $\lambda \in (-\infty, \infty)$.
\end{proof}

\begin{proof}[Proof of Theorem \ref{thm:tail_dependence}]
For part (i), it is obvious that the lower-tail dependence coefficient is zero for $\lambda > -1$ because, due to Theorem \ref{thm:zero_set}, the copula assigns zero probability to any values of $\U \equiv (U_1, U_2)^\top$ below the zero curve; hence $\lim_{u \downarrow 0} \Pr(U_1 \leq u\mid U_2 \leq u) = 0$.  

When $\lambda \leq -1$, we apply the results of \citet{Charpentier2009} to calculate the lower tail-dependence coefficient. In what follows, we write $\lambda = - \gamma$, where $\gamma \geq 1$, for ease of exposition. Now, we must calculate,
$$
\nu_0(-\gamma) = - \left\{\lim_{s \downarrow 0} \frac{s\phi'_{-\gamma}(s)}{\phi_{-\gamma}(s)}\right\},~\gamma \geq 1.
$$
First, consider the case where $\gamma = 1$ (i.e., $\lambda = -1$). Here, 
$$
\nu_0(-1) = -\left\{\lim_{s\downarrow 0} \frac{s-1}{s - 1 - \log(s)}\right\} = 0,
$$
by inspection. Now for the $\gamma > 1$ case (i.e., $\lambda > -1$), define,
$$
\nu_0(-\gamma) = - (1-\gamma) \left\{\lim_{s\downarrow 0} \frac{s^{1-\gamma} - s}{s^{1-\gamma}-s-\gamma(1-s)}\right\},~\gamma > 1.
$$
After two applications of L'H\^opital's rule, this can be evaluated as,
$$
\nu_0(-\gamma) = -(1 - \gamma)\left\{\lim_{s\downarrow 0}\frac{(1-\gamma)\gamma s^{-(\gamma+1)}}{(1-\gamma)\gamma s^{-(\gamma+1)}}\right\} = -(1-\gamma), ~\gamma>1.
$$
Then, combining both cases, we obtain,
$$
\nu_0(-\gamma) = \begin{cases}
    -(1 - \gamma) & \gamma \neq 1,\\
    0 & \gamma = 1. 
\end{cases}
$$
Then, from Theorem 3.1 of \citet{Charpentier2009}, if $\nu_0(-\gamma) = 0$, the lower-tail dependence coefficient is zero. If $\nu_0 > 0$, the lower-tail dependence coefficient is $T_L(-\gamma) = 2^{-1/\nu_0(-\gamma)} = 2^{1/(1-\gamma)}$. Therefore, in terms of $\lambda$, we can express the lower tail-dependence coefficient of the PD copulas as,
$$
T_L(\lambda) = \begin{cases}
   2^{1/(\lambda+1)} & \lambda \neq -1,\\
   0 & \lambda \geq -1,
\end{cases}
$$
as required.

For part (ii), from \citet{Charpentier2009}, the upper tail-dependence properties of the PD copula can be discovered by calculating,
    $$
    \nu_1(\lambda) = - \left\{\lim_{s\downarrow 0} \frac{s\phi_\lambda'(1-s)}{\phi_\lambda(1-s)}\right\}.
    $$
    We must consider all $\lambda \in (-\infty, \infty)$, so we calculate this quantity for three cases ($\lambda = -1$, $\lambda = 0$, and $\lambda \neq -1, 0$). First, for $\lambda = -1$, 
    $$
    \nu_1(-1) = \lim_{s\downarrow0} \frac{s[1-(1-s)^{-1}]}{s + \log(1-s)}.
    $$
    This evaluates to $\nu_1(-1) = 2$ after two applications of L'H\^opital's rule. Second, for $\lambda = 0$, we have,
    $$
    \nu_1(0) = -\left\{\lim_{s \downarrow 0} \frac{s\log(1-s)}{s + (1-s)\log(1-s)}\right\}.
    $$
    This evaluates to $\nu_1(0)=2$ with two applications of L'H\^opital's rule. Finally, when $\lambda \neq -1, 0$, we have,
    $$
    \nu_1(\lambda) = -(\lambda+1)\left\{\lim_{s \downarrow 0} \frac{s[(1-s)^\lambda - 1]}{(1-s)^{\lambda + 1} - (\lambda+1)(1-s) + \lambda}\right\},~\lambda \neq -1, 0.
    $$
    Two applications of L'H\^opital's rule gives,
    $$
    \nu_1(\lambda) = -(\lambda+1)\left\{\lim_{s \downarrow 0} \frac{\lambda(\lambda-1)s(1-s)^{\lambda-2}-2\lambda(1-s)^{\lambda - 1}}{\lambda(\lambda+1)(1-s)^{\lambda-1}}\right\}=2.
    $$
    Finally, from Theorem 4.1 of \citet{Charpentier2009}, since $\nu_1(\lambda) > 0$, the upper tail-dependence coefficient can be expressed as a function of $\nu_1(\lambda)$, namely $T_U(\lambda) = 2 - 2^{1/\nu_1(\lambda)}$. For all $\lambda \in (-\infty,\infty)$, we have seen that $\nu_1(\lambda) = 2$. Therefore, for $C_\lambda(u_1, u_2)$, $T_U(\lambda) = 2 - \sqrt{2}$ for all $\lambda \in (-\infty,\infty)$, as required. 
\end{proof}

\begin{proof}[Proof of Lemma \ref{lem:no_derivative_at_phi0}]
    Let $\lambda > 0$. Recall that,
    \begin{equation}
    \phi_\lambda^{[-1]}(t) = \begin{cases}
        \phi_\lambda^{-1}(t) & 0 \leq t < 1/(\lambda + 1),\\
        0 & 1/(\lambda + 1) \leq t < \infty,
    \end{cases}\label{eqn:appendix_pseudoinverse}
    \end{equation}
    where $\phi^{-1}_\lambda(t)$ is given by the appropriate solution of \eqref{eqn:zeros_of_pseudopolynomial}. The function \eqref{eqn:appendix_pseudoinverse} is continuous over $t \in [0,\infty)$.

    Recall that $\phi_\lambda(x)$ in \eqref{eqn:CR_phi_function} with  $\lambda > 0$ has at least one derivative over $x \in (0, \infty)$, which is given by $\phi_\lambda'(x)$ in \eqref{eqn:CR_derivative}. Therefore, the Inverse Function Theorem (IFT) implies that $(\phi_\lambda^{-1})'(t)$ exists over $t \in (0, 1/(\lambda + 1))$ and is equal to, 
    $$
    (\phi_\lambda^{-1})'(t) = \frac{1}{\phi_\lambda'(\phi^{-1}_{\lambda}(t))} = \frac{\lambda}{\phi_\lambda^{-1}(t)^\lambda - 1}.
    $$ 
    Then, over $t \in (0, 1/(\lambda + 1))$, we have $(\phi_\lambda^{[-1]})'(t) = (\phi_\lambda^{-1})'(t)$. Similarly, over $t \in (1/(\lambda + 1), \infty)$, we have $(\phi_\lambda^{[-1]})'(t) = 0$. 
    
    Now consider the one-sided limit of $(\phi_\lambda^{[-1]})'(t)$ as $t \to 1/(\lambda+1)$ from below, denoted as $t \uparrow 1/(\lambda + 1)$. This is, 
    $$
    \lim_{t\uparrow 1/(\lambda + 1)}(\phi_\lambda^{[-1]})'(t) = \lim_{t\uparrow 1/(\lambda + 1)}(\phi_\lambda^{-1})'(t) = \lim_{t\uparrow 1/(\lambda+1)} \frac{\lambda}{\phi_\lambda^{-1}(t)^\lambda - 1} = -\lambda.
    $$
    Also take the other one-sided limit as $t$ approaches $1/(\lambda + 1)$ from above (i.e., $t \downarrow 1/(\lambda + 1)$): That is,
    $$
    \lim_{t\downarrow 1/(\lambda + 1)}(\phi_\lambda^{[-1]})'(t) = 0.
    $$
    The limit of the left-derivative and right-derivative do not agree at $t = 1/(\lambda + 1)$, which implies the derivative of \eqref{eqn:appendix_pseudoinverse} does not exist there. 
\end{proof}

\begin{proof}[Proof of Theorem \ref{thm:no_pd_copulas_d_geq_3_lambda_above_0}]
    The existence of the derivative of $\phi^{[-1]}_\lambda(t)$ at all $t \in (0, \infty)$ is a necessary condition for $\phi^{[-1]}_\lambda$ to be $d$-monotone for any $d \geq 3$, which is in turn  sufficient and necessary for $\phi_{\lambda}^{[-1]}(\phi_\lambda(u_1) + \cdots + \phi_\lambda(u_d))$ to be a valid copula \citep{McNeil2009}. Lemma \ref{lem:no_derivative_at_phi0} establishes that $\phi_\lambda^{[-1]}(t)$ lacks a derivative at $t = 1/(\lambda + 1)$ when $\lambda > 0$, so it cannot even be $3$-monotone. Hence the theorem follows.
\end{proof}

\begin{proof}[Proof of Lemma \ref{lem:derivative_exists}]
    Let $\lambda = 0$. The pseudoinverse $\phi_{0}^{[-1]}(t)$ is given by \eqref{eqn:pseudoinverse_0} for $\lambda = 0$. Then, when $t \in (0, 1)$, $(\phi_0^{[-1]})'(t) = (\phi_0^{-1})'(t)$; otherwise, when $t \in (1, \infty)$, $(\phi_0^{[-1]})'(t) = 0$. The derivative $(\phi_0^{-1})'(t)$ exists and, from \eqref{eqn:pseudoinverse_0}, Mathematica \citep{Mathematica} gives the following result: For $t \in (0, 1)$, 
    \begin{equation}
        (\phi_{0}^{-1})'(t) = \frac{1}{1 + \gW_{-1}((t-1)/\exp\{1\})},\label{eqn:first_deriv_zero}
    \end{equation}
    where recall that $\gW_{-1}$ is the lower branch of the Lambert W function. Then, by the properties of the lower branch of the Lambert W function, the limit of $(\phi_0^{[-1]})'(t)$ as $t \uparrow 1$ is,
    $$
    \lim_{t\uparrow 1} (\phi_0^{[-1]})'(t) = \lim_{t\uparrow 1} (\phi_0^{-1})'(t) = \lim_{t\uparrow 1} \frac{1}{1 + \gW_{-1}((t-1)/\exp\{1\})} = 0. 
    $$
    The other one-sided limit as $t \downarrow 1$ agrees, since $\lim_{t\downarrow 1} (\phi_0^{[-1]})'(t) = 0$ by inspection. Therefore, $(\phi_0^{[-1]})'(t)$ exists at $t = \phi_0(0) = 1$. 
   
    Now let $-1 < \lambda < 0$, and write $\lambda = -\gamma$ with $0 < \gamma < 1$. The pseudoinverse $\phi_{-\gamma}^{[-1]}(t)$ is given by \eqref{eqn:appendix_pseudoinverse} for $-1 < \lambda < 0$ (i.e., $0 < \gamma < 1$). For $t \in (1/(1-\gamma), \infty)$, $(\phi^{[-1]}_{-\gamma})'(t) = 0$. For $t \in (0, 1/(1- \gamma))$, $(\phi^{[-1]}_{-\gamma})'(t) = (\phi^{-1}_{-\gamma})'(t)$. Since $\phi_{-\gamma}(x)$ has infinitely many derivatives on $x \in (0, \infty)$, the IFT implies that $(\phi_{-\gamma}^{-1})'(t)$ exists for $t \in (0, 1/(1-\gamma))$ and is given by,
    \begin{equation}
        (\phi^{-1}_{-\gamma})'(t) =\frac{1}{\phi_{-\gamma}'(\phi_{-\gamma}^{-1}(t))}= -\frac{\gamma}{\phi_{-\gamma}^{-1}(t)^{-\gamma} - 1} = -\frac{\gamma \phi_{-\gamma}^{-1}(t)^\gamma}{1 - \phi_{-\gamma}^{-1}(t)^\gamma}.\label{eqn:first_deriv}
    \end{equation}
    Now, evaluate the one-sided limit of $(\phi^{[-1]}_{-\gamma})'(t)$ as $t \uparrow 1/(1-\gamma)$: This is,
    $$
    \lim_{t\uparrow 1/(1-\gamma)} (\phi^{[-1]}_{-\gamma})'(t) = \lim_{t\uparrow 1/(1-\gamma)} (\phi^{-1}_{-\gamma})'(t) = -\left\{\lim_{t\uparrow 1/(1-\gamma)} \frac{\gamma \phi_{-\gamma}^{-1}(t)^\gamma}{1 - \phi_{-\gamma}^{-1}(t)^\gamma}\right\} = 0, 
    $$
    since $0 \leq \phi_{-\gamma}^{-1}(t) \leq 1$ and $0 < \gamma < 1$. The other one-sided limit is $\lim_{t\downarrow 1/(1-\gamma)}(\phi^{[-1]}_{-\gamma})'(t) = 0$. These two limits agree, so $(\phi^{[-1]}_{-\gamma})'(t)$ exists at $t = 1/(1-\gamma)$ (and at all $t \in (0, \infty)$. 

    Combining the two cases above ($\lambda = 0$ and $-1 < \lambda < 0$) gives the result.
\end{proof}

\begin{proof}[Proof of Lemma \ref{lem:fails_to_be_convex}]
    Let $\lambda = 0$. Recall that $(\phi_0^{-1})'(t)$ for $t \in (0, 1)$ is given by \eqref{eqn:first_deriv_zero}. Now, for the second derivative, Mathematica gives,
    \begin{align}
        (\phi_{0}^{-1})^{''}(t) &= -\frac{\gW_{-1}((t-1)/\exp\{1\})}{(t-1)(1 + \gW_{-1}((t-1)/\exp\{1\}))^3} \nonumber\\
        &= \frac{\gW_{-1}((t-1)/\exp\{1\})}{(1-t)(1 + \gW_{-1}((t-1)/\exp\{1\}))^3},\label{eqn:second_derivative_zero}
    \end{align}
    where recall $t \in (0, 1)$. We see that \eqref{eqn:second_derivative_zero} is positive over all $t \in (0, 1)$ since $(1 - t) > 0$ and $-\infty < \gW_{-1}(s) < -1$ for all $s \in (-\exp\{-1\}, 0)$. Since \eqref{eqn:second_derivative_zero} is positive over its domain, the first derivative of $-(\phi_{0}^{-1})'(t)$, which is simply $-1$ times \eqref{eqn:second_derivative_zero}, is negative over its domain, demonstrating that $-(\phi_{0}^{-1})'(t)$ is non-increasing, as required. As for convexity, simply plotting $-(\phi_0^{-1})'(t)$ over $t \in (0, 1)$ reveals that this function is not convex on $t \in (0, 1)$; therefore, $-(\phi_{0}^{[-1]})'(t)$ cannot be convex on $t \in [0, \infty)$.
    
    Now let $-1 < \lambda < 0$, and write $\lambda = -\gamma$ for $0 < \gamma < 1$. Then recall that $(\phi_{-\gamma}^{-1})'(t)$ is given by \eqref{eqn:first_deriv}. On $t \in (0, 1/(1-\gamma))$, the second and third derivatives exist and are given by,
    \begin{align}
        (\phi_{-\gamma}^{-1})^{''}(t) &= \frac{\gamma^3\phi_{-\gamma}^{-1}(t)^{2\gamma - 1}}{(1-\phi_{-\gamma}^{-1}(t)^\gamma)^3}.\label{eqn:second_derivative}\\        
        (\phi_{-\gamma}^{-1})^{(3)}(t) &= -\left\{\frac{\gamma^4 \phi_{-\gamma}^{-1}(t)^{3\gamma-2}((1+\gamma)\phi_{-\gamma}^{-1}(t)^{\gamma} + 2\gamma - 1)}{(1-\phi_{-\gamma}^{-1}(t)^{\gamma})^5}\right\}\label{eqn:third_derivative}
    \end{align}
    Observe that \eqref{eqn:second_derivative} is non-negative for all $t\in(0, \infty)$ since $0 \leq \phi_{-\gamma}^{-1}(t) \leq 1$. Then, $-(\phi^{-1}_{-\gamma})'(t)$ must have non-positive first derivative given by $-1$ times \eqref{eqn:second_derivative}, so $-(\phi^{-1}_{-\gamma})'(t)$ is non-increasing, as required, for $0 < \gamma < 1$ (equivalently, $-1 < \lambda < 0$). As for convexity, test it by using the second derivative of $-(\phi^{-1}_{-\gamma})'(t)$, which is simply $-1$ times \eqref{eqn:third_derivative}. Therefore, the function $-(\phi^{-1}_{-\gamma})'(t)$ is convex on $t\in(0,1/(1-\gamma))$ if and only if,
    $$
    \frac{\gamma^4 \phi_{-\gamma}^{-1}(t)^{3\gamma-2}((1+\gamma)\phi_{-\gamma}^{-1}(t)^{\gamma} + 2\gamma - 1)}{(1-\phi_{-\gamma}^{-1}(t)^{\gamma})^5} \geq 0,
    $$
    for all $t \in (0, 1/(1-\gamma))$. Note that the denominator on the left-hand side (LHS) of the inequality is always non-negative since $0 \leq \phi_{-\gamma}^{-1}(t) \leq 1$. But, in the numerator, the term $((1+\gamma)\phi_{-\gamma}^{-1}(t)^{\gamma} + 2\gamma - 1)$ will be negative for some $t \in (0, 1/(1-\gamma))$ if $\gamma < 0.5$ (i.e., $\lambda > -0.5$). If $\gamma \geq 0.5$ (i.e., $\lambda \leq -0.5$), then non-negativity is guaranteed. Therefore, $-(\phi^{-1}_{-\gamma})'(t)$ is convex when $0.5 \leq \gamma < 1$ (equivalently, $-1 < \lambda \leq -0.5$), but not when $0 < \gamma < 0.5$ (equivalently, $-0.5 < \lambda < 0$). It then follows that $-(\phi^{[-1]}_\lambda)'(t)$ is convex when $-1 < \lambda \leq -0.5$ but not when $-0.5 < \lambda < 0$.

    Combining the two cases gives the result.
\end{proof}

\begin{proof}[Proof of Theorem \ref{thm:three_monotone}]
    This follows immediately from Lemmas \ref{lem:derivative_exists} and \ref{lem:fails_to_be_convex} and Proposition 2.3 of \citet{McNeil2009}. 
\end{proof}

\begin{proof}[Proof of Lemma \ref{lem:pd_copulas_completely_monotone}]
    We first show for $\lambda = -1$. To show that $\phi_{-1}^{-1}$ in \eqref{eqn:pseudoinverse_neg1} is completely monotone over $t\in (0,\infty)$, rewrite \eqref{eqn:pseudoinverse_neg1} as $\phi_{-1}^{-1}(t) = \gV(\exp\{-{(t+1)}\})$, where $\gV(r) \equiv -\gW_0(-r)$ for $r \in (0, \exp\{-1\})$. It is obvious that $\exp\{-(t+1)\}$ is completely monotone on $t \in (0, \infty)$ since the function is positive, and the chain rule shows its $k$-th derivative is $(-1)^k\exp\{-(t+1)\}$ for $k = 1, 2, ...$. As for $\gV(r)$, this function is absolutely monotone on $r\in(0, \exp\{-1\})$, as shown below. The function itself is non-negative, from the properties of the principal branch of the Lambert W function. The first derivative is $\gV'(r) = \gW_0'(-r)$. It has previously been shown \citep{Kalugin2012} that $\gW_0'(r)$ is completely monotone on $r\in (-\exp\{-1\}, \infty)$, so it is also completely monotone on the subinterval $r \in (0, \exp\{-1\})$. Therefore, $\gW_0'(-r)$ is absolutely monotone on $r \in (0,\exp\{-1\})$ \citep[p. 145]{Widder1946}. Therefore, being non-negative and having absolutely monotone first derivative, it follows that $\gV(r)$ is absolutely monotone on $r\in (0, \exp\{-1\})$. Now Theorem 2b of \citet[p. 145]{Widder1946} implies $\phi_{-1}^{-1}(t) = -\gW_0(-\exp\{-(t+1)\})$ is completely monotone on $t \in (0,\infty)$ since this is a composite of an absolutely monotone function and a completely monotone function.

 Next we show for $\lambda = -2$. Let $z = 1$, and set $\lambda = -(z+1) = -2$. We first establish that $\phi^{-1}_{-2}$ has derivatives of all orders over $t \in (0, \infty)$. Then we show that the $k$-th derivative of $\phi^{-1}_{-2}$ satisfies $(-1)^{k}(\phi^{-1}_{-2})^{(k)}(t) \geq 0$ for all $k = 1, 2, ...$. 
    
    For the existence of the derivatives of all orders, when $\lambda = -2$, \eqref{eqn:CR_phi_function} has the form, $\phi_{-2}(x) = 0.5(x^{-1} + x - 2)$. The derivatives are $\phi^{(1)}_{-2}(x) = -0.5(x^{-2} - 1)$, $\phi^{(2)}_{-2}(x) = x^{-3}$, and $
    \phi^{(k)}_{-2}(x) = (-1)^{k} \times \left(\prod_{j=3}^kj\right) \times x^{-(k+1)}$ for $k = 3, 4, ...$. Hence, $\phi_{-2}(x)$ is infinitely differentiable over $x\in(0, \infty)$ for $\lambda = -2$. Then, the IFT implies $\phi_{-2}^{-1}(t)$ has derivatives of all orders over $t\in (0,\infty)$ since $\phi_{-2}(x)$ is continuous and has derivatives of all orders for all $x\in (0, \infty)$. 
    
    Next, for the signs of the derivatives, rearrange \eqref{eqn:zeros_of_pseudopolynomial} into the form,
    \begin{equation}
    0 = x(t)^{2} - 2[t + 1]x(t)+1,   \label{eqn:appendix_polynomial} 
    \end{equation}
    where $x(t) \in [0,1]$ is the function of $t \in [0,\infty)$ that solves this equation. Hence, $x(t)$ is precisely $\phi_{-2}^{-1}(t)$. 
    
    Now let $k \geq 1$ be a positive integer, and differentiate both sides of \eqref{eqn:appendix_polynomial} $(k+1)$ times with respect to $t$. On the RHS of \eqref{eqn:appendix_polynomial}, the $(k+1)$-th derivative of $x(t)^{2}$ is given by Fa\`{a} di Bruno's formula, and the $(k+1)$-th derivative of $2[t + 1]x(t)$ can be obtained by applying the general Leibniz rule. This yields,
    \begin{align*}
    0 = 2x(t)\gB_{k+1,1}(t) + 2\gB_{k+1,2}(t) - 2(k+1)x^{(k)}(t) - 2[t+1]x^{(k+1)}(t),
    \end{align*}
    where $\gB_{k+1,1}(t) \equiv B_{k+1,1}(x^{(1)}(t), ..., x^{(k+1)}(t))$ and $\gB_{k+1,2}(t) \equiv B_{k+1,2}(x^{(1)}(t), ..., x^{(k)}(t))$ are incomplete exponential Bell polynomials per Definition \ref{def:Bell_polynomial} \citep[e.g.,][pp. 133-137]{Comtet1974}. (Here, we have written the Bell polynomials as functions of $t$ since the arguments to the Bell polynomial are functions of $t$.) Importantly, the definition in \eqref{eqn:Bell_polynomial} and conditions in \eqref{eqn:sum_to_j} and \eqref{eqn:total_weight_sum_to_k} imply that $\gB_{k+1,1}(t) = x^{(k+1)}(t)$. Hence, after some algebra, we obtain,
    \begin{align}
        0 = \left(x(t) - [t+1]\right)x^{(k+1)}(t) + \gB_{k+1,2}(t) - (k+1)x^{(k)}(t).\label{eqn:appendix_after_simplication}
    \end{align}
    By rearranging \eqref{eqn:appendix_after_simplication} to isolate $x^{(k+1)}(t)$ on the LHS, we write,
    \begin{align}
        x^{(k+1)}(t) = -\left(\frac{(k+1)x^{(k)}(t) - \gB_{k+1,2}(t)}{t+1 - x(t)}\right).\label{eqn:k_plus_oneth_derivative}
    \end{align}
    
    From this point, the proof is completed by induction on $k$. The inductive hypothesis is that, for given positive integer $k$, the derivatives $x^{(j)}(t)$ all satisfy $(-1)^jx^{(j)}(t) \geq 0$ for $j = 1, ..., k$. With this inductive hypothesis in mind, examine the terms on the RHS of \eqref{eqn:k_plus_oneth_derivative}. First, there is a prefactor of $-1$. Second, notice that the denominator $t + 1 - x(t) \geq 0$ since $t \in [0, \infty)$ and $x(t) \in [0, 1]$. Third, in the numerator, we have $(k+1)x^{(k)}(t) - \gB_{k+1,2}(t)$. See that $(k+1) > 0$ by definition. By the inductive hypothesis, $x^{(k)}(t)$ has sign $(-1)^{k}$ (i.e., positive if $k$ is even and negative if $k$ is odd) and, also by Lemma \ref{lem:Bell_polynomials}, we have that $\gB_{k+1,2}(t) \geq 0$ if $k+1$ is even ($k$ is odd) and $\gB_{k+1,2}(t) \leq 0$ if $k+1$ is odd ($k$ is even). Therefore, $(k+1)x^{(k)}(t)$ and $\gB_{k+1,2}(t)$ have opposite signs, and $(k+1)x^{(k)}(t) - \gB_{k+1,2}(t)$ has the same sign as $x^{(k)}(t)$, namely $(-1)^k$. Hence $x^{(k+1)}(t)$ has the opposite sign to $x^{(k)}(t)$ for any given $k = 1, 2, ...$. In other words, if $(-1)^k x^{(k)}(t) \geq 0$, then $(-1)^{k+1}x^{(k+1)}(t) \geq 0$.  

    The base case for the induction is provided by explicitly calculating $x^{(1)}(t)$ and verifying that $(-1)x^{(1)}(t) \geq 0$. It is easier to verify this by working directly from $\phi^{-1}_{-2}(t)$ than by using \eqref{eqn:k_plus_oneth_derivative}. By the IFT, $x(t) \equiv \phi^{-1}_{-2}(t)$ has first derivative,
    $$
    x^{(1)}(t) = \frac{1}{\phi_{-2}'(x(t))}=-\left\{\frac{2}{x(t)^{-2}-1}\right\} = -2\left\{\frac{x(t)^{2}}{1-x(t)^{2}}\right\},
    $$
    which is negative since $0 \leq x(t)^{2} \leq 1$, which in turn indicates that $(-1)x^{(1)}(t) \geq 0$. This provides the base case. Therefore, by induction, it follows that $(-1)^kx^{(k)}(t) \geq 0$ for all $k = 1, 2,  ...$. Hence $x(t)\equiv\phi_{-2}^{-1}(t)$ is completely monotone on $t\in[0,\infty)$. 
\end{proof}

\begin{proof}[Proof of Theorem \ref{thm:completely_monotone_cases}]
    This follows immediately from Lemma \ref{lem:pd_copulas_completely_monotone} in combination with the result of \citet{Kimberling1974}; see also Theorem 4.6.2 in \citet{Nelsen2006}.
\end{proof}

\end{document}